\theoremstyle{plain}
\newtheorem{thm}{Theorem}[section]
\newtheorem{lm}[thm]{Lemma}
\newtheorem{prop}[thm]{Proposition}
\newtheorem{cor}[thm]{Corollary}
\newtheorem{conj}{Conjecture}
\newtheorem*{thmintro}{Theorem}
\theoremstyle{definition}
\newtheorem{dfn}[thm]{Definition}
\newtheorem{rmk}[thm]{Remark}
\newtheorem{ex}[thm]{Example}
\def\dpar#1#2{\frac{\partial #1}{\partial #2}}
\def\Hil{\mathcal{H}\xspace}
\def\R{\mathbb{R}\xspace}
\def\C{\mathbb{C}\xspace}
\def\P{\mathbb{P}\xspace}
\def\S{\mathbb{S}\xspace}
\def\scal#1#2{\left\langle #1,#2\right\rangle}
\def\classe#1#2{\mathcal{C}^{#1}#2}
\DeclareMathOperator{\Tr}{Tr} 
\def\bigO#1{O\mathopen{}\left(#1\right)\mathclose{}}
\def\derLie{\mathcal{L}\xspace}
\newcommand\blfootnote[1]{
  \begingroup
  \renewcommand\thefootnote{}\footnote{#1}
  \addtocounter{footnote}{-1}
  \endgroup
}
\begin{document}

\title{Bounds for fidelity of semiclassical Lagrangian states in K\"ahler quantization}

\author{Yohann Le Floch}

\maketitle

\begin{abstract}
We define mixed states associated with submanifolds with probability densities in quantizable closed K\"ahler manifolds. Then, we address the problem of comparing two such states via their fidelity. Firstly, we estimate the sub-fidelity and super-fidelity of two such states, giving lower and upper bounds for their fidelity, when the underlying submanifolds are two Lagrangian submanifolds intersecting transversally at a finite number of points, in the semiclassical limit. Secondly, we investigate a family of examples on the sphere, for which we manage to obtain a better upper bound for the fidelity. We conclude by stating a conjecture regarding the fidelity in the general case.
\end{abstract}

\vspace{2cm}

\noindent
{\bf Yohann Le Floch} \\
Institut de Recherche Math\'ematique avanc\'ee,\\
UMR 7501, Universit\'e de Strasbourg et CNRS,\\
7 rue Ren\'e Descartes,\\
67000 Strasbourg, France.\\
{\em E-mail:} \texttt{ylefloch@unistra.fr}\\

\blfootnote{\emph{2010 Mathematics Subject Classification.} 53D50,81S10,81Q20,81P45.}

\blfootnote{\emph{Key words and phrases.} Fidelity, geometric quantization, semiclassical analysis.}

\section{Introduction}

\subsection{States in geometric quantization}

Let $(M,\omega,j)$ be a closed, connected K\"ahler manifold, equipped with a prequantum line bundle  $(L,\nabla)$. According to the geometric quantization procedure, due to Kostant and Souriau \cite{Kos,Sou}, we define, for any integer $k \geq 1$, the quantum state space as the Hilbert space $\Hil_k = H^{0}(M,L^{\otimes k})$ of holomorphic sections of $L^{\otimes k} \to M$\footnote{In the rest of the paper, we will write $L^k$ instead of $L^{\otimes k}$ to simplify notation.}; the semiclassical limit is $k \to +\infty$. The quantum observables are Berezin-Toeplitz operators, introduced by Berezin \cite{Ber}, whose microlocal analysis has been initiated by Boutet de Monvel and Guillemin \cite{BouGui}, and which have been studied by many authors during the last years (see for instance \cite{ChaBTO,MaMa,Schli} and references therein).

In this paper, we investigate the problem of quantizing a given submanifold $\Sigma$ of $M$, that is constructing a state concentrating on $\Sigma$ in the semiclassical limit (in a sense that we will precise later). This kind of construction has been achieved for a so-called \emph{Bohr-Sommerfeld} Lagrangian submanifold $\Sigma$, that is Lagrangian manifold with trivial holonomy with respect to the connection induced by $\nabla$ on $L^k$ (\cite{BorPauUri}, see also \cite{ChaBS}). The state obtained in this case is a pure state whose microsupport is contained in $\Sigma$. Such states are useful, for instance, to construct quasimodes for Berezin-Toeplitz operators.

Here we adopt a different point of view. We assume that $\Sigma$ is any submanifold, equipped with a smooth density $\sigma$ such that $\int_{\Sigma} \sigma = 1$. Then we construct a mixed state--or rather its density operator--$\rho_k(\Sigma,\sigma)$ associated with this data, by integrating the coherent states projectors along $\Sigma$ with respect to $\sigma$, see Definition \ref{dfn:state}. We prove that this state cannot be pure, and that it concentrates on $\Sigma$ in the semiclassical limit. Similar states, the so-called P-representable or classical quantum states, have been considered in the physics literature \cite{GirBrBr} and have been used recently to explore the links between symplectic displaceability and quantum dislocation \cite{ChaPol}; they are obtained by integrating the coherent projectors along $M$ against a Borel probability measure.

\subsection{Main results}

Given two submanifolds with probability densities $(\Sigma_1, \sigma_1)$ and $(\Sigma_2, \sigma_2)$, we would like to compare the two associated states $\rho_{k,1} = \rho_k(\Sigma_1,\sigma_1)$ and $\rho_{k,2} = \rho_k(\Sigma_2,\sigma_2)$. For the purpose of comparing two mixed states, one often uses the fidelity function \cite{Uhl,Jos}, defined as
\[ F\left(\rho_{k,1},\rho_{k,2}\right) = \Tr\left( \sqrt{\sqrt{\rho_{k,1}} \ \rho_{k,2} \ \sqrt{\rho_{k,1}} } \right)^2 \in [0,1].\]
Because it involves the square roots of the density operators, it is quite complicated to estimate in general. Nevertheless, Miszczak \emph{et al.} \cite{Mis} recently obtained lower and upper bounds for the fidelity function; they introduced two quantities $E(\rho_{k,1},\rho_{k,2})$ and $G(\rho_{k,1},\rho_{k,2})$, respectively called \emph{sub-fidelity} and \emph{super-fidelity}, easier to study, such that $E(\rho_{k,1},\rho_{k,2}) \leq F(\rho_{k,1},\rho_{k,2}) \leq G(\rho_{k,1},\rho_{k,2})$.

We will estimate these quantities, in the semiclassical limit, in the particular case where $\Sigma_1 = \Gamma_1$ and $\Sigma_2 = \Gamma_2$ are two Lagrangian submanifolds intersecting transversally at a finite number of points $m_1, \ldots, m_s$. Our main results can be summarized as follows.

\begin{thmintro} 
There exists some constants $C_i((\Gamma_1,\sigma_1),(\Gamma_2,\sigma_2)) > 0$, $i=1,2$,  depending on the geometry near the intersection points, such that the sub-fidelity satisfies
\[ E(\rho_{k,1},\rho_{k,2}) = \left(\frac{2\pi}{k}\right)^{n}  C_1((\Gamma_1,\sigma_1),(\Gamma_2,\sigma_2)) + \bigO{k^{-(n+1)}} \]
and the super-fidelity satisfies
\[ G(\rho_{k,1},\rho_{k,2}) = 1 - \left(\frac{2\pi}{k}\right)^{\frac{n}{2}}  C_2((\Gamma_1,\sigma_1),(\Gamma_2,\sigma_2)) +  \bigO{k^{-\min\left(n,\frac{n}{2} + 1\right)}}. \]
\end{thmintro}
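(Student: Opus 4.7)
Following Miszczak et al., the plan is first to write both functionals as
\[ E(\rho_{k,1},\rho_{k,2}) = \Tr(\rho_{k,1}\rho_{k,2}) + \sqrt{2\bigl[(\Tr(\rho_{k,1}\rho_{k,2}))^{2} - \Tr(\rho_{k,1}\rho_{k,2}\rho_{k,1}\rho_{k,2})\bigr]} \]
and
\[ G(\rho_{k,1},\rho_{k,2}) = \Tr(\rho_{k,1}\rho_{k,2}) + \sqrt{(1-\Tr(\rho_{k,1}^{2}))(1-\Tr(\rho_{k,2}^{2}))}, \]
so that the task reduces to asymptotics of the four scalars $\Tr(\rho_{k,1}\rho_{k,2})$, $\Tr(\rho_{k,i}^{2})$ for $i=1,2$, and $\Tr(\rho_{k,1}\rho_{k,2}\rho_{k,1}\rho_{k,2})$. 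Since $\rho_k(\Sigma,\sigma)=\int_\Sigma P_m\,d\sigma$ is a weighted integral of rank-one coherent state projectors, each of these traces unfolds into an iterated integral over a product of copies of $\Gamma_1$ and $\Gamma_2$ against a product of coherent state overlaps $\langle e_m,e_{m'}\rangle$.

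These overlaps admit the standard Bergman kernel asymptotic $\langle e_m,e_{m'}\rangle = e^{-k\phi(m,m')/2}(1+\bigO{k^{-1}})$, modulo a globally $\bigO{k^{-\infty}}$ remainder, with $\mathrm{Re}\,\phi$ non-negative, vanishing only on the diagonal, and with transverse Hessian equal to the K\"ahler metric. This puts each of the four integrals into Laplace-method normal form, and the geometric work is to identify the critical sets of the composite phases. For $\Tr(\rho_{k,1}\rho_{k,2})$ the phase has critical set $\Gamma_1\cap\Gamma_2=\{m_1,\ldots,m_s\}$ inside $\Gamma_1\times\Gamma_2$, and transversality makes each $m_i$ an isolated non-degenerate critical point contributing a term of order $(2\pi/k)^n$, with a local coefficient controlled by the symplectic angles between $T_{m_i}\Gamma_1$ and $T_{m_i}\Gamma_2$ and by $\sigma_1(m_i)$, $\sigma_2(m_i)$. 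For each $\Tr(\rho_{k,i}^{2})$ the critical set is the full $n$-dimensional diagonal of $\Gamma_i\times\Gamma_i$; the Gaussian integral in the transverse directions leaves an order $(2\pi/k)^{n/2}$ factor multiplied by an integral over $\Gamma_i$ of a quadratic expression in $\sigma_i$. For the quartic trace, viewed as an integral over $(\Gamma_1\times\Gamma_2)^2$, transversality forces all four arguments to cluster at a single $m_i$, yielding an order $(2\pi/k)^{2n}$ contribution per intersection point, while terms in which distinct arguments approach distinct $m_i$'s are $\bigO{k^{-\infty}}$.

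With these asymptotics in hand, expanding the two square roots is routine. For the super-fidelity, since $\Tr(\rho_{k,i}^{2})=\bigO{k^{-n/2}}$ and $\Tr(\rho_{k,1}\rho_{k,2})=\bigO{k^{-n}}$,
\[ G(\rho_{k,1},\rho_{k,2}) = 1 - \tfrac{1}{2}\bigl(\Tr(\rho_{k,1}^{2})+\Tr(\rho_{k,2}^{2})\bigr) + \bigO{k^{-n}}, \]
so the $(2\pi/k)^{n/2}$ term comes entirely from the diagonal integrals and $C_2$ is the half-sum of their leading constants; the remainder $\bigO{k^{-\min(n,n/2+1)}}$ then collects the subleading corrections to $\Tr(\rho_{k,i}^{2})$ and the quadratic error of the Taylor expansion of $\sqrt{1+x}$. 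For the sub-fidelity, both $\Tr(\rho_{k,1}\rho_{k,2})$ and the square root term contribute at order $(2\pi/k)^n$, and $C_1$ is the sum of their leading coefficients, with the announced remainder $\bigO{k^{-(n+1)}}$ inherited from the subleading terms in each trace.

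The main technical obstacle is the quartic trace, where one must carry out a $4n$-dimensional Laplace expansion around each intersection point while tracking the complex phase coming from the four overlaps and the full joint Hessian in coordinates adapted to the pair $(\Gamma_1,\Gamma_2)$. The delicate point is to verify that $(\Tr(\rho_{k,1}\rho_{k,2}))^{2}-\Tr(\rho_{k,1}\rho_{k,2}\rho_{k,1}\rho_{k,2})$ is genuinely of order $(2\pi/k)^{2n}$ --- that the two leading coefficients do not accidentally cancel --- so that the square root in $E$ really produces a $(2\pi/k)^n$ term rather than disappearing into the remainder, and to extract a workable formula for the resulting constant.
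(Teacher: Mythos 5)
Your decomposition and overall strategy coincide with the paper's: the same four traces, the same off-diagonal Bergman-kernel decay to localize near intersection points or near the diagonal, stationary phase in dimensions $2n$, $n$ (transverse to the diagonal of $\Gamma_i\times\Gamma_i$) and $4n$ respectively, and the same Taylor expansion of the square roots with the same bookkeeping of remainders (in particular your derivation of the exponent $\min(n,\tfrac n2+1)$ for the super-fidelity is exactly the paper's). The super-fidelity half of your argument is complete as written, modulo the purity computation which is Proposition \ref{prop:purity} with $d=n$.

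For the sub-fidelity, however, the step you flag as ``the delicate point'' is not a verification one can defer: it is the actual content of the proof, and your sketch stops exactly where the work begins. The paper resolves it by an explicit computation of the $4n\times 4n$ complex Hessian $H$ of the phase of the quartic trace at each intersection point $m_\nu$, in a basis adapted to orthonormal frames of $T_{m_\nu}\Gamma_1$ and $T_{m_\nu}\Gamma_2$. Writing $G_{p,q}=g_{m_\nu}(e_p,f_q)$ and $\Xi_{p,q}=\omega_{m_\nu}(e_p,f_q)$, one finds $\det(-iH)=\det(I_n-G^{\top}G)\det(I_n+\Xi^{\top}\Xi)$, and two linear-algebra lemmas on principal angles give $\det(I_n-G^{\top}G)=\prod_\ell\sin^2\theta_\ell$ and $\det(I_n+\Xi^{\top}\Xi)=\prod_\ell(1+\sin^2\theta_\ell)$. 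The second identity uses that $J(T_{m_\nu}\Gamma_1)=(T_{m_\nu}\Gamma_1)^{\perp}$, i.e.\ the Lagrangian hypothesis, to show $\Xi^{\top}\Xi=I_n-G^{\top}G$; this is precisely where your sketch's appeal to ``symplectic angles'' needs to be made rigorous, and it is the reason the paper restricts to Lagrangians for this term while the pairwise trace needs no such assumption. The payoff is the quantitative non-cancellation: per intersection point, $\Tr((\rho_{k,1}\rho_{k,2})^2)$ contributes $(2\pi/k)^{2n}(\sigma_1,\sigma_2)_{m_\nu}^2\prod_\ell\bigl(\sin\theta_\ell\sqrt{1+\sin^2\theta_\ell}\bigr)^{-1}$, which is strictly smaller than the diagonal term $(2\pi/k)^{2n}(\sigma_1,\sigma_2)_{m_\nu}^2\prod_\ell\sin^{-2}\theta_\ell$ of $\Tr(\rho_{k,1}\rho_{k,2})^2$ because $\sqrt{1+\sin^2\theta_\ell}>\sin\theta_\ell$. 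Without this explicit comparison you cannot conclude that the square-root term in $E$ is genuinely of order $k^{-n}$ (rather than the difference degenerating and the remainder worsening to $\bigO{k^{-n-1/2}}$), nor can you exhibit the constant $C_1$. Also note one technical subtlety you should not gloss over: for $\Tr(\rho_{k,1}\rho_{k,2})$ only $|S|^{2k}$ survives and the phase is real, but in the quartic trace the oscillatory part of the section $S$ does not cancel around the four-cycle, and it is exactly this imaginary contribution that produces the matrix $\Xi$ in the Hessian.
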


For instance, the constant in the sub-fidelity involves the principal angles between the two tangent spaces at the intersection points. We refer the reader to Theorems \ref{thm:subfid} and \ref{thm:superfid} for precise statements and explicit expressions for the constants involved in these estimates. Unfortunately, this result does not allow us to obtain an equivalent for the fidelity function when $k$ goes to infinity, as \emph{a priori} this fidelity could display any behaviour between these two ranges $\bigO{k^{-n}}$ and $\bigO{1}$. However, we will study a family of examples on the two-sphere, for which we prove that the fidelity is a $\bigO{k^{-1 + \varepsilon}}$ for every sufficiently small $\varepsilon > 0$ (Theorem \ref{thm:fid_sphere}); this result is non trivial and requires care and a fine analysis of the interactions near intersection points. We also perform some numerical computations regarding these examples.

\begin{rmk} We believe that our results extend without effort to the case where the quantum state space is the space of holomorphic sections of $L^k \otimes K \to M$ where $K$ is an auxiliary Hermitian holomorphic line bundle, for instance in the case where $K = \delta$ is a half-form bundle (which corresponds to the so-called metaplectic correction). These results should also extend to the case of the quantization of a closed symplectic but non necessarily K\"ahler manifold, using for instance the recipe introduced in \cite{Cha_symp}; the main ingredient, namely the decription of the asymptotics of the Bergman kernel, is still available, only more complicated to describe. We do not treat any of these two cases here for the sake of clarity.
\end{rmk}

\subsection{Structure of the article}

The first half of this manuscript is devoted to the definition of the state associated with a submanifold with density and the computation of the sub-fidelity and super-fidelity of such states in the Lagrangian case, in all generality. In Section \ref{sect:prelim}, we discuss the setting and introduce the notions and notation that will be needed to achieve this goal. In Section \ref{sect:def_states}, we explain how to obtain a state from a submanifold with density, and we study the first properties of such states. In particular, we compute their purity to show that they are always mixed for $k$ large enough. We prove our estimates for the sub-fidelity and the super-fidelity of two states associated with Lagrangian submanifolds intersecting transversally at a finite number of points in Section \ref{sect:sub_super}. 

The second half of the paper, corresponding to Sections \ref{sect:examples} and \ref{sect:numerics}, focuses on a family of examples on $\S^2$. A remarkable fact is that one can obtain much better estimates for the fidelity function itself, employing non trivial methods, that can however not be used as they are to study the general case, although some parts of the analysis may be useful to attack the latter. 

\section{Preliminaries and notation}
\label{sect:prelim}

\subsection{The setting: K{\"a}hler quantization}

Throughout the paper, $(M,\omega,j)$ will be a closed, connected K\"ahler manifold, of real dimension $\dim M = 2n$, such that the cohomology class of $(2\pi)^{-1} \omega$ is integral, and $(L,\nabla)$ will be a prequantum line bundle over $M$, that is a Hermitian holomorphic line bundle $L \to M$ whose Chern connection $\nabla$ has curvature $-i\omega$. Let $\mu_M = |\omega^n|/n!$ be the Liouville measure on $M$. For $k \geq 1$ integer, let $h_k$ be the Hermitian form induced on $L^k$, and consider the Hilbert space of holomorphic sections of $L^k \to M$:
\[ \Hil_k = H^0(M,L^k), \qquad \scal{\psi}{\phi}_k = \int_M h_k(\psi,\phi) \mu_M.  \]
Since $M$ is compact, $\Hil_k$ is finite-dimensional; more precisely, it is standard that
\begin{equation} \dim \Hil_k = \left( \frac{k}{2\pi} \right)^n \mathrm{vol}(M) + \bigO{k^{n-1}}. \label{eq:asymp_dim}  \end{equation}
Let $L^2(M,L^k)$ be the completion of $\classe{\infty}{(M,L^k)}$ with respect to $\scal{\cdot}{\cdot}_k$, and let $\Pi_k: L^2(M,L^k) \to \Hil_k$ be the orthogonal projector from $L^2(M,L^k)$ to the space of holomorphic sections of $L^k \to M$. The Berezin-Toeplitz operator associated with $f \in \classe{\infty}(M)$ is
\begin{equation} T_k(f) = \Pi_k f: \Hil_k \to \Hil_k, \label{eq:def_BTO} \end{equation}
where $f$ stands for the operator of multiplication by $f$. More generally, a Berezin-Toeplitz operator is any sequence of operators $(T_k:\Hil_k \to \Hil_k)_{k \geq 1}$ of the form $T_k = \Pi_k f(\cdot,k) + R_k$ where $f(\cdot,k)$ is a sequence of smooth functions with an asymptotic expansion of the form $ f(\cdot,k) = \sum_{\ell \geq 0} k^{-\ell} f_{\ell}$ for the $\classe{\infty}{}$ topology, and $\| R_k \| = \bigO{k^{-N}}$ for every $N \geq 1$. 

Let $p_1,p_2: M \times M \to M$ be the natural projections on the left and right factor. If $U \to M$, $V \to M$ are two line bundles over $M$, we define the line bundle (sometimes called external tensor product) $U \boxtimes V = p_1^*U \otimes p_2^*V \to M \times M$. The Schwartz kernel of an operator $S_k: \Hil_k \to \Hil_k$ is the unique section $S_k(\cdot,\cdot)$ of $L^k \boxtimes \bar{L}^k \to M \times M$ such that for every $\varphi \in \Hil_k$ and every $x \in M$,
\[ (S_k \varphi)(x) = \int_M  S_k(x,y) \cdot \varphi_k(y) \ d\mu_M(y), \]
where the dot corresponds to contraction with respect to $h_k$: for $\bar{u} \in \bar{L}^k_y$ and $v \in L^k_y$, $\bar{u} \cdot v = (h_k)_y(v,u)$. In particular, the Schwartz kernel of $\Pi_k$ is called the \emph{Bergman kernel}. 

In this context, Charles \cite{ChaBTO} has obtained, relying on \cite{BouGui}, a very precise description of the Bergman kernel in the semiclassical limit. For our purpose, we will only need part of it, namely that
\begin{equation} \Pi_k(x,y) = \left( \frac{k}{2 \pi} \right)^n S^k(x,y) \left( a_0(x,y) + \bigO{k^{-1}} \right) \label{eq:asympt_projector} \end{equation}
where $S \in \classe{\infty}{(M^2,L \boxtimes \overline{L})}$ satisfies $S(x,x) = 1$ and $|S(x,y)| <1$ whenever $x \neq y$ (among other properties, see \cite[Proposition 1]{ChaBTO}), $a_0 \in \classe{\infty}{(M^2, \R)}$ is such that $a_0(x,x) = 1$ and the remainder $\bigO{k^{-1}}$ is uniform in $(x,y) \in M^2$. Here $| \cdot |$ denotes the norm induced by $h$ on $L \boxtimes \overline{L}$, and for $x \in M$, we use $h_k$ to identify $L_x \otimes \bar{L}_x$ with $\C$.

\subsection{Generalities about fidelity}

As already explained, one useful tool to compare two states is the fidelity function, see for instance \cite{Uhl, Jos} or \cite[Chapter 9]{ChN}. Recall that the trace norm of a trace class operator $A$ acting on a Hilbert space $\Hil$ is $ \|A\|_{\mathrm{\Tr}} = \Tr(\sqrt{A^*A})$. Given two states $\rho, \eta$ on $\Hil$, that is positive semidefinite Hermitian operators on $\Hil$ of trace one, their fidelity is defined as\footnote{Note that some authors call fidelity the square root of this function, however we prefer to keep the square in order to simplify some of the computations.} 
\[ F(\rho,\eta) = \left\| \sqrt{\rho} \sqrt{\eta} \right\|^2_{\Tr} = \Tr\left( \sqrt{\sqrt{\rho} \ \eta \ \sqrt{\rho} } \right)^2. \]
Even though it is not obvious from this formula, fidelity is symmetric in its arguments. It measures how close the two states are in the following sense; $F(\rho,\eta)$ is a number comprised between $0$ and $1$, and $F(\rho,\eta) = 1$ if and only if $\rho = \eta$, while $F(\rho,\eta) = 0$ if and only if $\rho(\Hil)$ and $\eta(\Hil)$ are orthogonal. In the particular case where both states are pure, i.e. $\rho$ (respectively $\eta$) is the orthogonal projection on the line spanned by $\phi \in \Hil$ (respectively $\psi \in \Hil$), where $\phi$ and $\psi$ are unit vectors, one readily checks that $F(\phi,\psi) = |\scal{\phi}{\psi}|^2$. The fidelity function is interesting for further reasons, such as its invariance under conjugation of both arguments by a common unitary operator, its multiplicativity with respect to tensor products, or its joint concavity. It is, however, very hard to compute in general because it involves square roots of operators.

Consequently, some efforts have been made to give bounds for the fidelity function that would be more easily computable. The following remarkable bounds on the fidelity of states $\rho,\eta$ acting on a finite-dimensional Hilbert space have been obtained in \cite{Mis}: $ E(\rho,\eta) \leq F(\rho,\eta) \leq G(\rho,\eta)$ where the function $E$, called sub-fidelity, is defined as 
\begin{equation} E(\rho,\eta) = \Tr(\rho \eta) + \sqrt{2} \sqrt{\Tr(\rho \eta)^2 - \Tr((\rho \eta)^2)} \label{eq:sub_fid}\end{equation}
and the function $G$, called super-fidelity, is defined as 
\begin{equation} G(\rho,\eta) = \Tr(\rho \eta) + \sqrt{\left( 1 - \Tr(\rho^2) \right)\left( 1 - \Tr(\eta^2) \right)} \label{eq:super_fid}\end{equation}
It turns out that these two quantities keep some of the interesting properties of fidelity, and can be measured using physical experiments; furthermore they both coincide with fidelity when both states are pure. From a mathematical point of view, these quantities seem much more tractable than the fidelity function because they involve only traces of products and powers of operators.

\subsection{Principal angles}

The notion of principal angles (see for example \cite[Section $12.4.3$]{Gol}) will play a crucial part in our estimates. Let $V$ be a real vector space, endowed with an inner product $(\cdot | \cdot)$, and let $E, F$ be two subspaces of $V$ such that $\alpha = \dim E \geq \beta = \dim F \geq 1$. 
 
\begin{dfn} 
The \emph{principal angles} $0 \leq \theta_1 \leq \ldots \leq \theta_{\beta} \leq \frac{\pi}{2}$ between $E$ and $F$ are defined recursively by the formula $ \cos(\theta_{\ell}) = (u_{\ell}|v_{\ell}) := \max_{W_{\ell}} (u|v)$, where 
\[ W_{\ell} = \left\{ (u,v) \in E \times F \ | \ \|u\| = 1 = \|v\|, \quad \forall m \in \llbracket 1,\ell \rrbracket, \ (u|u_m) = 0 = (v|v_m) \right\}. \]
\end{dfn}

Note that $\theta_1 = 0$ if and only if $E \cap F \neq \{0\}$. We will need the two following properties of principal angles; the first one appears in the computation of $\Tr(\rho_{k,1} \rho_{k,2})$ (Theorem \ref{thm:trace}).

\begin{lm}
\label{lm:angle}
Let $V$ be a real vector space of dimension $2n$, $n \geq 1$, endowed with an inner product $(\cdot | \cdot)$, and let $E,F$ be two subspaces of $V$ of dimension $n$. Let $(e_{p})_{1 \leq p \leq n}$ (respectively $(f_{q})_{1 \leq q \leq n}$)  be any orthonormal basis of $E$ (respectively $F$). We introduce the $n \times n$ matrix $G$ with entries $G_{p,q} = (e_{p}|f_q)$; then the quantity $\det(I_n - G^{\top}G)$ does not depend on the choice of $(e_{p})_{1 \leq p \leq n}$ and $(f_{q})_{1 \leq q \leq n}$. Moreover, it satisfies
\[ \det\left(I_n - G^{\top}G\right) = \prod_{\ell = 1}^n \sin^2(\theta_{\ell}) \]
where $0 \leq \theta_1 \leq \ldots \leq \theta_n \leq \frac{\pi}{2}$ are the principal angles between $E$ and $F$.
\end{lm}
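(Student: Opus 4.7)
The plan is to first establish that $\det(I_n - G^{\top} G)$ is invariant under change of orthonormal basis, and then use the singular value decomposition to reduce the computation to a canonical diagonal form in which both the determinant and the principal angles can be read off directly.

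For the invariance: if $(e'_p)$ and $(f'_q)$ are other orthonormal bases of $E$ and $F$, they are related to $(e_p)$ and $(f_q)$ by orthogonal matrices $P, Q \in O(n)$, and the new Gram matrix is $G' = P^{\top} G Q$. A direct computation gives $I_n - (G')^{\top} G' = Q^{\top}(I_n - G^{\top} G)Q$, whose determinant equals $\det(I_n - G^{\top} G)$ since $|\det Q| = 1$.

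Next, I would apply the SVD to write $G = U \Sigma V^{\top}$ with $U, V \in O(n)$ and $\Sigma = \mathrm{diag}(\sigma_1, \ldots, \sigma_n)$, $1 \geq \sigma_1 \geq \ldots \geq \sigma_n \geq 0$ (the bound $\sigma_1 \leq 1$ reflects the fact that $G^{\top}$ is the matrix of the orthogonal projection $E \to F$, which is a contraction). Setting $\tilde{e}_p = \sum_m U_{m,p} e_m$, $\tilde{f}_q = \sum_m V_{m,q} f_m$ gives orthonormal bases in which the Gram matrix is exactly $\Sigma$. Writing $\sigma_\ell = \cos \vartheta_\ell$ with $\vartheta_\ell \in [0, \pi/2]$, the identity $\det(I_n - \Sigma^2) = \prod_\ell \sin^2 \vartheta_\ell$ is immediate, and by the invariance above it equals $\det(I_n - G^{\top} G)$.

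The key remaining step is to show that the $\vartheta_\ell$ coincide with the principal angles $\theta_\ell$ from the definition; this is where the main care is required. I would proceed by induction on $\ell$. For $\ell = 1$, given unit vectors $u = \sum_p a_p \tilde{e}_p \in E$ and $v = \sum_q b_q \tilde{f}_q \in F$, one has $(u|v) = \sum_p \sigma_p a_p b_p$, which is bounded by $\sigma_1$ via Cauchy-Schwarz, with equality attained at $u_1 = \tilde{e}_1$, $v_1 = \tilde{f}_1$; hence $\cos \theta_1 = \sigma_1$. The inductive step is analogous: the constraint that $u$ (resp. $v$) be orthogonal to $u_1, \ldots, u_{\ell-1}$ (resp. $v_1, \ldots, v_{\ell-1}$) amounts, thanks to the diagonal form of the Gram matrix, to restricting the summation to indices $p \geq \ell$, and the same estimate gives $\cos \theta_\ell = \sigma_\ell$. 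The main obstacle in the proof is precisely this identification of the SVD of $G$ with the geometric data of the principal angles; once it is in place, the determinant computation is immediate.
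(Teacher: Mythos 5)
Your proof is correct, but it follows a different route from the paper's. The paper keeps the invariance argument (essentially as you do, though it only spells out the change of basis in $E$), then rewrites $I_n - G^{\top}G$ as $A^{\top}A$, where $A$ is the cross-Gram matrix between an orthonormal basis of $E^{\perp}$ and the basis of $F$; it then \emph{cites} the fact that the eigenvalues of $A^{\top}A$ are the squared cosines of the principal angles between $E^{\perp}$ and $F$, and converts back to the angles between $E$ and $F$ via the complementarity relation $\zeta_{\ell} = \tfrac{\pi}{2} - \theta_{n-\ell}$ from the literature. You instead prove the singular-value characterization of principal angles from scratch: the SVD $G = U\Sigma V^{\top}$ produces adapted orthonormal bases in which the Gram matrix is diagonal, and your induction (the weighted Cauchy--Schwarz bound $\sum_p \sigma_p a_p b_p \le \sigma_{\ell}$ on the restricted index set) identifies $\sigma_{\ell} = \cos\theta_{\ell}$ directly, so $\det(I_n - G^{\top}G) = \prod_{\ell}(1-\cos^2\theta_{\ell})$ with no detour through $E^{\perp}$. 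What your approach buys is self-containedness and directness --- you do not need the two external facts the paper invokes, and you avoid the orthogonal complement entirely; what it costs is that you are re-proving the standard SVD characterization (which is precisely the content of the reference the paper cites), including the mild bookkeeping needed to see that the recursive definition of the angles is realized by the adapted bases $(\tilde e_{\ell},\tilde f_{\ell})$. One small phrasing quibble: the orthogonality constraints $(u|u_m)=0$ reduce to $a_m=0$ simply because $(\tilde e_p)_p$ is orthonormal in $E$, not ``thanks to the diagonal form of the Gram matrix''; the diagonal form is what you need for the evaluation of $(u|v)$, not for the constraints. This does not affect the validity of the argument.
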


\begin{proof}
Let $(\tilde{e}_{p})_{1 \leq p \leq n}$ be another orthonormal basis of $E$, and let $O = (O_{p,q})_{1 \leq p,q \leq n}$ be the matrix such that
\[ \forall p \in \llbracket 1, n \rrbracket, \qquad \tilde{e}_{p} = \sum_{r=1}^n O_{p,r} e_r. \] 
Let $\tilde{G}$ be the matrix with entries $\tilde{G}_{p,q} = (\tilde{e}_{p}|f_q)$; then $\tilde{G} = O G$ and
\[ \det(I_n - \tilde{G}^{\top} \tilde{G}) = \det(I_n - \tilde{G}^{\top} \tilde{O}^{\top} \tilde{O} \tilde{G}) = \det(I_n - G^{\top}G) \]
since $O$ is orthogonal. Now, observe that
\[ (G^{\top}G)_{p,q} = \sum_{r=1}^n (e_r|f_p)(e_r|f_q) = \left( \sum_{r=1}^n (e_r|f_p)e_r \Big|f_q \right) = (P f_p|f_q) \]
where $P$ is the orthogonal projector from $V$ to $E$. Consequently, $(I_n - G^{\top} G)_{p,q} = (Q f_p|f_q)$ with $Q$ the orthogonal projector from $V$ to $E^{\perp}$. Thus, if $(e_{n+1}, \ldots, e_{2n})$ is any orthonormal basis of $E^{\perp}$, then
\[ (I_n - G^{\top} G)_{p,q} = \sum_{r=1}^n (f_p|e_{n+r})(e_{n+r}|f_q); \]
this means that $I_n - G^{\top} G = A^{\top} A$ where $A$ is the matrix with entries given by $A_{p,q} = (e_{n+p}|f_q)$. But it is known that the eigenvalues of $A^{\top} A$ are $\cos^2(\zeta_1), \ldots, \cos^2(\zeta_n)$, where $\zeta_1 \leq \ldots \leq \zeta_{n}$ are the principal angles between $E^{\perp}$ and $F$, see for instance \cite{Sho}. Consequently, $\det( A^{\top} A) = \prod_{\ell = 1}^n \cos^2(\zeta_{\ell})$ and the result follows from the fact that for every $\ell \in \llbracket 1,n \rrbracket$, $\zeta_{\ell} = \tfrac{\pi}{2} - \theta_{n - \ell}$ \cite[Property $2.1$]{Zhu}.
\end{proof}

The second property will be used in the proof of Theorem \ref{thm:trace_square}.

\begin{lm}
\label{lm:det_symp}
Let $(V,\omega)$ be a real symplectic vector space of dimension $2n$, $n \geq 1$, endowed with a complex structure $J:V \to V$ which is compatible with $\omega$, and let $(\cdot|\cdot) = \omega(\cdot,J\cdot)$ be the associated inner product. Let $E,F$ be two complementary Lagrangian subspaces of $V$, and let $(e_{p})_{1 \leq p \leq n}$ (respectively $(f_{p})_{1 \leq p \leq n}$)  be any orthonormal basis of $E$ (respectively $F$). Let $\Xi$ be the $n \times n$ matrix with entries $\Xi_{p,q} = \omega(e_p,f_q)$; then the quantity $ \det\left(I_n +  \Xi^{\top} \Xi\right)$ does not depend on the choice of $(e_{p})_{1 \leq p \leq n}$ and $(f_{p})_{1 \leq p \leq n}$. Moreover, it satisfies
\[ \det\left(I_n + \Xi^{\top} \Xi\right) = \prod_{\ell = 1}^n \left(1 + \sin^2(\theta_{\ell})\right), \]
where $0 \leq \theta_1 \leq \ldots \leq \theta_n \leq \frac{\pi}{2}$ are the principal angles between $E$ and $F$.
\end{lm}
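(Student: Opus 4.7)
The plan is to reduce this statement to Lemma \ref{lm:angle} by using the compatibility between $J$ and $\omega$ together with the Lagrangian hypothesis on $E$. First I would rewrite the symplectic form in terms of the inner product: since $(\cdot|\cdot) = \omega(\cdot,J\cdot)$ and $J$ is orthogonal (as a consequence of compatibility), one has
\[ \Xi_{p,q} = \omega(e_p,f_q) = -\omega(J e_p, J f_q) \cdot \text{(sign from } J^2=-\mathrm{Id}\text{)} = -(Je_p|f_q). \]
More cleanly, $\omega(e_p,f_q) = -(Je_p|f_q)$, so if one sets $A_{p,q} = (Je_p|f_q)$ then $\Xi = -A$ and in particular $\Xi^{\top}\Xi = A^{\top}A$.

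Next I would exploit the Lagrangian assumption: because $E$ is Lagrangian and $J$ is compatible with $\omega$, the subspace $J(E)$ is the orthogonal complement $E^{\perp}$ with respect to $(\cdot|\cdot)$, and $(Je_1,\ldots,Je_n)$ is an orthonormal basis of $E^{\perp}$. Setting $e_{n+p} := Je_p$ for $p \in \llbracket 1,n \rrbracket$ places us exactly in the situation appearing at the end of the proof of Lemma \ref{lm:angle}, where it was shown that the matrix $A$ with entries $A_{p,q} = (e_{n+p}|f_q)$ satisfies $A^{\top}A = I_n - G^{\top}G$ with $G_{p,q} = (e_p|f_q)$. Therefore
\[ I_n + \Xi^{\top}\Xi = I_n + A^{\top}A = 2 I_n - G^{\top}G. \]
Lemma \ref{lm:angle} already tells us that the eigenvalues of $G^{\top}G$ are the $\cos^2(\theta_\ell)$ (equivalently, the eigenvalues of $A^{\top}A = I_n - G^{\top}G$ are the $\sin^2(\theta_\ell)$), so the eigenvalues of $I_n + \Xi^{\top}\Xi$ are $1 + \sin^2(\theta_\ell)$, yielding the claimed product formula.

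For the basis-independence statement, I would repeat the short argument of Lemma \ref{lm:angle}: a change of orthonormal bases $(e_p) \to (\tilde e_p) = O(e_p)$ and $(f_q) \to (\tilde f_q) = O'(f_q)$ transforms $\Xi$ into $\tilde\Xi = O\,\Xi\,O'^{\top}$, so $\tilde\Xi^{\top}\tilde\Xi = O'\,\Xi^{\top}\Xi\,O'^{\top}$ is orthogonally conjugate to $\Xi^{\top}\Xi$ and has the same determinant (and same spectrum). I do not expect a genuine obstacle here: the only step that requires care is correctly identifying $J(E)$ with $E^{\perp}$ and fixing the sign conventions so that the symplectic matrix $\Xi$ really does give rise, after multiplication by $J$, to the metric matrix appearing in Lemma \ref{lm:angle}; once this identification is made, the present lemma becomes a direct corollary of the previous one.
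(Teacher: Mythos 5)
Your proof is correct and follows essentially the same route as the paper: both arguments use $J(E)=E^{\perp}$ to identify $\Xi^{\top}\Xi$ with $I_n-G^{\top}G$ and then read off the eigenvalues $\sin^2(\theta_\ell)$ from Lemma \ref{lm:angle}. (Note only that compatibility gives $\omega(e_p,f_q)=\omega(Je_p,Jf_q)=(Je_p|f_q)$ with a plus sign, so in fact $\Xi=A$; this is harmless since you only use $\Xi^{\top}\Xi=A^{\top}A$.)
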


\begin{proof}
The first statement is similar to the first statement of Lemma \ref{lm:angle}. Now, let $G$ be the $n \times n$ matrix defined in the latter, that is the matrix with entries $G_{p,q} = (e_p|f_q)$. A straightforward computation shows that $(\Xi^{\top} \Xi)_{p,q} = (Qf_p|f_q)$, where $Q$ is the orthogonal projection from $V$ to $J(E)$. Since $E$ is Lagrangian, $J(E) = E^{\perp}$, so the previous result means that $\Xi^{\top} \Xi = I_n - G^{\top} G$, which implies (see the proof of Lemma \ref{lm:angle}) that the eigenvalues of the matrix $\Xi^{\top} \Xi$ are $\sin^2(\theta_1), \ldots, \sin^2(\theta_n)$, which yields the result.

\end{proof}

\section{The state associated with a submanifold with density}
\label{sect:def_states}

\subsection{Definition}

We will define the state associated with a submanifold with density by means of coherent states; let us recall how those are constructed in the setting of geometric quantization (here we adopt the convention used in \cite[Section 5]{ChaBTO}). Let $P \subset L$ be the set of elements $u \in L$ such that $h(u,u) = 1$, and let $\pi: P \to M$ denote the natural projection. Given $u \in P$, for every $k \geq 1$, there exists a unique vector $\xi_k^u$ in $\Hil_k$ such that 
\[ \forall \phi \in \Hil_k, \qquad \phi(\pi(u)) = \scal{\phi}{\xi_k^u}_{k} u^k. \]
The vector $\xi_k^u \in \Hil_k$ is called the \emph{coherent vector} at $u$. 

By the properties of coherent states stated in \cite[Section 5]{ChaBTO} and the description of $\Pi_k$ given in Equation (\ref{eq:asympt_projector}), we have that for every $u \in P$,
\begin{equation} \|\xi_k^u\|^2_{k} = \left(\frac{k}{2\pi}\right)^n + \bigO{k^{n-1}} \label{eq:asymp_rawnsley} \end{equation}
when $k$ goes to infinity, and the remainder is uniform in $u \in P$. In particular, there exists $k_0 \geq 1$ such that $\xi_k^u \neq 0$ whenever $k \geq k_0$. For $k \geq k_0$, we set $\xi_{k}^{u,\mathrm{norm}} = \xi_k^u / \|\xi_k^u \|_k$ (and later on we will always implicitly assume that $k \geq k_0$ to simplify notation). This also means that the class of $\xi_k^u$ in the projective space $\P(\Hil_k)$ is well-defined; this class only depends on $\pi(u)$ and is called the \emph{coherent state} at $x = \pi(u)$. Furthermore, the projection
\[ P_k^x: \Hil_k \to \Hil_k, \qquad \phi \mapsto \scal{\phi}{\xi_k^{u,\mathrm{norm}}}_k \xi_k^{u,\mathrm{norm}}  \]
is also only dependent on $x$, and is called the \emph{coherent projector} at $x$. 

Now, let $\Sigma \subset M$ be a closed, connected submanifold of dimension $d \geq 1$, equipped with a positive density $\sigma$ (as defined in \cite[Chapter 3.3]{BerGos}) such that $\int_{\Sigma} \sigma = 1$. Then we can obtain a mixed state by superposition of the coherent projectors over the points of $\Sigma$. 
\begin{dfn}
\label{dfn:state}
We define the state associated with $(\Sigma,\sigma)$ as
\begin{equation} \rho_k(\Sigma,\sigma) = \int_{\Sigma} P_k^x \ \sigma(x) \label{eq:Lag_state}\end{equation}
where $P_k^x$ is the coherent state projector at $x \in \Sigma$. 
\end{dfn}
Clearly, $\rho_k(\Sigma,\sigma)$ is a positive semidefinite Hermitian operator acting on $\Hil_k$, and 
\[ \Tr(\rho_k(\Sigma,\sigma)) = \int_{\Sigma} \Tr(P_k^x) \sigma(x) = \int_{\Sigma} \sigma = 1. \]
Therefore $\rho_k(\Sigma,\sigma)$ is indeed (the density operator of) a state.  

\begin{ex}

We compute an example in a simple (but non compact) case: $M = \R^2$ with its standard symplectic form and complex structure. It is well-known that the relevant quantum spaces are the Bargmann spaces \cite{Bar}
\[ \Hil_k := \left\{ f \psi^k | \ f: \C \to \C \text{ holomorphic}, \quad \int_{\C} |f(z)|^2 \exp(-k|z|^2) \ |dz \wedge d\bar{z}| < +\infty \right\} \]
where $\psi(z) = \exp\left(-\frac{1}{2}|z|^2\right)$, with orthonormal basis $\phi_{k,\ell}: z \to \sqrt{\frac{k^{\ell+1}}{2\pi \ell!}} \ z^{\ell} \psi^k(z)$ for $\ell \geq 0$. By a straightforward computation,
\[ \scal{P_k^z \phi_{k,\ell}}{\phi_{k,m}}_k = \sqrt{\frac{k^{\ell+m}}{\ell! m!}} z^{\ell} \bar{z}^m \exp(-k|z|^2). \] 
We consider $\Sigma = \S^1 = \{ \exp(it)| \ 0 \leq t \leq 2\pi \} \subset \C$ with density $\sigma = \frac{dt}{2 \pi}$, and compute the state $\rho_k(\S^1,\sigma)$ associated with this data. For $\ell,m \geq 0$,
\[ \scal{\rho_k(\S^1,\sigma)\phi_{k,\ell}}{\phi_{k,m}}_k = \int_0^{2\pi} \scal{P_k^{\exp(it)}\phi_{k,\ell}}{\phi_{k,m}}_k \frac{dt}{2\pi} = \sqrt{\frac{k^{\ell+m}}{\ell! m!}} \exp(-k) \int_0^{2\pi} \exp(i(\ell-m)t) \frac{dt}{2\pi}. \]
Hence for every $\ell \geq 0$,
\[ \rho_k(\S^1,\sigma) \phi_{k,\ell} = \frac{k^{\ell}\exp(-k)}{\ell!}   \phi_{k,\ell}. \]
In other words, this state is prepared according to a Poisson probability distribution of parameter $k$ with respect to the basis $(\phi_{k,\ell})_{\ell \geq 0}$.
\end{ex}

\subsection{Computation of the purity}

In order to see how far $\rho_k(\Sigma,\sigma)$ is from being pure, one can compute its purity $\Tr(\rho_k(\Sigma,\sigma)^2)$, which is equal to one for pure states and strictly smaller than one for mixed states.

\begin{prop}
\label{prop:purity}
Let $\mu_{g,\Sigma}$ be the Riemannian volume on $\Sigma$ corresponding to the Riemannian metric induced by the K\"ahler metric $g$ on $\Sigma$. The purity of $\rho_k(\Sigma,\sigma)$ satisfies
\[ \Tr\left(\rho_k(\Sigma,\sigma)^2\right) =   \left( \frac{2 \pi}{k} \right)^{\frac{d}{2}} \left( \int_{\Sigma} f \sigma + \bigO{k^{-1}} \right),\]
where the function $f$ is such that $\sigma = f \mu_{g,\Sigma}$. In particular, for $k$ large enough, this state cannot be pure.
\end{prop}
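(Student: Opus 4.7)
The plan is to express $\Tr(\rho_k(\Sigma,\sigma)^2)$ as a double integral over $\Sigma\times\Sigma$ whose integrand is the coherent-state overlap, and then extract its leading asymptotic by a Laplace-type argument concentrated on the diagonal.

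First, by linearity of the trace,
\[ \Tr(\rho_k(\Sigma,\sigma)^2) = \int_{\Sigma}\int_{\Sigma} \Tr(P_k^x P_k^y)\, \sigma(x)\sigma(y). \]
Since each $P_k^x$ is the rank-one orthogonal projector on the line spanned by $\xi_k^{u,\mathrm{norm}}$ with $\pi(u)=x$, one has $\Tr(P_k^x P_k^y) = |\scal{\xi_k^{u,\mathrm{norm}}}{\xi_k^{v,\mathrm{norm}}}_k|^2$. Expressing this overlap in terms of the Bergman kernel and combining \eqref{eq:asympt_projector} with \eqref{eq:asymp_rawnsley} gives the uniform identity
\[ \Tr(P_k^x P_k^y) = |S(x,y)|^{2k}\bigl(1 + \bigO{k^{-1}}\bigr). \]

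The next step is a localization: since $|S(x,y)|<1$ whenever $x\neq y$ and $\Sigma$ is compact, the contribution from the complement of any fixed tubular neighborhood $U$ of the diagonal of $\Sigma\times\Sigma$ is $\bigO{e^{-ck}}$ for some $c>0$. On $U$, I would rely on the standard property (cf.\ \cite{ChaBTO}) that $\Phi(x,y):=-\log|S(x,y)|^2$ is a smooth non-negative function vanishing exactly on the diagonal of $M\times M$, with transverse Hessian equal to the K\"ahler metric $g$. Restricted to $\Sigma\times\Sigma$ near the diagonal, this yields $\Phi(x,y) = \tfrac12\, d_{g|\Sigma}(x,y)^2 + \bigO{d_{g|\Sigma}(x,y)^3}$, where $d_{g|\Sigma}$ is the Riemannian distance of the metric induced on $\Sigma$ by $g$.

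Finally, fixing $x\in\Sigma$ and parametrizing nearby points $y\in\Sigma$ by Riemannian normal coordinates $v\in T_x\Sigma$, one has $\mu_{g,\Sigma} = (1+\bigO{|v|^2})\,dv$ and $f(y) = f(x) + \bigO{|v|}$. A parameter-dependent Laplace expansion at the non-degenerate minimum $v=0$ of $\Phi$, obtained after the rescaling $u = \sqrt{k}\,v$ and using the vanishing of odd Gaussian moments, gives
\[ \int_\Sigma \Tr(P_k^x P_k^y)\, \sigma(y) = f(x)\left(\frac{2\pi}{k}\right)^{d/2}\bigl(1+\bigO{k^{-1}}\bigr). \]
Integrating this against $\sigma(x) = f(x)\,\mu_{g,\Sigma}(x)$ and noting that $\int_\Sigma f^2\,\mu_{g,\Sigma} = \int_\Sigma f\,\sigma$ produces the announced formula, and the strict inequality $\Tr(\rho_k(\Sigma,\sigma)^2) < 1$ for large $k$ follows since the leading order tends to $0$. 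The only point that requires genuine care is controlling the Laplace remainder uniformly in $x\in\Sigma$; this is a routine but slightly tedious bookkeeping exercise in the rescaled variable, using that the order $k^{-1/2}$ corrections in both the expansion of $\Phi$ and the expansion of the amplitude are odd in $u$ and therefore integrate to zero, so the first non-trivial error is indeed $\bigO{k^{-1}}$.
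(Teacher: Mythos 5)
Your proof is correct and follows essentially the same route as the paper's: the same double-integral formula for the purity via $\Tr(P_k^x P_k^y)=|\Pi_k(x,y)|^2/(|\Pi_k(x,x)|\,|\Pi_k(y,y)|)$, localization away from the diagonal using $|S(x,y)|<1$, and a Laplace/stationary-phase expansion of $e^{-k\varphi}$ with $\varphi=-2\log|S|$ — the paper organizes this as stationary phase along the critical submanifold $\mathrm{diag}(\Sigma^2)$ with a partition of unity, where you perform an equivalent fiberwise Laplace expansion in normal coordinates. One cosmetic slip: the uniform identity should read $\Tr(P_k^x P_k^y)=|S(x,y)|^{2k}\bigl(a_0(x,y)^2+\bigO{k^{-1}}\bigr)$ rather than with amplitude $1+\bigO{k^{-1}}$, but since $a_0(x,x)=1$ and the $\bigO{|x-y|}$ deviation of the amplitude contributes only at odd order in the rescaled variable, this does not affect your conclusion.
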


\begin{proof}
We need to compute
\[ \Tr(\rho_{k}\left(\Sigma,\sigma)^2\right) = \int_{\Sigma} \int_{\Sigma} \Tr\left(P_k^x P_k^y\right) \sigma(x) \sigma(y). \]
In order to do so, let $(\varphi_j)_{1 \leq j \leq d_k}$, where $d_k = \dim(\Hil_k)$, be any orthonormal basis of $\Hil_k$. Let $x,y \in M$ and let $u,v \in L$ be unit vectors such that $u \in L_x, v \in L_y$. Then
\[ P_k^x P_k^y \varphi_j = \frac{\scal{\varphi_j}{\xi_k^v}_k \scal{\xi_k^v}{\xi_k^u}_k}{\| \xi_k ^v \|^2_k \|\xi_k^u\|^2_k} \xi_k^u \]
for every $j \in \llbracket 1,d_k \rrbracket$. Therefore, 
\[ \Tr\left(P_k^x P_k^y\right) =  \frac{\scal{\xi_k^v}{\xi_k^u}_k}{\| \xi_k ^u \|^2_k \|\xi_k^v\|^2_k} \sum_{j=1}^{d_k} \scal{\varphi_j}{\xi_k^v}_k \scal{\xi_k^u}{\varphi_j}_k = \frac{\scal{\xi_k^v}{\xi_k^u}_k \scal{\xi_k^u}{\xi_k^v}_k}{\| \xi_k ^u \|^2_k \|\xi_k^v\|^2_k} = \frac{|\scal{\xi_k^u}{\xi_k^v}|^2_k}{\| \xi_k^u \|^2_k \|\xi_k^v\|^2_k}. \]
We can rewrite this expression, using the properties stated in \cite[Section 5]{ChaBTO}, as
\[ \Tr\left(P_k^x P_k^y\right) = \frac{|\Pi_k(x,y)|^2}{|\Pi_k(x,x)| \ |\Pi_k(y,y)|}. \]
Hence, we finally obtain that
\[ \Tr\left(\rho_{k}(\Sigma,\sigma)^2\right) = \int_{\Sigma} \int_{\Sigma}  \frac{|\Pi_k(x,y)|^2}{|\Pi_k(x,x)| \ |\Pi_k(y,y)|} \sigma(x) \sigma(y). \]
Since the section $S$ introduced in Equation (\ref{eq:asympt_projector}) satisfies $|S(x,y)| < 1$ whenever $x \neq y$,
\[ \Tr\left(\rho_{k}(\Sigma,\sigma)^2\right) = \int_{(x,y) \in V}  \frac{|\Pi_k(x,y)|^2}{|\Pi_k(x,x)| \ |\Pi_k(y,y)|} \sigma(x) \sigma(y) + \bigO{k^{-\infty}} \]
where $V$ is a neighbourhood of the diagonal of $\Sigma^2$ in $\Sigma^2$. By taking a smaller $V$ if necessary, we may assume that $S$ does not vanish on $V$, and define $\varphi = -2\log |S|$ on the latter. We then deduce from Equation (\ref{eq:asympt_projector}) that $\Tr(\rho_{k}(\Sigma,\sigma)^2) =  (1 + \bigO{k^{-1}}) I_k$ where
\[ I_k = \int_{V} \exp(-k \varphi(x,y)) a_0(x,y)^2 (\sigma \otimes \sigma)(x,y) . \]
In order to estimate this integral, we will apply the stationary phase lemma \cite[Theorem $7.7.5$]{Hor}, with the subtlety that the phase function $\varphi$ has a submanifold of critical points. Indeed, by \cite[Proposition 1]{ChaBTO}, its critical locus is given by
\[ \mathcal{C}_{\varphi} = \{ (x,y) \in  V | \ d\varphi(x,y) = 0 \} = \mathrm{diag}(\Sigma^2).  \]
In this situation, we need to check that the Hessian of $\varphi$ is non degenerate in the transverse direction at every critical point $(x,x)$, $x \in \Sigma$. But we know from \cite[Proposition 1]{ChaBTO} that it is the case, since at such a point, the kernel of this Hessian is equal to $T_{(x,x)}\mathrm{diag}(\Sigma^2)$ and its restriction to the orthogonal complement of $T_{(x,x)}\mathrm{diag}(\Sigma^2)$ is equal to $2 \tilde{g}_{(x,x)}$, where $\tilde{g}$ is the K\"ahler metric on $M \times M$ induced by the symplectic form $\omega \oplus -\omega$ and complex structure $j \oplus -j$. We choose a finite cover of $V$ by open sets of the form $U \times U$, with $U$ a coordinate chart for $\Sigma$ with local coordinates $x_1, \ldots x_d$, and use a partition of unity argument to work with
\[ J_k = \int_{U \times U} \exp(-k \varphi(x,y)) a_0(x,y)^2 h(x) h(y)  \ dx_1 \ldots dx_d dy_1 \ldots dy_d \]
where $h$ is the function such that $\sigma = h \ dx_1 \ldots dx_d$ on $U$. Observe that if $x$ belongs to $U$, the determinant of the transverse Hessian of $\varphi$ at $(x,x)$ is equal to the determinant $\det g_{x,\Sigma} \neq 0$, where $g_{x,\Sigma}$ is the matrix of ${g_x}_{|T_x \Sigma \times T_x \Sigma}$ in the basis corresponding to our local coordinates. Therefore the stationary phase lemma yields 
\[ J_k = \left( \frac{2\pi}{k} \right)^{\frac{d}{2}} \int_U \exp(-k\varphi(x,x)) |\det g_{x,\Sigma}|^{-1/2} a_0(x,x)^2 h(x)^2 \ dx_1 \ldots dx_d + \bigO{k^{-(\frac{d}{2}+1)}}.\]
But by definition, $\mu_{g,\Sigma}(x) = |\det g_{x,\Sigma}|^{1/2} \ dx_1 \ldots dx_d$ on $U$, therefore the function $f$ introduced in the statement of the proposition satisfies $f(x) |\det g_{x,\Sigma}|^{1/2} = h(x) $ on $U$. Since moreover $\varphi(x,x) = 0$ and $a_0(x,x) = 1$, this yields the result.
\end{proof}

\begin{ex}

It follows from the properties of coherent states stated in \cite[Section 5]{ChaBTO} and Equation (\ref{eq:asymp_rawnsley}) that
\[ \mathrm{Id}_{\Hil_k} = T_k(1) = \int_M P_k^x \|\xi_k^u\|^2_{k}  \ \mu_M(x) = \left( 1 + \bigO{k^{-1}} \right) \left(\frac{k}{2\pi}\right)^n \int_M P_k^x  \mu_M(x). \]
where $\pi(u) = x$. Consequently, if we consider the density $\sigma = (\mathrm{Vol}(M))^{-1} \mu_M$ on $M$, then
\[ \rho_k(M,\sigma) = \left(\frac{2\pi}{k}\right)^n  \frac{1}{\mathrm{Vol}(M)} \left( 1 + \bigO{k^{-1}} \right)  \mathrm{Id}_{\Hil_k}. \]
Thus, we finally obtain that 
\[ \Tr\left( \rho_k(M,\sigma)^2 \right) =  \left(\frac{2\pi}{k}\right)^{2n} \frac{ \dim \Hil_k}{\mathrm{Vol}(M)^2} \left( 1 + \bigO{k^{-1}} \right) =  \left(\frac{2\pi}{k}\right)^{n} \frac{1}{\mathrm{Vol}(M)}\left( 1 + \bigO{k^{-1}} \right), \]
Thanks to Equation (\ref{eq:asymp_dim})
\[ \Tr\left( \rho_k(M,\sigma)^2 \right) =  \left(\frac{2\pi}{k}\right)^{n} \frac{1}{\mathrm{Vol}(M)}\left( 1 + \bigO{k^{-1}} \right), \]
which is consistent with the result of the above proposition because the function $f$ associated with $\sigma$ is  $\mathrm{Vol}(M)^{-1}$ (since the Liouville and Riemannian volume forms coincide).

\end{ex}

\subsection{Microsupport and other properties}

Let us now state a few properties of this state $\rho_k(\Sigma,\sigma)$. Given a state $\eta$ and a quantum observable $T$, the expectation of $T$ with respect to $\eta$ is defined as $\mathbb{E}(\eta,T) = \Tr(T \eta)$. In the case where $\eta$ is the state associated with $(\Sigma,\sigma)$, we can obtain a complete asymptotic expansion of this expectation.

\begin{lm}
Let $T_k$ be a self-adjoint Berezin-Toeplitz operator acting on $\Hil_k$, and let $\sum_{\ell \geq 0} \hbar^{\ell} t_{\ell}$ be the covariant symbol of $T_k$ (see \cite[Definition 3]{ChaBTO}). Then $\mathbb{E}(\rho_k(\Sigma,\sigma),T_k)$ has the following asymptotic expansion:
\[ \mathbb{E}(\rho_k(\Sigma,\sigma),T_k) = \sum_{\ell \geq 0} k^{-\ell} \int_{\Sigma} t_{\ell}(x) \ \sigma(x) + \bigO{k^{-\infty}}. \]
In particular, if $T_k = \Pi_k f_0$ for some function $f_0 \in \classe{\infty}{(M,\R)}$, then
\[ \mathbb{E}(\rho_k(\Sigma,\sigma),T_k) = \int_{\Sigma} f_{0}(x) \ \sigma(x) + \bigO{k^{-1}}.\]
\end{lm}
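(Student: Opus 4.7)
The plan is to recognize that $\Tr(T_k P_k^x)$ is nothing but the covariant symbol of $T_k$ evaluated at $x$, after which the statement reduces to integrating the asymptotic expansion of that symbol along $(\Sigma,\sigma)$.

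First I would use linearity and the cyclicity of the trace (justified because $\Hil_k$ is finite-dimensional and the integrand depends smoothly on $x \in \Sigma$) to write
\[ \mathbb{E}(\rho_k(\Sigma,\sigma),T_k) = \Tr\!\left( T_k \int_{\Sigma} P_k^x \ \sigma(x) \right) = \int_{\Sigma} \Tr\!\left(T_k P_k^x \right) \sigma(x). \]
Next, fix $x \in \Sigma$ and pick $u \in P$ with $\pi(u) = x$. Since $P_k^x$ is the orthogonal projector onto $\C \xi_k^{u,\mathrm{norm}}$, a rank-one computation gives
\[ \Tr\!\left(T_k P_k^x\right) = \scal{T_k \xi_k^{u,\mathrm{norm}}}{\xi_k^{u,\mathrm{norm}}}_k, \]
which by \cite[Definition 3]{ChaBTO} is precisely the value at $x$ of the covariant symbol of $T_k$. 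By hypothesis this covariant symbol admits the asymptotic expansion $\sum_{\ell \geq 0} k^{-\ell} t_\ell$ in the $\classe{\infty}{}$ topology on the compact manifold $M$, so the remainder is $\bigO{k^{-\infty}}$ uniformly on $\Sigma$.

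Finally, I would integrate this uniform expansion against the finite measure $\sigma$ on $\Sigma$ and interchange the sum with the integral term by term to conclude
\[ \mathbb{E}(\rho_k(\Sigma,\sigma),T_k) = \sum_{\ell \geq 0} k^{-\ell} \int_{\Sigma} t_\ell(x) \ \sigma(x) + \bigO{k^{-\infty}}. \]
For the second statement, when $T_k = \Pi_k f_0$, the leading order coefficient of the covariant symbol is $t_0 = f_0$ (this is the standard fact that the covariant symbol of a Berezin-Toeplitz operator equals its contravariant symbol up to $\bigO{k^{-1}}$, via the Berezin transform), which yields the claimed formula.

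There is no serious obstacle here: the proof is essentially a three-line reduction to the defining property of the covariant symbol, and the only point demanding care is the uniformity in $x$ of the asymptotic expansion of $\Tr(T_k P_k^x)$, which is built into the definition in \cite{ChaBTO}.
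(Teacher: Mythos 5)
Your proposal is correct and follows essentially the same route as the paper: both reduce $\mathbb{E}(\rho_k(\Sigma,\sigma),T_k)$ to $\int_{\Sigma}\scal{T_k\xi_k^{u,\mathrm{norm}}}{\xi_k^{u,\mathrm{norm}}}_k\,\sigma(x)$ (the paper via an orthonormal-basis expansion, you via cyclicity of the trace, which is the same computation) and then identify this matrix element with the covariant symbol $T_k(x,x)/\Pi_k(x,x)$ from \cite[Definition 3]{ChaBTO}, whose uniform asymptotic expansion is integrated term by term. Your remark that $t_0=f_0$ for $T_k=\Pi_k f_0$ matches the standard fact the paper implicitly invokes for the second claim.
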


\begin{proof}
Let $(\varphi_j)_{1 \leq j \leq d_k}$, $d_k = \dim(\Hil_k)$, be any orthonormal basis of $\Hil_k$. For $x$ in $\Sigma$, let $u \in L_x$ be a unit vector. Then
\[ \mathbb{E}(\rho_k(\Sigma,\sigma),T_k) = \int_{\Sigma} \scal{ \sum_{j=1}^{d_k} \scal{T_k \xi_k^{u,\mathrm{norm}}}{\varphi_j} _k \varphi_j}{\xi_k^{u,\mathrm{norm}}}_k \sigma(x) =  \int_{\Sigma} \scal{T_k \xi_k^{u,\mathrm{norm}}}{\xi_k^{u,\mathrm{norm}}}_k \sigma(x). \]
The statement follows from the equalities $\scal{T_k \xi_k^u}{\xi_k^u}_k = T_k(x,x)$ and $\| \xi_k^u \|^2_k = \Pi_k(x,x)$, see \cite[Section 5]{ChaBTO}, and from the definition of the covariant symbol, see \cite[Definition 3]{ChaBTO}.
\end{proof}

This result shows in which sense the state associated with $(\Sigma,\sigma)$ concentrates on $\Sigma$ in the semiclassical limit. Indeed, one can introduce, as in \cite[Section 4]{ChaPol}, the microsupport of any state in the following way; the semiclassical measure $\nu_k$ of a state $\eta_k$ is defined as
\[ \int_M f d\nu_k = \Tr(T_k(f) \eta_k) = \mathbb{E}(\eta_k,T_k(f)). \]
Then the microsupport $\mathrm{MS}(\eta_k)$ of $\eta_k$ is the complementary set of the set of points of $M$ having an open neighbourhood $U$ such that $\nu_k(U) = \mathcal{O}(k^{-\infty})$. 

\begin{cor}
The microsupport of $\rho_k(\Sigma,\sigma)$ coincides with $\Sigma$. 
\end{cor}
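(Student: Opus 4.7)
The plan is to establish the two inclusions $\mathrm{MS}(\rho_k(\Sigma,\sigma)) \subset \Sigma$ and $\Sigma \subset \mathrm{MS}(\rho_k(\Sigma,\sigma))$ separately; in both cases the argument amounts to feeding a well-chosen bump function into the asymptotic expansion of $\mathbb{E}(\rho_k(\Sigma,\sigma), T_k(f))$ supplied by the preceding lemma.

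For the inclusion $\mathrm{MS}(\rho_k(\Sigma,\sigma)) \subset \Sigma$, I would fix $x \notin \Sigma$ and use that $\Sigma$ is closed to pick an open neighbourhood $U$ of $x$ disjoint from $\Sigma$. For any smooth function $f$ with compact support in $U$, the covariant symbol of the Toeplitz operator $T_k(f) = \Pi_k f$ admits an asymptotic expansion $\sum_{\ell \geq 0} k^{-\ell} t_\ell$ in which every coefficient $t_\ell = L_\ell f$ is obtained by applying a differential operator $L_\ell$ to $f$, so that $\mathrm{supp}\, t_\ell \subset \mathrm{supp}\, f \subset U$; this locality at every order is a standard feature of the Berezin-Toeplitz calculus used in \cite{ChaBTO} and is the only substantive input beyond the preceding lemma. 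Applying the lemma then yields
\[ \int_M f\, d\nu_k = \sum_{\ell \geq 0} k^{-\ell} \int_\Sigma t_\ell\, \sigma + \bigO{k^{-\infty}} = \bigO{k^{-\infty}}, \]
because each integral on the right vanishes by disjointness of $\mathrm{supp}\, t_\ell$ and $\Sigma$. As this holds for every such $f$, the point $x$ does not belong to $\mathrm{MS}(\rho_k(\Sigma,\sigma))$.

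For the reverse inclusion $\Sigma \subset \mathrm{MS}(\rho_k(\Sigma,\sigma))$, I would take $x \in \Sigma$ and an arbitrary open neighbourhood $U$ of $x$, and choose a nonnegative bump function $f$ with compact support in $U$ satisfying $f(x) > 0$. Continuity of $f$ and positivity of the density $\sigma$ then ensure $\int_\Sigma f\, \sigma > 0$, and the second part of the preceding lemma gives
\[ \int_M f\, d\nu_k = \int_\Sigma f\, \sigma + \bigO{k^{-1}}, \]
which is bounded below by a strictly positive constant once $k$ is large enough, and therefore fails to be $\bigO{k^{-\infty}}$. This forces $x$ into the microsupport and completes the argument. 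The only mildly delicate point is the locality of the symbol coefficients $t_\ell$ in the first step, which causes the entire asymptotic series to collapse to zero; apart from this, the proof is a direct application of the lemma to nonnegative bump functions of appropriately chosen support, and no serious obstacle is anticipated.
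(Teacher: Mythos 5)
Your proposal is correct and follows essentially the same route as the paper: both directions are obtained by feeding suitable bump functions into the asymptotic expansion of $\mathbb{E}(\rho_k(\Sigma,\sigma),T_k(f))$, using in the first direction that the covariant symbol of $T_k(f)$ is supported where $f$ is (so every term of the expansion vanishes on $\Sigma$), and in the second that $\int_{\Sigma} f\,\sigma>0$ for a bump concentrated near a point of $\Sigma$. The only step you leave implicit is the passage from $\int_M f\, d\nu_k = \bigO{k^{-\infty}}$ to $\nu_k(U')=\bigO{k^{-\infty}}$ for a slightly smaller neighbourhood $U'$, which requires choosing $f$ dominating the indicator of $U'$ and invoking the positivity of $\nu_k$, exactly as the paper does with its cutoff $\chi$ equal to one on $U'$.
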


\begin{proof}
Let $\nu_k$ be the semiclassical measure of $\rho_k(\Sigma,\sigma)$. Let $m \in M \setminus \Sigma$; since $\Sigma$ is closed, there exists an open neighbourhood $V$ of $m$ in $M$ not intersecting $\Sigma$. Let $U$ be an open neighbourhood of $m$ such that $\overline{U} \subset V$, and let $\chi$ be a nonnegative smooth function equal to one on $U$ and compactly supported in $V$. Then 
\[ \nu_k(U) \leq \int_M \chi \ d\nu_k = \mathbb{E}(\rho_k(\Sigma,\sigma),T_k(\chi)). \]
The last term in this equation is given by the previous lemma; it is $\mathcal{O}(k^{-\infty})$ because all the functions in the covariant symbol of $T_k(\chi)$ vanish on $\Sigma$ since the latter does not intersect the support of $\chi$. Therefore $\nu_k(U) = \mathcal{O}(k^{-\infty})$ and $m \notin \mathrm{MS}(\rho_k(\Sigma,\sigma))$.

Conversely, let $m \in \Sigma$ and let $U$ be any open neighbourhood of $m$ in $M$. Choose another open neighbourhood $V$ of $m$ such that $\overline{V} \subset U$, and let $\chi$ be a smooth function, compactly supported in $U$, equal to one on $V$, and such that $0 \leq \chi \leq 1$. Then 
\[ \nu_k(U) \geq \int_M \chi \ d\nu_k = \mathbb{E}(\rho_k(\Sigma,\sigma),T_k(\chi)). \]
But by the previous lemma, we have that
\[ \mathbb{E}(\rho_k(\Sigma,\sigma),T_k(\chi)) = \int_{\Sigma} \chi(x) \sigma(x) + \mathcal{O}(k^{-1}) \geq \int_{\Sigma \cap V} \sigma + \mathcal{O}(k^{-1}). \]
Since the integral of $\sigma$ on $\Sigma \cap V$ is positive, this implies that $m$ belongs to $\mathrm{MS}(\rho_k(\Sigma,\sigma))$.
\end{proof}

Similarly, the variance of $T$ with respect to $\eta$ is $\text{Var}(\eta,T) = \Tr(T^2 \eta) - \Tr(T \eta)^2$.

\begin{lm}
Let $T_k$ be a self-adjoint Berezin-Toeplitz operator acting on $\Hil_k$, with covariant symbol $\sum_{\ell \geq 0} \hbar^{\ell} t_{\ell}$. 
Let $\sum_{\ell \geq 0} \hbar^{\ell} u_{\ell} $ be the covariant symbol of $T_k^2$. Then $\mathrm{Var}(\rho_k(\Sigma,\sigma),T_k)$ has the following asymptotic expansion:
\[ \mathrm{Var}(\rho_k(\Sigma,\sigma),T_k) = \sum_{\ell \geq 0} k^{-\ell} \left( \int_{\Sigma} u_{\ell}(x) \sigma(x) - \sum_{m=0}^{\ell} \int_{\Sigma} \int_{\Sigma} t_m(x) t_{\ell - m}(y) \ \sigma(x) \sigma(y) \right) + \bigO{k^{-\infty}}. \]
In particular, if $T_k = \Pi_k f_0$ with $f_0 \in \classe{\infty}{(M,\R)}$, then 
\[ \mathrm{Var}(\rho_k(\Sigma,\sigma),T_k) = \int_{\Sigma} f_{0}(x)^2 \sigma(x) - \left(\int_{\Sigma} f_{0}(x) \sigma(x)\right)^2 + \bigO{k^{-1}}. \]
\end{lm}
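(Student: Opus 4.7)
The plan is to reduce the statement to the previous lemma on expectations by unpacking the definition of variance and then using that the product of two Berezin-Toeplitz operators is again a Berezin-Toeplitz operator.

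First, I write
\[ \mathrm{Var}(\rho_k(\Sigma,\sigma),T_k) = \Tr\bigl(T_k^2 \, \rho_k(\Sigma,\sigma)\bigr) - \Tr\bigl(T_k \, \rho_k(\Sigma,\sigma)\bigr)^2 = \mathbb{E}(\rho_k(\Sigma,\sigma), T_k^2) - \mathbb{E}(\rho_k(\Sigma,\sigma), T_k)^2. \]
Since $T_k$ is a self-adjoint Berezin-Toeplitz operator, so is $T_k^2$, and it admits a covariant symbol with asymptotic expansion $\sum_{\ell \geq 0} k^{-\ell} u_\ell$ as assumed in the statement. This is the only non-elementary input and is a standard consequence of the composition rule for Berezin-Toeplitz operators recalled in \cite{ChaBTO}; I would simply cite it.

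Next I apply the previous lemma twice. Applied to $T_k^2$, it gives directly
\[ \mathbb{E}(\rho_k(\Sigma,\sigma), T_k^2) = \sum_{\ell \geq 0} k^{-\ell} \int_{\Sigma} u_{\ell}(x) \, \sigma(x) + \bigO{k^{-\infty}}. \]
Applied to $T_k$, it gives
\[ \mathbb{E}(\rho_k(\Sigma,\sigma), T_k) = \sum_{\ell \geq 0} k^{-\ell} \int_{\Sigma} t_{\ell}(x) \, \sigma(x) + \bigO{k^{-\infty}}. \]
Squaring the latter via the Cauchy product for formal power series in $k^{-1}$,
\[ \mathbb{E}(\rho_k(\Sigma,\sigma), T_k)^2 = \sum_{\ell \geq 0} k^{-\ell} \sum_{m=0}^{\ell} \left(\int_{\Sigma} t_m(x) \, \sigma(x)\right)\left(\int_{\Sigma} t_{\ell-m}(y) \, \sigma(y)\right) + \bigO{k^{-\infty}}, \]
which can be rewritten as a double integral by Fubini. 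Subtracting the two expansions yields the claimed formula.

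For the particular case $T_k = \Pi_k f_0$, the principal symbol of $T_k$ is $t_0 = f_0$ and the composition rule for Berezin-Toeplitz operators gives that the principal symbol of $T_k^2$ is $u_0 = f_0^2$; keeping only the $\ell = 0$ term in the above expansion and absorbing the rest into the remainder produces exactly the stated leading-order identity. The only potentially delicate point is knowing the leading term $u_0 = f_0^2$ of the covariant symbol of $T_k^2$, but this follows directly from the product formula in \cite{ChaBTO}, so the argument requires no genuinely new analysis beyond the previous lemma.
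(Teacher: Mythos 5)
Your proposal is correct and matches the paper's proof, which is exactly the one-line argument "apply the previous lemma to both $T_k^2$ and $T_k$"; you have simply spelled out the Cauchy-product bookkeeping and the standard fact (via the composition rule for Berezin-Toeplitz operators) that $u_0 = f_0^2$ in the special case. Nothing is missing.
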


\begin{proof}
Apply the previous lemma to both $T_k^2$ and $T_k$.
\end{proof}

Now, assume that we are in the special case where $T_k = \Pi_k f_0$ and ${f_0}_{|\Sigma} = E \in \R$; then the previous results yield $\mathbb{E}(\rho_k(\Sigma,\sigma),T_k) = E + \bigO{k^{-1}}$ and $\mathrm{Var}(\rho_k(\Sigma,\sigma),T_k) = \bigO{k^{-1}} $.

\subsection{Fidelity for states associated with non intersecting submanifolds}

Let $\Sigma_1, \Sigma_2 \subset M$ be two closed, connected submanifolds of $M$, endowed with densities $\sigma_1, \sigma_2$ such that 
$\int_{\Gamma_i} \sigma_i = 1$, $i=1,2$. Using the notation introduced in Equation (\ref{eq:Lag_state}), we define the states $\rho_{k,i} = \rho_k(\Sigma_i,\sigma_i)$, $i=1,2$. Our goal is to estimate the fidelity $F(\rho_{k,1},\rho_{k,2})$ in the limit $k \to \infty$. Of course, if $(\Sigma_1,\sigma_1) = (\Sigma_2,\sigma_2)$, then $\rho_{k,1} = \rho_{k,2}$ and $F(\rho_{k,1},\rho_{k,2}) = 1$. The following result deals with the case where $\Sigma_1$ and $\Sigma_2$ are disjoint.

\begin{prop}
\label{prop:empty_int}
Assume that $\Sigma_1 \cap \Sigma_2 = \emptyset$. Then $F(\rho_{k,1},\rho_{k,2}) = \bigO{k^{-\infty}}$.
\end{prop}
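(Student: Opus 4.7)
The plan is to reduce the problem to estimating $\Tr(\rho_{k,1}\rho_{k,2})$, which we already have a good integral formula for from the purity computation, and then exploit the off-diagonal decay of the Bergman kernel. The super-fidelity bound is useless here because $1 - \Tr(\rho_{k,i}^2) \to 1$, so the bound $F \leq G$ only yields something tending to $1$. Instead, I will use a standard rank inequality.

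First, I would establish the elementary bound
\[ F(\rho,\eta) \leq \dim(\Hil_k) \Tr(\rho \eta) \]
for any two states on $\Hil_k$. This follows from Cauchy--Schwarz applied to the Hilbert--Schmidt inner product: writing $A = \sqrt{\rho}\sqrt{\eta}$, one has $\|A\|_{\Tr} \leq \sqrt{\dim \Hil_k} \, \|A\|_{HS}$, and a direct computation gives $\|A\|_{HS}^2 = \Tr(\sqrt{\eta}\rho\sqrt{\eta}) = \Tr(\rho\eta)$. Squaring gives the claimed inequality. Combined with $\dim \Hil_k = \bigO{k^n}$ from Equation (\ref{eq:asymp_dim}), it will suffice to prove $\Tr(\rho_{k,1}\rho_{k,2}) = \bigO{k^{-\infty}}$.

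Next, I would use the formula derived in the proof of Proposition \ref{prop:purity}, adapted to two distinct submanifolds:
\[ \Tr(\rho_{k,1}\rho_{k,2}) = \int_{\Sigma_1}\int_{\Sigma_2} \frac{|\Pi_k(x,y)|^2}{|\Pi_k(x,x)|\,|\Pi_k(y,y)|} \,\sigma_1(x)\sigma_2(y). \]
Indeed, the same Fubini-style computation with the coherent projectors yields $\Tr(P_k^x P_k^y) = |\Pi_k(x,y)|^2 / (|\Pi_k(x,x)||\Pi_k(y,y)|)$ regardless of whether $x,y$ lie on the same submanifold. By Equation (\ref{eq:asympt_projector}), the integrand equals $|S(x,y)|^{2k}(a_0(x,y)^2 + \bigO{k^{-1}})$ uniformly in $(x,y)$.

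The key geometric input is now: since $\Sigma_1$ and $\Sigma_2$ are closed, disjoint subsets of the compact manifold $M$, the product $\Sigma_1 \times \Sigma_2$ is disjoint from the diagonal of $M^2$, and hence compact. On this compact set, the continuous function $|S|$ attains a maximum $c$, and the property $|S(x,y)| < 1$ for $x \neq y$ (from \cite[Proposition 1]{ChaBTO}) forces $c < 1$. Therefore $|S(x,y)|^{2k} \leq c^{2k}$ on $\Sigma_1 \times \Sigma_2$, while $a_0$ and the densities $\sigma_1,\sigma_2$ are bounded. It follows that
\[ \Tr(\rho_{k,1}\rho_{k,2}) = \bigO{c^{2k}} = \bigO{k^{-\infty}}, \]
and plugging this into the rank inequality yields $F(\rho_{k,1},\rho_{k,2}) = \bigO{k^n}\bigO{k^{-\infty}} = \bigO{k^{-\infty}}$ as required. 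There is no substantial obstacle here; the only care needed is to invoke the correct upper bound on $F$ (not the super-fidelity) and to remember that compactness plus the strict inequality $|S|<1$ off the diagonal gives uniform exponential decay.
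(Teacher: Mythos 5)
Your proof is correct and follows essentially the same route as the paper's: the bound $F(\rho_{k,1},\rho_{k,2}) \leq \dim(\Hil_k)\Tr(\rho_{k,1}\rho_{k,2})$ via Cauchy--Schwarz for the Hilbert--Schmidt inner product, the integral formula for the trace from the purity computation, and the uniform off-diagonal decay of $\Pi_k$ on $\Sigma_1 \times \Sigma_2$. The only difference is that you spell out the compactness argument giving the uniform bound $|S|^{2k} \leq c^{2k}$ with $c<1$, which the paper leaves implicit.
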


\begin{proof}
By using the Cauchy-Schwarz inequality for the inner product $(A,B) \mapsto \Tr(B^*A)$ on the space of operators on $\Hil_k$, and the fact that the trace is invariant under cyclic permutations, we get that $F(\rho_{k,1},\rho_{k,2}) \leq \dim(\Hil_k) \Tr(\rho_{k,1} \rho_{k,2})$. Since the dimension of $\Hil_k$ is of order $k^n$, it is therefore sufficient to show that $\Tr(\rho_{k,1} \rho_{k,2}) = \bigO{k^{-\infty}}$. The same computations as in the proof of Proposition \ref{prop:purity} yield
\begin{equation} \Tr(\rho_{k,1} \rho_{k,2}) = \int_{\Sigma_1} \int_{\Sigma_2}  \frac{|\Pi_k(x,y)|^2}{|\Pi_k(x,x)| |\Pi_k(y,y)|} \sigma_1(x) \sigma_2(y). \label{eq:trace}\end{equation}
Since $\Sigma_1 \times \Sigma_2$ does not meet the diagonal of $M \times M$, $\Pi_k$ is uniformly $\bigO{k^{-\infty}}$ on $\Sigma_1 \times \Sigma_2$. Moreover, it follows from Equation (\ref{eq:asympt_projector}) that $|\Pi_k(x,x)| \sim \left( \frac{k}{2\pi} \right)^n$ uniformly on $M$, so the above formula yields $\Tr(\rho_{k,1} \rho_{k,2}) = \bigO{k^{-\infty}}$.
\end{proof}

Consequently, we will now be interested in an intermediate case, namely in the situation where $\Sigma_1$ and $\Sigma_2$ are distinct but have non empty intersection at a finite number of points. Of course in this case fidelity is still expected to tend to zero as $k$ goes to infinity, but one might be able to estimate the rate of convergence and the relation between fidelity and the underlying geometry. As already explained, fidelity is in general too complicated to compute and we will rather be interested in the sub and super fidelities. We will explain how to estimate these quantities when $\Sigma_1$ and $\Sigma_2$ are Lagrangian submanifolds, which moreover intersect transversally at a finite number of points.

\section{Sub and super fidelity for two Lagrangian states}
\label{sect:sub_super}

In this section, we assume that $\Gamma_1$ and $\Gamma_2$ are two closed, connected Lagrangian submanifolds of $M$, endowed with densities $\sigma_1, \sigma_2$ such that $\int_{\Gamma_i} \sigma_i = 1$, $i=1,2$, and intersecting transversally at a finite number of points $m_1, \ldots, m_s$. As before, we set $\rho_{k,i} = \rho_k(\Gamma_i, \sigma_i)$. 

\begin{dfn} \label{dfn:thetas} For $\nu \in \llbracket 1,s \rrbracket$, we consider the principal angles
\[ 0 < \theta_1(m_{\nu}) \leq \ldots \leq \theta_{n}(m_{\nu}) \leq \frac{\pi}{2} \]
between $T_{m_{\nu}}\Gamma_1$ and $T_{m_{\nu}}\Gamma_2$, computed with respect to $g_{m_{\nu}}$ (recall that $g$ is the K\"ahler metric on $M$).\end{dfn}

For $i=1,2$, we introduce as in the statement of Proposition \ref{prop:purity} the Riemannian volume $\mu_{g,\Gamma_i}$ coming from the Riemannian metric induced by $g$ on $\Gamma_i$, and the function $f_i$ such that $\sigma_i = f_i \mu_{g,\Gamma_i}$. For $\nu \in \llbracket 1, s \rrbracket$, we define
\begin{equation}  (\sigma_1,\sigma_2)_{m_{\nu}} := f_1(m_{\nu}) f_2(m_{\nu}) > 0. \label{eq:constant_sigma}\end{equation}

\begin{thm}
\label{thm:subfid}
The sub-fidelity of $\rho_{k,1}$ and $\rho_{k,2}$ satisfies:
\[ E(\rho_{k,1},\rho_{k,2}) =  \left(\frac{2\pi}{k}\right)^{n}  C((\Gamma_1,\sigma_1),(\Gamma_2,\sigma_2)) + \bigO{k^{-(n+1)}}, \]
where $C((\Gamma_1,\sigma_1),(\Gamma_2,\sigma_2)) = C_1 + \sqrt{2(C_2 + C_3)}$ with
\[ C_1 = \sum_{\nu=1}^s \frac{(\sigma_1,\sigma_2)_{m_{\nu}}}{\prod_{\ell = 1}^n \sin(\theta_{\ell}(m_{\nu}))}, \quad C_2  = \sum_{\substack{\nu=1}}^s \sum_{\substack{\mu=1 \\ \mu \neq \nu}}^s \frac{(\sigma_1,\sigma_2)_{m_{\nu}} (\sigma_1,\sigma_2)_{m_{\mu}}}{\prod_{\ell = 1}^n \sin(\theta_{\ell}(m_{\nu})) \sin(\theta_{\ell}(m_{\mu}))} \] 
and finally
\[ C_3 = \sum_{\nu=1}^s \frac{(\sigma_1,\sigma_2)_{m_{\nu}}^2}{\prod_{\ell = 1}^n \sin(\theta_{\ell}(m_{\nu}))} \left( \prod_{\ell = 1}^n\frac{1}{\sin(\theta_{\ell}(m_{\nu}))} - \prod_{\ell = 1}^n\frac{1}{\sqrt{1 + \sin^2(\theta_{\ell}(m_{\nu}))}} \right). \]
\end{thm}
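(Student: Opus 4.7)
The plan is to estimate separately the three quantities $\Tr(\rho_{k,1}\rho_{k,2})$, $\Tr(\rho_{k,1}\rho_{k,2})^2$ and $\Tr((\rho_{k,1}\rho_{k,2})^2)$ entering the sub-fidelity formula \eqref{eq:sub_fid}, and then to assemble them. In each case, the coherent-state description $\rho_{k,i}=\int P_k^x\sigma_i$ and the identity $\Tr(P_k^{a_1}\cdots P_k^{a_r})=\Pi_k(a_1,a_2)\cdots\Pi_k(a_r,a_1)/\prod_j|\Pi_k(a_j,a_j)|$ reduce the trace to an integral over a product of copies of $\Gamma_1$ and $\Gamma_2$, and the decay $|S(x,y)|<1$ away from the diagonal localizes this integral to neighborhoods of the intersection points $m_\nu$, on which one applies stationary phase.

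For $\Tr(\rho_{k,1}\rho_{k,2})$, the same manipulation as in the proof of Proposition \ref{prop:purity} yields the integral $\int_{\Gamma_1\times\Gamma_2}|\Pi_k(x,y)|^2/(|\Pi_k(x,x)||\Pi_k(y,y)|)\sigma_1(x)\sigma_2(y)$, whose phase $-2\log|S|$ has isolated non-degenerate critical points exactly at the diagonal pairs $(m_\nu,m_\nu)$ by transversality. The transverse Hessian, identified in the proof of Proposition \ref{prop:purity} as twice the K\"ahler metric restricted to $T_{m_\nu}\Gamma_1\times T_{m_\nu}\Gamma_2$, is the Gram-type matrix of Lemma \ref{lm:angle}, whose determinant $\prod_\ell\sin^2(\theta_\ell(m_\nu))$ produces $\Tr(\rho_{k,1}\rho_{k,2})=(2\pi/k)^n C_1+\bigO{k^{-(n+1)}}$.

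The delicate step is the estimation of $\Tr((\rho_{k,1}\rho_{k,2})^2)$, which is a $4n$-dimensional integral over $\Gamma_1^2\times\Gamma_2^2$ of the product of four Bergman kernel values along the quadrilateral $x_1\to y_1\to x_2\to y_2\to x_1$. In contrast to the two-point case this is not a squared modulus: writing $S^k(x,y)=|S(x,y)|^ke^{ik\psi(x,y)}$, the product acquires a phase $ik(\psi(x_1,y_1)+\psi(y_1,x_2)+\psi(x_2,y_2)+\psi(y_2,x_1))$ which at quadratic order near the critical point is the symplectic area of the infinitesimal quadrilateral $(x_1-x_2,y_1-y_2)$, and therefore introduces the symplectic form $\omega$ into the problem. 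After localization near each $(m_\nu,m_\nu,m_\nu,m_\nu)$ and a choice of adapted coordinates along $T_{m_\nu}\Gamma_1$ and $T_{m_\nu}\Gamma_2$, I apply a complex stationary phase to the resulting phase, which has nonnegative real part. A change of variables into sum and difference coordinates $(U,V,U',V')$ decouples the complex Hessian into a purely real sum-block of the type $g(U-V)$, handled by Lemma \ref{lm:angle} with determinant $\prod_\ell\sin^2(\theta_\ell(m_\nu))$, and a complex difference-block $|U'|^2+|V'|^2+i\omega(U',V')$ in which the symplectic imaginary part couples the tangent spaces; thanks to the Lagrangian condition this block is precisely the matrix $I_n+\Xi^\top\Xi$ of Lemma \ref{lm:det_symp}, with determinant $\prod_\ell(1+\sin^2(\theta_\ell(m_\nu)))$. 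Combining yields
\[ \Tr\!\left((\rho_{k,1}\rho_{k,2})^2\right)=\left(\tfrac{2\pi}{k}\right)^{2n}\sum_\nu\frac{(\sigma_1,\sigma_2)_{m_\nu}^2}{\prod_\ell\sin(\theta_\ell(m_\nu))\sqrt{1+\sin^2(\theta_\ell(m_\nu))}}+\bigO{k^{-(2n+1)}}. \]
The extraction of the phase $\psi$, the complex stationary phase itself, and the verification that the Lagrangian property is what turns the symplectic block into $I_n+\Xi^\top\Xi$, are the main obstacles.

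To conclude, I square the expansion of $\Tr(\rho_{k,1}\rho_{k,2})$ and split the resulting double sum into diagonal and off-diagonal contributions, obtaining $\Tr(\rho_{k,1}\rho_{k,2})^2=(2\pi/k)^{2n}\bigl(C_2+\sum_\nu(\sigma_1,\sigma_2)_{m_\nu}^2/\prod_\ell\sin^2(\theta_\ell(m_\nu))\bigr)+\bigO{k^{-(2n+1)}}$; subtracting the expansion of $\Tr((\rho_{k,1}\rho_{k,2})^2)$ produces exactly $(2\pi/k)^{2n}(C_2+C_3)$ at leading order. Since $\sqrt{1+\sin^2\theta}>\sin\theta$ forces $C_3>0$ at each intersection point, the sum $C_2+C_3$ is strictly positive, the square root in \eqref{eq:sub_fid} can be expanded safely, and one reads off $E(\rho_{k,1},\rho_{k,2})=(2\pi/k)^n(C_1+\sqrt{2(C_2+C_3)})+\bigO{k^{-(n+1)}}$ as announced.
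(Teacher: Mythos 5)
Your proposal is correct and follows essentially the same route as the paper: stationary phase for $\Tr(\rho_{k,1}\rho_{k,2})$ localized at the intersection points with the Gram-matrix determinant of Lemma \ref{lm:angle}, a complex stationary phase for $\Tr((\rho_{k,1}\rho_{k,2})^2)$ whose Hessian factors (your sum/difference decoupling is exactly the paper's $\det(P+Q)\det(P-Q)$ computation) into the $\det(I_n-G^{\top}G)$ and $\det(I_n+\Xi^{\top}\Xi)$ blocks of Lemmas \ref{lm:angle} and \ref{lm:det_symp}, and the same final assembly into $C_1+\sqrt{2(C_2+C_3)}$. Your explicit remark that $C_3>0$ justifies expanding the square root is a welcome touch of care at the last step.
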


The rest of this section is devoted to the proof of this result; we start by estimating the trace $\Tr(\rho_{k,1}\rho_{k,2})$, which gives $C_1(\Gamma_1,\Gamma_2)$, then we estimate $\Tr((\rho_{k,1}\rho_{k,2})^2)$ to obtain the remaining terms. 

\begin{rmk}
As can be seen from the proofs (and using the complete description of the Bergman kernel), the sub-fidelity actually has a complete asymptotic expansion in powers of $k$ smaller than $-n$; we are only interested here in the first term of this expansion.
\end{rmk}

\subsection{The term $\Tr(\rho_{k,1} \rho_{k,2})$}

We are now ready to estimate the trace of $\rho_{k,1} \rho_{k,2}$.

\begin{thm}
\label{thm:trace}
We have the following estimate:
\[ \Tr(\rho_{k,1}\rho_{k,2}) = \left(\frac{2\pi}{k}\right)^n \left( \sum_{\nu=1}^s \frac{(\sigma_1,\sigma_2)_{m_{\nu}}}{\prod_{\ell = 1}^n \sin(\theta_{\ell}(m_{\nu}))} \right) + \bigO{k^{-(n+1)}}, \]
see Definition \ref{dfn:thetas} and Equation (\ref{eq:constant_sigma}) for notation.
\end{thm}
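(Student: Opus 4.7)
The starting point is the identity
\[ \Tr(\rho_{k,1}\rho_{k,2}) = \int_{\Gamma_1}\int_{\Gamma_2} \frac{|\Pi_k(x,y)|^2}{|\Pi_k(x,x)|\,|\Pi_k(y,y)|}\,\sigma_1(x)\sigma_2(y) \]
already derived in the proof of Proposition \ref{prop:empty_int}. The plan is to plug in the Bergman kernel asymptotics (\ref{eq:asympt_projector}) and then apply Laplace's method. Since $|\Pi_k(x,x)|\sim(k/2\pi)^n$ uniformly, the two factors in the denominator exactly absorb the $(k/2\pi)^{2n}$ coming from $|\Pi_k(x,y)|^2$, so that on a neighborhood $V$ of $\mathrm{diag}(M^2)$ on which $S$ does not vanish the integrand reduces to $|S(x,y)|^{2k}a_0(x,y)^2\,(1+\bigO{k^{-1}})$. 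Outside $V$, $|S|$ is bounded away from $1$, contributing $\bigO{k^{-\infty}}$. Writing $\varphi := -2\log|S|$, one is thus reduced to estimating
\[ \int_{(\Gamma_1\times\Gamma_2)\cap V}\exp(-k\varphi(x,y))\,a_0(x,y)^2\,(\sigma_1\otimes\sigma_2)(x,y) + \bigO{k^{-\infty}}. \]

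Since $\Gamma_1\cap\Gamma_2 = \{m_1,\dots,m_s\}$, the critical set of $\varphi|_{\Gamma_1\times\Gamma_2}$ is the finite set $\{(m_\nu,m_\nu)\}_{\nu=1}^s$, and by $|S|<1$ off the diagonal the integrand is $\bigO{k^{-\infty}}$ outside any fixed family of disjoint neighborhoods of these points. The integral therefore splits into $s$ local contributions, each treatable by Laplace's method \cite[Theorem 7.7.5]{Hor}. To apply it at $(m_\nu,m_\nu)$ I need the Hessian of $\varphi|_{\Gamma_1\times\Gamma_2}$ there. Recalling from \cite[Proposition 1]{ChaBTO} (see also the proof of Proposition \ref{prop:purity}) that the full Hessian of $\varphi$ kills $T\,\mathrm{diag}(M^2)$ and equals $2\tilde g = 2(g\oplus g)$ on its orthogonal complement, a decomposition of $(u,v)\in T_{m_\nu}\Gamma_1\times T_{m_\nu}\Gamma_2$ as $(a,a)+(b,-b)$ with $a=(u+v)/2$, $b=(u-v)/2$ yields, after a short computation, that the restricted Hessian $Q_\nu$ satisfies $Q_\nu((u,v),(u,v)) = g(u-v,u-v)$. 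Transversality of $\Gamma_1$ and $\Gamma_2$ at $m_\nu$ ensures that $Q_\nu$ is positive definite.

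Choosing $g$-orthonormal bases $(e_p)$ of $T_{m_\nu}\Gamma_1$ and $(f_q)$ of $T_{m_\nu}\Gamma_2$, the matrix of $Q_\nu$ in the basis $((e_p,0),(0,f_q))$ reads
\[ \begin{pmatrix} I_n & -G_\nu \\ -G_\nu^{\top} & I_n \end{pmatrix}, \qquad (G_\nu)_{p,q} = g(e_p,f_q), \]
with determinant $\det(I_n - G_\nu^{\top}G_\nu) = \prod_{\ell=1}^n \sin^2(\theta_\ell(m_\nu))$ by Lemma \ref{lm:angle}. In these same coordinates $\sigma_1\otimes\sigma_2$ evaluates at $(m_\nu,m_\nu)$ to $f_1(m_\nu)f_2(m_\nu)\,dx\,dy$, $a_0(m_\nu,m_\nu)=1$ and $\varphi(m_\nu,m_\nu)=0$, so Laplace's method produces the local contribution
\[ \left(\frac{2\pi}{k}\right)^n \frac{(\sigma_1,\sigma_2)_{m_\nu}}{\prod_{\ell=1}^n\sin(\theta_\ell(m_\nu))} + \bigO{k^{-(n+1)}}, \]
and summing over $\nu$ gives the announced expansion. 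The only slightly delicate step is the identification of the restricted Hessian with $g(u-v,u-v)$ and of its determinant with the principal-angle product via Lemma \ref{lm:angle}; remarkably, this step only uses transversality of $\Gamma_1$ and $\Gamma_2$, not the Lagrangian hypothesis (which will come into play substantively only in the subsequent estimate of $\Tr((\rho_{k,1}\rho_{k,2})^2)$, via Lemma \ref{lm:det_symp}).
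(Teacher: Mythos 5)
Your argument is correct and follows essentially the same route as the paper's proof: the same localization near the intersection points via $|S|<1$ off the diagonal, the same reduction through the Bergman kernel asymptotics to a Laplace integral with phase $-2\log|S|$, and the same identification of the restricted Hessian (your intrinsic formula $Q_\nu((u,v),(u,v))=g(u-v,u-v)$ is a clean repackaging of the paper's entry-by-entry computation of the block matrix with blocks $I_n$ and $-G$) whose determinant is evaluated by Lemma \ref{lm:angle}. Your closing observation that only transversality, not the Lagrangian hypothesis, is used here matches the remark the paper makes right after this theorem.
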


\begin{proof}
By Equation (\ref{eq:trace}), this trace is given by the formula
\[ \Tr(\rho_{k,1} \rho_{k,2}) = \int_{\Gamma_1} \int_{\Gamma_2}  \frac{|\Pi_k(x,y)|^2}{|\Pi_k(x,x)| |\Pi_k(y,y)|} \sigma_1(x) \sigma_2(y). \]
The same argument that we used in the proof of Proposition \ref{prop:empty_int} shows that the integral over $x,y \in M \setminus \bigcup_{j=1}^p \Omega_{\nu}$, where $\Omega_{\nu}$ is a neighbourhood of the intersection point $m_{\nu}$, is a $\bigO{k^{-\infty}}$. Therefore, we only need to understand what the contribution of the integral
\[ I_{k,\nu} =  \int_{\Gamma_1 \cap \Omega_{\nu}} \int_{\Gamma_2 \cap \Omega_{\nu}}  \frac{|\Pi_k(x,y)|^2}{|\Pi_k(x,x)| |\Pi_k(y,y)|} \sigma_1(x) \sigma_2(y). \]
is, for every $\nu \in \llbracket 1,p \rrbracket$, and to sum up these contributions. Equation (\ref{eq:asympt_projector}) implies that
\[ I_{k,\nu} =  \left(\int_{\Gamma_1 \cap \Omega_{\nu}} \int_{\Gamma_2 \cap \Omega_{\nu}}  |S(x,y)|^{2k} a_0(x,y)^2 \sigma_1(x) \sigma_2(y)\right)\left(1 + \bigO{k^{-1}}\right). \]
By working with a smaller $\Omega_{\nu}$ if necessary, we may assume that $S$ does not vanish on $\Omega_{\nu} \times \Omega_{\nu}$, and define $\varphi = -2\log|S|$ on the latter. Then $I_{k,\nu} = J_{k,\nu}(1+\bigO{k^{-1}})$ with
\[ J_{k,\nu} =  \int_{\Gamma_1 \cap \Omega_{\nu}} \int_{\Gamma_2 \cap \Omega_{\nu}}  \exp(-k\varphi(x,y)) a_0(x,y)^2 \sigma_1(x) \sigma_2(y). \]
We will evaluate this integral by means of the stationary phase method. By taking a smaller $\Omega_{\nu}$ if necessary, we consider a local diffeomorphism $\eta: (\Omega_{\nu},m_{\nu}) \to (\Theta_{\nu},0) \subset \R^{2n}$ such that $\eta(\Gamma_1 \cap \Omega_{\nu}) = \{(u,v) \in \Theta_{\nu}| \ v = 0 \}$ and $ \eta(\Gamma_2 \cap \Omega_{\nu}) = \{(u, v) \in \Theta_{\nu}| \ u = 0 \}$. Let $\kappa_1, \kappa_2$ be such that $\kappa_1(u) = \eta^{-1}(u,0)$ and $\kappa_2(v) = \eta^{-1}(0,v)$. We have that
\[ J_{k,\nu} = \int_{\mathrm{pr}_1(\Theta_{\nu})} \int_{\mathrm{pr}_2(\Theta_{\nu})} \exp(-k\psi(u,v)) b_0(u,v)^2 \ \kappa_1^*\sigma_1(u) \  \kappa_2^*\sigma_2(v),  \]
where $\mathrm{pr}_i$, $i=1,2$ are the projections on the first and second factor of $\R^n \times \R^n$, where the phase reads $\psi(u,v) = \varphi(\kappa_1(u),\kappa_2(v))$ and the amplitude is given by the formula $b_0(u,v) = a_0(\kappa_1(u),\kappa_2(v))$. If $h_1,h_2$ are such that $\kappa_1^*\sigma_1 = h_1(u) du$ and $\kappa_2^*\sigma_2 = h_2(v) dv$ locally, then
\[ J_{k,\nu} = \int_{\mathrm{pr}_1(\Theta_{\nu})} \int_{\mathrm{pr}_2(\Theta_{\nu})} \exp(-k\psi(u,v)) b_0(u,v)^2 h_1(u) h_2(v) \ du \ dv.  \]
The phase $\psi$ is non-negative. Its differential is given by
\[ d\psi(u,v) \cdot (U,V) = d\varphi(\kappa_1(u),\kappa_2(v)) \cdot (d\kappa_1(u) \cdot U, d\kappa_2(v) \cdot V);   \]
therefore, because of \cite[Proposition 1]{ChaBTO}, the point $(u,v)$ is a critical point for $\psi$ if and only if $\kappa_1(u) = \kappa_2(v)$, thus if and only if $u=0=v$. Furthermore, $\psi(0,0) = 0$, and the second order differential of $\psi$ at the critical point $(0,0)$ reads
\[ d^2\psi(0,0) \cdot ((U,V),(X,Y)) = d^2\varphi(m_{\nu},m_{\nu}) \cdot \left( (d\kappa_1(0) \cdot U, d\kappa_2(0) \cdot V),(d\kappa_1(0) \cdot X, d\kappa_2(0) \cdot Y) \right). \]
We will prove that this bilinear form is positive definite. Let $(e_{\ell})_{1 \leq \ell \leq n}$ (respectively $(f_{\ell})_{1 \leq \ell \leq n}$) be an orthonormal basis (with respect to the restriction of $g_{m_{\nu}}$) of the subspace $T_{m_{\nu}} \Gamma_1 \subset T_{m_{\nu}} M$ (respectively $T_{m_{\nu}} \Gamma_2$). We define the vectors $U_{\ell} = (d\kappa_1(0))^{-1} \cdot e_{\ell}$ and $V_{\ell} = (d\kappa_2(0))^{-1} \cdot f_{\ell}$ of $\R^{n}$, for $1 \leq \ell \leq n$. By composing $\eta$ with a linear diffeomorphism if necessary, we may assume that $((U_{\ell},0)_{1 \leq \ell \leq n},(0,V_{\ell})_{1 \leq \ell \leq n})$ is the standard basis of $\R^{2n}$; let us compute the matrix $A$ of $d^2\psi(0,0)$ in this basis. We have that
\[ d^2\psi(0,0) \cdot ((U_{\ell},0),(U_p,0)) = d^2\varphi(m_{\nu},m_{\nu}) \cdot \left( (e_{\ell}, 0), (e_{p}, 0) \right). \]
By \cite[Proposition 1]{ChaBTO}, $d^2\varphi(m_{\nu},m_{\nu})$ has kernel $T_{(m_{\nu},m_{\nu})} \Delta$, where $\Delta$ is the diagonal of $M^2$, and its restriction to $(T_{(m_{\nu},m_{\nu})} \Delta)^{\perp}$ is equal to $2 \tilde{g}_{(m_{\nu},m_{\nu})}$, where we recall that $\tilde{g}$ is the K\"ahler metric on $M \times M$ induced by the symplectic form $\omega \oplus -\omega$ and complex structure $j \oplus -j$. But
\[ (e_{\ell},0) = \frac{1}{2} (e_{\ell},e_{\ell}) + \frac{1}{2} (e_{\ell},-e_{\ell}), \]
is the decomposition of $(e_{\ell},0)$ in the direct sum $T_{(m_{\nu},m_{\nu})} \Delta \oplus (T_{(m_{\nu},m_{\nu})} \Delta)^{\perp}$. Therefore
\[ d^2\psi(0,0) \cdot ((U_{\ell},0),(U_p,0)) = \tilde{g}_{(m_{\nu},m_{\nu})}((e_{\ell},-e_{\ell}),(e_p,0)) = g_{m_{\nu}}(e_{\ell},e_p) = \delta_{\ell,p}. \]
Similarly, $d^2\psi(0,0) \cdot ((0,V_{\ell}),(0,V_p)) = \delta_{\ell,p}$. Finally,
\[ d^2\psi(0,0) \cdot ((U_{\ell},0),(0,V_p)) = \tilde{g}_{(m_{\nu},m_{\nu})}((e_{\ell},-e_{\ell}),(0,f_{p})) = -g_{m_{\nu}}(e_{\ell},f_p), \]
so  $A$ is the block matrix
\[ A = \begin{pmatrix} \mathrm{Id} & -G \\ -G^{\top} & \mathrm{Id} \end{pmatrix}, \]
where $G$ is the $n \times n$ matrix with entries $G_{\ell,p} = g_{m_{\nu}}(e_{\ell},f_p)$. Thus its determinant satisfies $\det(A) = \det(\mathrm{Id} - G^{\top}G)$; hence, Lemma \ref{lm:angle} yields
\[\det(A) = \prod_{\ell = 1}^n \sin^2(\theta_{\ell}(m_{\nu})) > 0 \]
and the stationary phase lemma gives the estimate
\[ J_{k,\nu} = \left( \frac{2\pi}{k} \right)^{n} \frac{h_1(0)h_2(0)a_0(m_{\nu},m_{\nu})^2}{\prod_{\ell = 1}^n \sin(\theta_{\ell}(m_{\nu}))} + \bigO{k^{-(n+1)}}. \]  
We have that $a_0(m_{\nu},m_{\nu}) = 1$, and we claim that $h_1(0) h_2(0) = (\sigma_1,\sigma_2)_{m_{\nu}}$. Indeed, thanks to our choices, we know that $(\kappa_1^*\mu_{g,\Gamma_1})(0) = du$ and $\kappa_1^*\sigma_1 = h_1(u) du$; therefore $h_1(0) = f_1(m_{\nu})$, and similarly $h_2(0) = f_2(m_{\nu})$. We then obtain the result by summing up the contributions of all the intersection points $m_{\nu}$, $1 \leq \nu \leq s$. 
\end{proof}

\begin{rmk}
In this proof we have not used the fact that our submanifolds are Lagrangian, hence the result still holds without this assumption. We also believe that we could even drop the assumption that they are $n$-dimensional and consider instead two submanifolds of respective dimensions $d$ and $2n-d$ intersecting transversally at a finite number of points. Handling this case would require some care but in this setting, $d$ principal angles are still well-defined, and everything should work as if the $n-d$ others are taken to be equal to $\frac{\pi}{2}$. Nevertheless, as we will see below, the Lagrangian assumption is crucial in order to estimate the next term, so we chose to stick to the Lagrangian case for this first result.
\end{rmk}

\subsection{The term $\Tr((\rho_{k,1} \rho_{k,2})^2)$}

We can now estimate the trace of $(\rho_{k,1} \rho_{k,2})^2$, which is equal to
\[ \Tr\left( (\rho_{k,1} \rho_{k,2})^2 \right) = \int_{\Gamma_1} \int_{\Gamma_2} \int_{\Gamma_1} \int_{\Gamma_2} \Tr(P_k^{x_1} P_k^{x_2} P_k^{y_1} P_k^{y_2}) \ \sigma_1(x_1) \sigma_2(x_2) \sigma_1(y_1) \sigma_2(y_2)  .\]
A straightforward computation, similar to the one in the proof of Proposition \ref{prop:purity}, yields 
\[ \Tr(P_k^{x_1} P_k^{x_2} P_k^{y_1} P_k^{y_2}) = \frac{ \scal{\xi_k^{v_2}}{\xi_k^{v_1}} \scal{\xi_k^{v_1}}{\xi_k^{u_2}} \scal{\xi_k^{u_2}}{\xi_k^{u_1}} \scal{\xi_k^{u_1}}{\xi_k^{v_2}} }{ \|\xi_k^{u_1}\|^2 \|\xi_k^{u_2}\|^2 \|\xi_k^{v_1}\|^2 \|\xi_k^{v_2}\|^2} \]
where $u_i$ (respectively $v_i$), $i=1,2$ is any unit vector in $L_{x_i}$ (respectively $L_{y_i}$). This trace is therefore a $\bigO{k^{-\infty}}$ uniformly on $M^2 \setminus \bigcup_{\nu = 1}^s (\Omega_{\nu} \times \Omega_{\nu})$ where $\Omega_{\nu}$ is a neighbourhood of $m_{\nu}$ in $M$. Consequently, the only non negligible contributions to $\Tr\left( (\rho_{k,1} \rho_{k,2})^2 \right)$ come from the integrals 
\[ I_{k,\nu} = \int_{\Gamma_1 \cap \Omega_{\nu}} \int_{\Gamma_2 \cap \Omega_{\nu}} \int_{\Gamma_1 \cap \Omega_{\nu}} \int_{\Gamma_2 \cap \Omega_{\nu}} \Tr(P_k^{x_1} P_k^{x_2} P_k^{y_1} P_k^{y_2}) \ \sigma_1(x_1) \sigma_2(x_2) \sigma_1(y_1) \sigma_2(y_2),\]
for $\nu \in \llbracket 1, s \rrbracket$. In order to estimate the scalar products appearing in this integral, let $S$ be as in Equation (\ref{eq:asympt_projector}), let $t$ be a local section of $L$ over $\Omega_{\nu}$ with unit norm, let $\psi: \Omega_{\nu} \times \Omega_{\nu} \to \C$ be such that 
\[ S(x,y) = \exp(i\psi(x,y)) \ t(x) \otimes \overline{t(y)} \]
over $\Omega_{\nu} \times \Omega_{\nu}$, and set $u_i = t(x_i), v_i = t(y_i)$, $i=1,2$. We derive from {\cite[Section 5]{ChaBTO}} that
\[ \scal{\xi_k^{v_2}}{\xi_k^{v_1}} = \overline{t(y_1)}^k \cdot \Pi_k(y_1,y_2) \cdot t(y_2)^k =  \left( \frac{k}{2\pi} \right)^n \exp(ik\psi(y_1,y_2)) \left( a_0(y_1,y_2) + \bigO{k^{-1}} \right) \] 
uniformly on any compact subset of $\Omega_{\nu} \times \Omega_{\nu}$, and we obtain similar expressions for the other scalar products. Hence, $I_{k,\nu} = J_{k,\nu}\left( 1 + \bigO{k^{-1}} \right)$ with
\[ J_{k,\nu} = \int_{\Gamma_1 \cap \Omega_{\nu}} \int_{\Gamma_2 \cap \Omega_{\nu}} \int_{\Gamma_1 \cap \Omega_{\nu}} \int_{\Gamma_2 \cap \Omega_{\nu}} \exp(ik\Psi(x_1,x_2,y_1,y_2)) b_0(x_1,x_2,y_1,y_2)  \ \sigma_1(x_1) \sigma_2(x_2) \sigma_1(y_1) \sigma_2(y_2) \]
where the phase $\Psi$ is given by
\[ \Psi(x_1,x_2,y_1,y_2) = \psi(y_1,y_2) + \psi(x_2,y_1) + \psi(x_1,x_2) + \psi(y_2,x_1), \]
and $b_0(x_1,x_2,y_1,y_2) = a_0(y_1,y_2) a_0(x_2,y_1) a_0(x_1,x_2) a_0(y_2,x_1)$. Now, we introduce as in the proof of Theorem \ref{thm:trace} a local diffeomorphism $\eta: (\Omega_{\nu},m_{\nu}) \to (\Theta_{\nu},0) \subset \R^{2n}$ such that 
\[\eta(\Gamma_1 \cap \Omega_{\nu}) = \{(u,v) \in \Theta_{\nu}| \ v = 0 \}, \qquad \eta(\Gamma_2 \cap \Omega_{\nu}) = \{(u, v) \in \Theta_{\nu}| \ u = 0 \}, \]
and the functions $\kappa_i: \mathrm{pr}_i(\Theta_{\nu}) \to \Omega_{\nu}$ defined by the formulas $\kappa_1(u) = \eta^{-1}(u,0)$ and $\kappa_2(v) = \eta^{-1}(0,v)$ (here we recall that $\mathrm{pr}_i$, $i=1,2$ are the projections on the first and second factor of $\R^n \times \R^n$). We also introduce again the functions $h_1,h_2$ such that $\kappa_1^*\sigma_1 = h_1(u) du$ and $\kappa_2^*\sigma_2 = h_2(v) dv$ locally. Then 
\[ J_{k,\nu} = \int_{\mathrm{pr}_1(\Theta_{\nu})} \int_{\mathrm{pr}_2(\Theta_{\nu})} \int_{\mathrm{pr}_1(\Theta_{\nu})} \int_{\mathrm{pr}_2(\Theta_{\nu})} \exp(ik\Phi(u,v,w,z)) c_0(u,v,w,z) \ du \ dv \ dw \ dz\]
where the amplitude $c_0$ is given by
\[ c_0(u,v,w,z) = b_0(\kappa_1(u),\kappa_2(v),\kappa_1(w),\kappa_2(z)) h_1(u)h_2(v)h_1(w)h_2(z) \]
and the phase $\Phi$ reads $\Phi(u,v,w,z) = \Psi(\kappa_1(u),\kappa_2(v),\kappa_1(w),\kappa_2(z))$. We will estimate $J_{k,\nu}$ thanks to another application of the stationary phase method. The imaginary part of $\Phi$ is non-negative and vanishes only at the point $0 = (0,0,0,0)$. 

\paragraph{Computation of $d\Phi$ and critical points of $\Phi$.}

Let $\widetilde{\nabla}$ be the connection induced by $\nabla$ on the line bundle $L \boxtimes \overline{L}$.

\begin{lm}
Let $\alpha_S$ be the differential form defined by the equality $\widetilde{\nabla} S = -i \alpha_S \otimes S$ in a neighbourhood of the diagonal $\Delta$ of $M^2$ where $S$ does not vanish; then
\[ d\Phi(u,v,w,z) \cdot (U,V,W,Z) = f(w,z,W,Z) + f(u,v,U,V) + g(v,w,V,W) + g(z,u,Z,U) \]
where $f$ and $g$ are defined as $f(a,b,A,B) = -{\alpha_S}_{(\kappa_1(a),\kappa_2(b))}(d\kappa_1(a) \cdot A, d\kappa_2(b) \cdot B)$ and $g(a,b,A,B) = -{\alpha_S}_{(\kappa_2(a),\kappa_1(b))}(d\kappa_2(a) \cdot A, d\kappa_1(b) \cdot B)$.
\end{lm}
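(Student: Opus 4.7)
The plan is to reduce the computation to a local identity expressing $d\psi$ in terms of $\alpha_S$, then verify a cancellation due to the cyclic structure of $\Psi$. The key point is that $\psi$ is defined by the equation $S(x,y) = \exp(i\psi(x,y)) \, t(x) \otimes \overline{t(y)}$ on $\Omega_\nu \times \Omega_\nu$, and $\alpha_S$ encodes the covariant derivative of $S$, so we should be able to compare the two.

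First I would unpack the connection on $L \boxtimes \overline{L}$. Since $t$ is a local unit section of $L$ on $\Omega_\nu$, the fact that $h(t,t)=1$ forces $\nabla t = -i \beta \otimes t$ for some real 1-form $\beta$ on $\Omega_\nu$, and consequently $\nabla \overline{t} = i \beta \otimes \overline{t}$. The induced connection on $L \boxtimes \overline{L}$ applied to $t \boxtimes \overline{t}$ therefore gives $\widetilde{\nabla}(t \boxtimes \overline{t}) = (-i p_1^*\beta + i p_2^*\beta) \otimes (t \boxtimes \overline{t})$. Applying the Leibniz rule to $S = \exp(i\psi) \, t \boxtimes \overline{t}$ and comparing with the defining equation $\widetilde{\nabla} S = -i\alpha_S \otimes S$ yields
\[ d\psi = -\alpha_S + p_1^*\beta - p_2^*\beta \]
on $\Omega_\nu \times \Omega_\nu$. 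Equivalently, for tangent vectors $X$ at $x$ and $Y$ at $y$, $d\psi_{(x,y)}(X,Y) = -{\alpha_S}_{(x,y)}(X,Y) + \beta_x(X) - \beta_y(Y)$.

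Next I would differentiate $\Phi(u,v,w,z) = \psi(\kappa_1(w),\kappa_2(z)) + \psi(\kappa_2(v),\kappa_1(w)) + \psi(\kappa_1(u),\kappa_2(v)) + \psi(\kappa_2(z),\kappa_1(u))$ by the chain rule and substitute the expression above for $d\psi$ in each of the four summands. The $-\alpha_S$ contributions are precisely, by the stated definitions, $f(w,z,W,Z)$, $g(v,w,V,W)$, $f(u,v,U,V)$, and $g(z,u,Z,U)$, so these terms reproduce the right-hand side of the lemma.

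It then remains to verify that the contributions of $\beta$ cancel. Writing them out, the four summands yield, respectively,
\[ \bigl[\beta_{\kappa_1(w)}(d\kappa_1(w)W) - \beta_{\kappa_2(z)}(d\kappa_2(z)Z)\bigr] + \bigl[\beta_{\kappa_2(v)}(d\kappa_2(v)V) - \beta_{\kappa_1(w)}(d\kappa_1(w)W)\bigr] \]
\[ + \bigl[\beta_{\kappa_1(u)}(d\kappa_1(u)U) - \beta_{\kappa_2(v)}(d\kappa_2(v)V)\bigr] + \bigl[\beta_{\kappa_2(z)}(d\kappa_2(z)Z) - \beta_{\kappa_1(u)}(d\kappa_1(u)U)\bigr], \]
and each term appears exactly twice with opposite signs, so the sum vanishes. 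This telescoping is the manifestation of the fact that the four arguments of $\Psi$ form a closed cycle $(x_1,x_2),(x_2,y_1),(y_1,y_2),(y_2,x_1)$, which is what guarantees that the gauge-dependent $\beta$-terms drop out, leaving an intrinsic expression in terms of $\alpha_S$. The main (mild) obstacle is just being careful about the sign conventions for $\widetilde{\nabla}$ and the order of arguments of $\psi$ in each summand; once those are aligned, the cancellation is immediate.
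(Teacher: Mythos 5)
Your proof is correct and follows essentially the same route as the paper's: derive $d\psi = -\alpha_S + p_1^*\beta - p_2^*\beta$ from the local expression of $S$ and the unitarity of $t$ (the paper writes $\nabla t = \beta\otimes t$ and then shows $\beta$ is purely imaginary, which is your convention up to relabeling), then observe that the gauge terms telescope around the cycle while the $-\alpha_S$ terms give exactly $f$ and $g$. No gaps.
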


\begin{proof}
We start from the local expression $S(x,y) = \exp(i\psi(x,y)) t(x) \otimes \overline{t(y)}$, which yields
\[ \widetilde{\nabla}S = i d\psi \otimes  S + \exp(i \psi) \widetilde{\nabla}(t(x) \otimes \overline{t(y)}).  \]
In order to compute the second term, we introduce the local differential form $\beta$ such that $\nabla t = \beta \otimes t$. Then $ \widetilde{\nabla}S = \left(i d\psi + p_1^*\beta + p_2^*\bar{\beta} \right)\otimes  S$, where $p_1, p_2$ are the projections on the first and second factor of $M \times M$. This means that $-i\alpha_S = i d\psi + p_1^*\beta + p_2^*\bar{\beta}$. We claim that there exists a real-valued form $\gamma$ such that $\beta = i \gamma$; indeed, 
\[ 0 = dh(t,t) = h(\nabla t,t) + h(t, \nabla t) = \beta + \bar{\beta} \]
since $\nabla$ and $h$ are compatible. Consequenly, $d\psi = - \alpha_S - p_1^*\gamma + p_2^*\gamma$. Now, the quantity $d\Phi(u,v,w,z) \cdot (U,V,W,Z)$ is the sum of the following four terms:
\[ d\psi((\kappa_1(w),\kappa_2(z))) \cdot (d\kappa_1(w) \cdot W, d\kappa_2(z) \cdot Z), \quad d\psi((\kappa_2(v),\kappa_1(w))) \cdot (d\kappa_2(v) \cdot V, d\kappa_1(w) \cdot W),\]
\[ d\psi((\kappa_1(u),\kappa_2(v))) \cdot (d\kappa_1(u) \cdot U, d\kappa_2(v) \cdot V), \quad d\psi((\kappa_2(z),\kappa_1(u))) \cdot (d\kappa_2(z) \cdot Z, d\kappa_1(u) \cdot U). \]
The quantity $-\gamma_{\kappa_1(w)}(d\kappa_1(w) \cdot W)$ coming from the first term cancels the quantity $\gamma_{\kappa_1(w)}(d\kappa_1(w) \cdot W)$ coming from the second one, and so on.
\end{proof}

Since $\alpha_S$ vanishes on the diagonal (see \cite[Proposition 1]{ChaBTO} and \cite[Lemma 4.3]{ChaHalf}), an immediate corollary of this result is that the differential of $\Phi$ vanishes at $(0,0,0,0)$.

\paragraph{Computation of the determinant of the Hessian of $\Phi$ at the critical point.}

Let $\overline{M}$ be $M$ endowed with the symplectic structure $-\omega$ and the complex structure $-j$; $M \times \overline{M}$ is equipped with the symplectic form $\tilde{\omega} = \omega \oplus - \omega = p_1^* \omega - p_2^* \omega$ with $p_1, p_2$ the natural projections. Similarly, $\tilde{\text{\j}}$ denotes the complex structure $j \oplus -j$ on $M \times \overline{M}$.

Having in mind \cite[Lemma 4.3]{ChaHalf} (or \cite[Section 2.6]{Cha_symp} in a more general setting), we introduce the section $B_S$ of $\left(T^*(M \times \overline{M}) \otimes T^*(M \times \overline{M}) \right) \otimes \C \to \Delta$ such that for any vector fields $X, Y$ of $M \times \overline{M}$, $\derLie_X (\alpha_S(Y)) = B_S(X,Y)$ along $\Delta$; if we set $C = d\kappa_1(0)$ and $D = d\kappa_2(0)$, then for $\mathcal{U} = (U,V,W,Z)$ and $\mathcal{V} = (\hat{U},\hat{V},\hat{W},\hat{Z})$,
\[ d^2\Phi(0) \cdot \left(\mathcal{U}, \mathcal{V} \right) = f(W,Z,\hat{W},\hat{Z}) + f(U,V,\hat{U},\hat{V}) + g(V,W,\hat{V},\hat{W}) + g(Z,U,\hat{Z},\hat{U}) \] 
where $f$ and $g$ read $f(X_1,X_2,X_3,X_4) = - {B_S}_{(m_{\nu},m_{\nu})}((C \cdot X_1, D \cdot X_2),(C \cdot X_3, D \cdot X_4))$ and $g(X_1,X_2,X_3,X_4) = - {B_S}_{(m_{\nu},m_{\nu})}((D \cdot X_1, C \cdot X_2),(D \cdot X_3, C \cdot X_4))$. As before, we consider an orthonormal basis $(e_{\ell})_{1 \leq \ell \leq n}$ (respectively $(f_{\ell})_{1 \leq \ell \leq n}$) of the subspace $T_{m_{\nu}} \Gamma_1 \subset T_{m_{\nu}} M$ (respectively $T_{m_{\nu}} \Gamma_2$), and we assume that the vectors $U_{\ell} = C^{-1} \cdot e_{\ell}, V_{\ell} = D^{-1} \cdot f_{\ell}$ of $\R^{n}$, for $1 \leq \ell \leq n$ are such that $((U_{\ell},0)_{1 \leq \ell \leq n},(0,V_{\ell})_{1 \leq \ell \leq n})$ is the standard basis of $\R^{2n}$. Let $G, \Xi$ be the $n \times n$ matrices with entries $G_{p,q} = g_{m_{\nu}}(e_p,f_q)$ and $\Xi_{p,q} = \omega_{m_{\nu}}(e_p,f_q)$.

\begin{lm}
In the basis $(U_{\ell},0,0,0)_{1 \leq \ell \leq n},(0,0,U_{\ell},0)_{1 \leq \ell \leq n}$, $(0,V_{\ell},0,0)_{1 \leq \ell \leq n},(0,0,0,V_{\ell})_{1 \leq \ell \leq n}$ of $\R^{4n}$, the matrix of $d^2\Phi(0)$ is the block matrix 
\[ H = \begin{pmatrix} i I_{2n} & A \\ A^{\top} & i I_{2n} \end{pmatrix}, \qquad A = \frac{1}{2} \begin{pmatrix} -\Xi - iG & \Xi- iG \\ \Xi - iG & -\Xi - iG \end{pmatrix}.\]
\end{lm}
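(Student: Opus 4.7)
The plan is a direct, case-by-case evaluation of the matrix of $d^2\Phi(0)$ in the specified basis, using the formula
\[ d^2\Phi(0) \cdot \left(\mathcal{U}, \mathcal{V} \right) = f(W,Z,\hat{W},\hat{Z}) + f(U,V,\hat{U},\hat{V}) + g(V,W,\hat{V},\hat{W}) + g(Z,U,\hat{Z},\hat{U}) \]
stated just above the lemma, together with the explicit form of $B_S|_{(m_\nu, m_\nu)}$ from \cite[Lemma 4.3]{ChaHalf}.

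First, I would recall that $B_S|_{(m_\nu, m_\nu)}$ is a bilinear form on $T_{(m_\nu, m_\nu)}(M \times \overline{M})$ whose values on the three relevant types of pairs---$((X,0),(Y,0))$, $((0,X),(0,Y))$, and the mixed $((X,0),(0,Y))$ or $((0,X),(Y,0))$---reduce to explicit linear combinations of $g_{m_\nu}(X,Y)$ and $\omega_{m_\nu}(X,Y)$. The $g$-contributions come with factors of the form $\pm i/2$ (consistent with the transverse Hessian of $\varphi = -2\log|S|$ being $2\tilde g$), while the $\omega$-contributions come with factors $\pm 1/2$.

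Next, I would go through the sixteen pairings of basis vectors, classified by which of the slots $U$, $V$, $W$, $Z$ carries the nonzero entry in each argument. Because each of the four summands in $d^2\Phi(0)$ requires two prescribed nonzero slots to contribute, most of the sixteen cases collapse to zero automatically; in particular, all ``$U$-type vs.\ $W$-type'' and ``$V$-type vs.\ $Z$-type'' pairings vanish, accounting for the zero off-diagonal sub-block inside each of the two $2n \times 2n$ diagonal blocks of $H$. For the surviving same-slot entries, e.g.\ $((U_\ell,0,0,0),(U_p,0,0,0))$, only $f(U,V,\hat U,\hat V)$ and $g(Z,U,\hat Z,\hat U)$ contribute and produce
\[ -B_S((e_\ell,0),(e_p,0)) - B_S((0,e_\ell),(0,e_p)). \]
The Lagrangian hypothesis for $\Gamma_1$ gives $\omega_{m_\nu}(e_\ell,e_p) = 0$, killing the $\omega$-parts of $B_S$, and what remains is $i\, g(e_\ell,e_p) = i\delta_{\ell p}$; the analogous argument in the $W$-block uses $\omega_{m_\nu}(f_\ell,f_p) = 0$, yielding the top-left $iI_{2n}$ block, and the bottom-right block is handled identically.

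For the off-diagonal $A$ block, each mixed pairing, e.g.\ $((U_\ell,0,0,0),(0,V_p,0,0))$, leaves a single surviving term $-B_S((e_\ell,0),(0,f_p))$, which the formula for $B_S$ evaluates to $\tfrac{1}{2}(-\Xi - iG)_{\ell p}$; the other three sub-blocks of $A$ come from the other three mixed pairings and are computed identically. The sign pattern $(-\Xi,\Xi;\Xi,-\Xi)$ in the $\omega$-contributions emerges from the antisymmetry of $\omega$ combined with the order in which the slots $x_1,x_2,y_1,y_2$ appear in the four summands of the phase $\Psi(x_1,x_2,y_1,y_2)$, while the uniform $-iG$ reflects the symmetry of $g$. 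The main obstacle is purely the bookkeeping across these sixteen cases and the careful tracking of signs from the definitions of $f,g$ and the antisymmetry of $\omega$; the Lagrangian hypothesis is used at precisely the right places to ensure that the same-slot blocks collapse to $iI_{2n}$ rather than picking up additional $\Xi$-type contributions.
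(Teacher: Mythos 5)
Your proposal is correct and follows essentially the same route as the paper: a term-by-term evaluation of the sixteen pairings using the four-summand formula for $d^2\Phi(0)$, with the explicit values of $B_S$ on the diagonal (the $\pm\tfrac{1}{2}\omega$ and $\pm\tfrac{i}{2}g$ contributions, which the paper derives via the projection $q$ onto $T^{0,1}(M\times\overline{M})$ with kernel $T\Delta\otimes\C$ and the identity $B_S(X,Y)=\tilde{\omega}(q(X),Y)$ from the same reference you cite), and with the Lagrangian hypothesis invoked exactly where you place it to reduce the same-slot blocks to $iI_{2n}$. The vanishing pattern, the surviving terms, and the computed values of the sub-blocks of $A$ all agree with the paper's proof.
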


\begin{proof}
It is clear from the above expression of $d^2\Phi(0)$ that  
\[d^2\Phi(0) \cdot ((U_p,0,0,0),(0,0,U_q,0)) = 0 = d^2\Phi(0) \cdot ((0,V_p,0,0),(0,0,0,V_q)). \]
In order to compute the other terms, we introduce the projection $q$ from $T_x(M \times \overline{M}) \otimes \C$ onto $T_x^{0,1}(M \times \overline{M})$ with kernel $T_x \Delta \otimes \C$, so that $B_S(X,Y) = \tilde{\omega}(q(X),Y)$ (see for instance \cite[Lemma 4.3]{ChaHalf} or \cite[Proposition 2.15]{Cha_symp}). We need to compute $q(e_{\ell},0)$ and $q(0,e_{\ell})$ (and similarly for $f_{\ell}$). So we look for $X,Y \in T_{m_{\nu}}M$ and $Z \in T_{m_{\nu}}M \otimes \C$ such that 
\[ (e_{\ell},0) = (X,Y) + \tilde{\text{\j}}(X,Y) + (Z,Z) = (X+ijX+Z,Y-ijY+Z), \]
in which case $q(e_{\ell},0) = (X+ijX,Y-ijY)$. A straightforward computation shows that $2X = -2Y = e_{\ell}$ and $Z = X-ijX$, hence $q(e_{\ell},0) = \frac{1}{2} \left( e_{\ell} + ij e_{\ell}, - e_{\ell} + i j e_{\ell} \right)$. We obtain in a similar fashion that $q(0,e_{\ell}) = \frac{1}{2} \left( -e_{\ell} - ij e_{\ell}, e_{\ell} - i j e_{\ell} \right)$. Now, we have that
\[ d^2\Phi(0) \cdot ((U_p,0,0,0),(U_q,0,0,0)) = - {\tilde{\omega}}_{(m_{\nu},m_{\nu})}(q(e_p,0),(e_q,0)) - {\tilde{\omega}}_{(m_{\nu},m_{\nu})}(q(0,e_p),(0,e_q)). \]
The first term satisfies
\[ {\tilde{\omega}}_{(m_{\nu},m_{\nu})}(q(e_p,0),(e_q,0)) = \frac{1}{2} \omega_{m_{\nu}}(e_p + ij e_p,e_q) = \frac{1}{2}  \omega_{m_{\nu}}(e_p,e_q) + \frac{i}{2} \omega_{m_{\nu}}(j e_p,e_q); \]
since $T_{m_{\nu}} \Gamma_1$ is Lagrangian, $\omega_{m_{\nu}}(e_p,e_q) = 0$, and finally 
\[ {\tilde{\omega}}_{(m_{\nu},m_{\nu})}(q(e_p,0),(e_q,0)) = -\frac{i}{2} g_{m_{\nu}}(e_p,e_q) = -\frac{i}{2} \delta_{p,q}. \]
A similar computation shows that ${\tilde{\omega}}_{(m_{\nu},m_{\nu})}(q(0,e_p),(0,e_q)) = -\frac{i}{2} \delta_{p,q}$. Therefore,
\[ d^2\Phi(0) \cdot ((U_p,0,0,0),(U_q,0,0,0)) = i \delta_{p,q}. \]
We find the same result for $d^2\Phi(0) \cdot ((0,V_p,0,0),(0,V_q,0,0))$, $d^2\Phi(0) \cdot ((0,0,U_p,0),(0,0,U_q,0))$ and $d^2\Phi(0) \cdot ((0,0,0,V_p),(0,0,0,V_q))$. Combining this with the previous result, we obtain that the two diagonal blocks of $H$ are equal to $i I_{2n}$. Now, we have that
\[  d^2\Phi(0) \cdot ((U_p,0,0,0),(0,V_q,0,0)) =  -\tilde{\omega}_{(m_{\nu},m_{\nu})}(q(e_p,0),(0,f_q)); \]
but we also have that
\[ \tilde{\omega}_{(m_{\nu},m_{\nu})}(q(e_p,0),(0,f_q)) = -\frac{1}{2} \omega_{m_{\nu}}(-e_p + ij e_p,f_q) = \frac{1}{2} \left( \omega_{m_{\nu}}(e_p,f_q) - i \omega_{m_{\nu}}(je_p,f_q) \right). \]
So we finally obtain that 
\[  d^2\Phi(0) \cdot ((U_p,0,0,0),(0,V_q,0,0)) =  -\frac{1}{2} \left( \omega_{m_{\nu}}(e_p,f_q) + i g_{m_{\nu}}(e_p,f_q) \right). \]
We also immediately deduce from this that
\[  d^2\Phi(0) \cdot ((0,0,U_p,0),(0,0,0,V_q)) =  -\frac{1}{2} \left( \omega_{m_{\nu}}(e_p,f_q) + i g_{m_{\nu}}(e_p,f_q) \right). \]
Finally, we derive
\[ d^2\Phi(0) \cdot ((U_p,0,0,0),(0,0,0,V_q)) = \frac{1}{2} \left( \omega_{m_{\nu}}(e_p,f_q) - i g_{m_{\nu}}(e_p,f_q) \right) \]
from a similar computation. The same holds for $d^2\Phi(0) \cdot ((0,0,U_p,0),(0,V_q,0,0))$.
\end{proof}

This result yields $\det(-iH) = \det(I_{2n} + A^{\top}A)$. But one readily checks that 
\[  I_{2n} + A^{\top}A=  I_{2n} + \frac{1}{2} \begin{pmatrix} \Xi^{\top}\Xi - G^{\top}G & -\Xi^{\top}\Xi - G^{\top}G  \\ -\Xi^{\top}\Xi - G^{\top}G & \Xi^{\top}\Xi - G^{\top}G \end{pmatrix} =  \begin{pmatrix} P & Q \\ Q & P \end{pmatrix}, \]
where the matrices $P$ and $Q$ are given by
\[ P =  I_n + \frac{1}{2}( \Xi^{\top}\Xi - G^{\top}G), \qquad Q =  -\frac{1}{2}(\Xi^{\top}\Xi + G^{\top}G). \]
Therefore,we finally obtain that 
\[ \det(-iH) = \det(P + Q) \det(P-Q) = \det\left(I_n - G^{\top}G\right) \det\left(I_n + \Xi^{\top} \Xi\right). \]
It follows from Lemma \ref{lm:angle} that $\det(I_n - G^{\top}G) = \prod_{\ell = 1}^n \sin^2(\theta_{\ell})$, and from Lemma \ref{lm:det_symp} that $\det(I_n + \Xi^{\top} \Xi) = \prod_{\ell = 1}^n \left(1 + \sin^2(\theta_{\ell})\right)$, where $0 < \theta_1 \leq \ldots \leq \theta_n \leq \pi/2$ are the principal angles between $T_{m_{\nu}} \Gamma_1$ and $T_{m_{\nu}} \Gamma_2$. Consequently,
\[ \det(-iH) = \prod_{\ell = 1}^n \sin^2(\theta_{\ell}) \left(1 + \sin^2(\theta_{\ell})\right) > 0. \]

An application of the stationary phase lemma, as in the proof of Theorem \ref{thm:trace}, yields the following result.

\begin{thm}
\label{thm:trace_square}
We have the following estimate:
\[ \Tr\left((\rho_{k,1}\rho_{k,2})^2\right) = \left(\frac{2\pi}{k}\right)^{2n} \left( \sum_{\nu=1}^s \frac{(\sigma_1,\sigma_2)_{m_{\nu}}^2}{\prod_{\ell = 1}^n \sin(\theta_{\ell}(m_{\nu})) \sqrt{1 + \sin^2(\theta_{\ell}(m_{\nu}))}}   \right) + \bigO{k^{-(2n+1)}}, \]
see Definition \ref{dfn:thetas} and Equation (\ref{eq:constant_sigma}) for notation.
\end{thm}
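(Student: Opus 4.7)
Proof proposal for Theorem \ref{thm:trace_square}: essentially all of the geometric work has already been carried out in the preceding discussion, and what remains is to harvest the stationary phase expansion of each local integral $J_{k,\nu}$ and add them up. The plan is as follows.

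First, I would localize as in the previous subsection: the operator trace is expanded as a quadruple integral of $\Tr(P_k^{x_1}P_k^{x_2}P_k^{y_1}P_k^{y_2})$ against $\sigma_1 \otimes \sigma_2 \otimes \sigma_1 \otimes \sigma_2$, and off-diagonal terms are $\bigO{k^{-\infty}}$ because of the estimate $|S(x,y)|<1$ for $x \neq y$; only the neighborhoods $\Omega_\nu \times \Omega_\nu \times \Omega_\nu \times \Omega_\nu$ of the quadruply-diagonal points $(m_\nu,m_\nu,m_\nu,m_\nu)$ contribute. After the change of variables by $\eta$, each contribution $I_{k,\nu} = J_{k,\nu}(1+\bigO{k^{-1}})$ is the $4n$-dimensional oscillatory integral with complex phase $\Phi$ and amplitude $c_0(u,v,w,z) = b_0(\kappa_1(u),\kappa_2(v),\kappa_1(w),\kappa_2(z))\, h_1(u)h_2(v)h_1(w)h_2(z)$ already written down above.

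Next, I would apply the complex-phase stationary phase lemma (for instance H\"ormander, Theorem $7.7.5$, in its version for phases with non-negative imaginary part having an isolated non-degenerate critical point). We have verified that $\mathrm{Im}\,\Phi \geq 0$ vanishes only at $(0,0,0,0)$, that $d\Phi(0) = 0$, and that the Hessian $H$ of $\Phi$ at $0$ satisfies
\[ \det(-iH) = \prod_{\ell=1}^n \sin^2(\theta_\ell(m_\nu))\bigl(1+\sin^2(\theta_\ell(m_\nu))\bigr) > 0. \]
Since this determinant is real and positive, the square root appearing in the stationary phase formula is unambiguously $\prod_{\ell=1}^n \sin(\theta_\ell(m_\nu))\sqrt{1+\sin^2(\theta_\ell(m_\nu))}$. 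Because the integration variables live in $\R^{4n}$, the stationary phase lemma provides a leading factor of $(2\pi/k)^{2n}$ together with a remainder $\bigO{k^{-(2n+1)}}$.

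Finally, I would evaluate the amplitude at the critical point. Using $a_0(m_\nu,m_\nu)=1$ we get $b_0(m_\nu,m_\nu,m_\nu,m_\nu) = a_0(m_\nu,m_\nu)^4 = 1$; and from the argument already used in the proof of Theorem \ref{thm:trace}, the chosen coordinates satisfy $h_1(0) = f_1(m_\nu)$ and $h_2(0)=f_2(m_\nu)$, so that $c_0(0) = h_1(0)^2 h_2(0)^2 = (\sigma_1,\sigma_2)_{m_\nu}^2$. Putting these ingredients together yields
\[ J_{k,\nu} = \left(\frac{2\pi}{k}\right)^{2n} \frac{(\sigma_1,\sigma_2)_{m_\nu}^2}{\prod_{\ell=1}^n \sin(\theta_\ell(m_\nu))\sqrt{1+\sin^2(\theta_\ell(m_\nu))}} + \bigO{k^{-(2n+1)}}, \]
and summing the contributions $I_{k,\nu} = J_{k,\nu}(1+\bigO{k^{-1}})$ over $\nu \in \llbracket 1,s\rrbracket$ proves the theorem. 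The only points requiring a bit of attention are bookkeeping the branch of the square root in the stationary phase expansion and checking that the subleading $(1+\bigO{k^{-1}})$ prefactor produces a remainder compatible with $\bigO{k^{-(2n+1)}}$; the genuinely hard work has already been done in identifying the kernel and transverse Hessian of $\Phi$ and in computing $\det(-iH)$ via Lemmas \ref{lm:angle} and \ref{lm:det_symp}.
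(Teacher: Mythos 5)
Your proposal is correct and follows essentially the same route as the paper: all the substantive work (localization near the points $(m_\nu,m_\nu,m_\nu,m_\nu)$, the identification of the unique critical point of $\Phi$, and the computation of $\det(-iH)$ via Lemmas \ref{lm:angle} and \ref{lm:det_symp}) is done in the preceding discussion, and the paper's proof is precisely the single application of the complex stationary phase lemma that you describe, with the same evaluation of the amplitude $c_0(0)=(\sigma_1,\sigma_2)_{m_\nu}^2$ and the same $(2\pi/k)^{2n}$ prefactor from the $4n$ integration variables. The two points you flag as needing attention (the branch of the square root, which is fixed by positivity here, and absorbing the $(1+\bigO{k^{-1}})$ prefactor into the $\bigO{k^{-(2n+1)}}$ remainder) are handled implicitly in the paper in exactly the way you indicate.
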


\subsection{Proof of Theorem \ref{thm:subfid}}

The statement of Theorem \ref{thm:subfid} is a direct consequence of Theorems \ref{thm:trace} and \ref{thm:trace_square}. Recall that $ E(\rho_{k,1},\rho_{k,2}) = \Tr(\rho_{k,1} \rho_{k,2}) + \sqrt{2} \sqrt{\Tr(\rho_{k,1}\rho_{k,2})^2 - \Tr((\rho_{k,1}\rho_{k,2})^2)}$. By Theorem \ref{thm:trace},
\[ \Tr(\rho_{k,1} \rho_{k,2}) = \left(\frac{2\pi}{k}\right)^n \left( \sum_{\nu=1}^s \frac{(\sigma_1,\sigma_2)_{m_{\nu}}}{\prod_{\ell = 1}^n \sin(\theta_{\ell}(m_{\nu}))} \right) + \bigO{k^{-(n+1)}}, \]
which implies that 
\[\Tr(\rho_{k,1} \rho_{k,2})^2 = \left(\frac{2\pi}{k}\right)^{2n} \left( \sum_{\nu=1}^s \frac{(\sigma_1,\sigma_2)_{m_{\nu}}}{\prod_{\ell = 1}^n \sin(\theta_{\ell}(m_{\nu}))} \right)^2 + \bigO{k^{-(2n+1)}}. \]
By expanding the square of the sum as 
\[ \left( \sum_{\nu=1}^s \frac{(\sigma_1,\sigma_2)_{m_{\nu}}}{\prod_{\ell = 1}^n \sin(\theta_{\ell}(m_{\nu}))} \right)^2 =  \sum_{\nu=1}^s \frac{(\sigma_1,\sigma_2)_{m_{\nu}}^2}{\prod_{\ell = 1}^n \sin^2(\theta_{\ell}(m_{\nu}))}  + \sum_{\substack{\nu=1}}^s \sum_{\substack{\mu=1 \\ \mu \neq \nu}}^s \frac{(\sigma_1,\sigma_2)_{m_{\nu}} (\sigma_1,\sigma_2)_{m_{\mu}}}{\prod_{\ell = 1}^n \sin(\theta_{\ell}(m_{\nu})) \sin(\theta_{\ell}(m_{\mu}))}, \]
and by using the result of Theorem \ref{thm:trace_square} and the fact that $\sqrt{u_k + \bigO{k^{-1}}} = \sqrt{u_k} + \bigO{k^{-1}}$ whenever $u_k \geq 0$, we obtain the desired expression.

\subsection{Super-fidelity}

Using the previous results, it is now quite easy to estimate the super-fidelity of the states $\rho_{k,1}$ and $\rho_{k,2}$ attached to $(\Gamma_1,\sigma_1)$ and $(\Gamma_2,\sigma_2)$. We introduce as in the statement of Proposition \ref{prop:purity} the functions $f_j$, $j=1,2$ such that $\sigma_j = f_j \mu_{g,\Sigma_j}$ where $\mu_{g,\Sigma_j}$ is the Riemannian volume on $\Sigma_j$ corresponding to the Riemannian metric induced by $g$ on $\Sigma_j$. 

\begin{thm}
\label{thm:superfid}
The super-fidelity of $\rho_{k,1}$ and $\rho_{k,2}$ satisfies:
\[ G(\rho_{k,1},\rho_{k,2}) = 1 -  \frac{1}{2} \left(\frac{2\pi}{k}\right)^{\frac{n}{2}}  \left( \int_{\Gamma_1} f_1 \sigma_1 + \int_{\Gamma_2} f_2 \sigma_2 \right) + \bigO{k^{-\min\left(n,\frac{n}{2} + 1\right)}}. \]
\end{thm}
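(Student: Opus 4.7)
The plan is to assemble the estimate directly from the definition of super-fidelity and the asymptotics already established in Theorem \ref{thm:trace} and Proposition \ref{prop:purity}. Recall from Equation (\ref{eq:super_fid}) that
\[ G(\rho_{k,1},\rho_{k,2}) = \Tr(\rho_{k,1}\rho_{k,2}) + \sqrt{\bigl(1-\Tr(\rho_{k,1}^2)\bigr)\bigl(1-\Tr(\rho_{k,2}^2)\bigr)}. \]
So I only need to collect three asymptotic ingredients and combine them.

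First I would apply Proposition \ref{prop:purity} to each of $\rho_{k,j}$, $j=1,2$, obtaining
\[ \Tr(\rho_{k,j}^2) = \Bigl(\frac{2\pi}{k}\Bigr)^{n/2}\Bigl(\int_{\Gamma_j} f_j\,\sigma_j + \bigO{k^{-1}}\Bigr), \]
which is of order $k^{-n/2}$. Setting $a_{k,j} := \Tr(\rho_{k,j}^2)$, I then have $1-a_{k,j} = 1 - \bigO{k^{-n/2}}$, and in particular
\[ (1-a_{k,1})(1-a_{k,2}) = 1 - (a_{k,1} + a_{k,2}) + a_{k,1}a_{k,2}. \]
Here $a_{k,1}+a_{k,2} = \bigl(\tfrac{2\pi}{k}\bigr)^{n/2}\bigl(\int_{\Gamma_1} f_1\sigma_1 + \int_{\Gamma_2} f_2\sigma_2\bigr) + \bigO{k^{-(n/2+1)}}$ and the product $a_{k,1}a_{k,2} = \bigO{k^{-n}}$.

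Next I would apply the Taylor expansion $\sqrt{1-x} = 1 - \tfrac{1}{2}x + \bigO{x^2}$ with $x = (a_{k,1}+a_{k,2}) - a_{k,1}a_{k,2} = \bigO{k^{-n/2}}$, so that $x^2 = \bigO{k^{-n}}$. This yields
\[ \sqrt{(1-a_{k,1})(1-a_{k,2})} = 1 - \frac{1}{2}\Bigl(\frac{2\pi}{k}\Bigr)^{n/2}\Bigl(\int_{\Gamma_1} f_1\sigma_1 + \int_{\Gamma_2} f_2\sigma_2\Bigr) + \bigO{k^{-\min(n,\, n/2 + 1)}}, \]
where the remainder absorbs both the $k^{-1}$ correction inside each purity (contributing $k^{-(n/2+1)}$) and the quadratic term from the Taylor expansion together with $a_{k,1}a_{k,2}$ (both $k^{-n}$).

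Finally, Theorem \ref{thm:trace} gives $\Tr(\rho_{k,1}\rho_{k,2}) = \bigO{k^{-n}}$, which is absorbed into the remainder $\bigO{k^{-\min(n,n/2+1)}}$. Adding this to the expansion of the square root produces exactly the stated formula. There is no real obstacle: the only point demanding care is the bookkeeping of remainder orders, in particular checking that the $k^{-n/2-1}$ term coming from the $\bigO{k^{-1}}$ inside Proposition \ref{prop:purity} competes with the $k^{-n}$ term from the Taylor remainder, which is why the final error is $\bigO{k^{-\min(n,n/2+1)}}$ (equal to $k^{-n}$ if $n\le 2$ and to $k^{-(n/2+1)}$ if $n\ge 2$).
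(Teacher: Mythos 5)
Your proposal is correct and follows essentially the same route as the paper: apply Proposition \ref{prop:purity} to each purity, expand the product, use $\sqrt{1-x}=1-\tfrac{x}{2}+\bigO{x^2}$, and absorb $\Tr(\rho_{k,1}\rho_{k,2})=\bigO{k^{-n}}$ from Theorem \ref{thm:trace} into the remainder. The bookkeeping of the error terms, including the identification of $\min(n,\tfrac{n}{2}+1)$, matches the paper's argument exactly.
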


\begin{proof}
Recall that $ G(\rho_{k,1},\rho_{k,2}) = \Tr(\rho_{k,1} \rho_{k,2}) + \sqrt{\left(1 - \Tr\left(\rho_{k,1}^2\right) \right) \left( 1 - \Tr\left(\rho_{k,2}^2\right) \right)}$. The first term has been estimated in Theorem \ref{thm:trace}; it is a $\mathcal{O}(k^{-n})$. Moreover, thanks to Proposition \ref{prop:purity}, we know that 
\[ \Tr\left(\rho_{k,j}^2\right) =   \left( \frac{2 \pi}{k} \right)^{\frac{n}{2}} \left( \int_{\Gamma_j} f_j \sigma_j + \bigO{k^{-1}} \right),\]
for $j=1,2$, therefore 
\[ \left(1 - \Tr\left(\rho_{k,1}^2\right) \right) \left( 1 - \Tr\left(\rho_{k,2}^2\right) \right) = 1 - \left( \frac{2 \pi}{k} \right)^{\frac{n}{2}} \left( \int_{\Gamma_1} f_1 \sigma_1 + \int_{\Gamma_2} f_2 \sigma_2 \right) +  \bigO{k^{-\min\left(n,\frac{n}{2} + 1\right)}}. \]
We deduce from this and from $\sqrt{1-x} = 1 - \tfrac{x}{2} + \bigO{x^2}$ that
\[ \sqrt{\left(1 - \Tr\left(\rho_{k,1}^2\right) \right) \left( 1 - \Tr\left(\rho_{k,2}^2\right) \right)} = 1 - \frac{1}{2} \left( \frac{2 \pi}{k} \right)^{\frac{n}{2}} \left( \int_{\Gamma_1} f_1 \sigma_1 + \int_{\Gamma_2} f_2 \sigma_2 \right) +  \bigO{k^{-\min\left(n,\frac{n}{2} + 1\right)}}. \]
\end{proof}

\section{A family of examples on the two-sphere with improved upper bound for fidelity}
\label{sect:examples}

\subsection{Quantization of the sphere}

We consider the sphere $\mathbb{S}^2$ with symplectic form $-\frac{1}{2} \omega_{\mathbb{S}^2} = \frac{1}{2} \sin \varphi \ d\theta \wedge d\varphi$. Its quantization is now quite standard material, hence we only describe it quickly; we work with $\C\P^1$ endowed with the Fubini-Study symplectic form $\omega_{\text{FS}} = \frac{i dz \wedge d\bar{z}}{(1+|z|^2)^2}$, and use the fact that the stereographic projection (from the north pole to the equator) $\pi_N: \S^2 \to \C\P^1$ is a symplectomorphism. On $\C\P^1$, we consider the hyperplane bundle $L = \mathcal{O}(1)$, i.e. the dual of the tautological line bundle $ \mathcal{O}(-1) = \left\{ ([u],v) \in \C\P^1 \times \C^2| \ v \in \C u \right\}$. We endow the latter with its natural holomorphic structure and with the Hermitian form induced by the standard one on the trivial bundle $\C\P^1 \times \C^2$. Then $L$ is equipped with the dual Hermitian form, and its Chern connection $\nabla$ has curvature $-i\omega_{\text{FS}}$, thus $L \to \C\P^1$ is a prequantum line bundle. The following result is well-known (see for instance \cite[Theorem 15.5]{Dem}).

\begin{prop}
\label{prop:iso_homog}
There is a canonical isomorphism between $\Hil_k = H^0(\C\P^1,L^k)$ and the space $\C_k[Z_1,Z_2]$ of homogeneous polynomials of degree $k$ in two complex variables.  
\end{prop}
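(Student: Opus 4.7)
The plan is to construct an explicit canonical map $\Psi: \C_k[Z_1,Z_2] \to H^0(\C\P^1, L^k)$ and show it is a linear isomorphism. The starting observation is that by definition the fiber $\mathcal{O}(-1)_{[u]} = \C u$ is a distinguished line inside $\C^2$, so the fiber $L^k_{[u]} = (\mathcal{O}(-1)_{[u]})^{*\otimes k}$ is canonically identified with the space of $k$-linear forms on $\C u$. Given a homogeneous polynomial $P$ of degree $k$ in $(Z_1, Z_2)$ and a nonzero $u \in \C^2$, I would set $\Psi(P)([u]) \in L^k_{[u]}$ to be the $k$-form sending $u^{\otimes k} \mapsto P(u)$; by the homogeneity of $P$, this rescales correctly under $u \mapsto \lambda u$ so that $\Psi(P)([u])$ depends only on the class $[u]$. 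Holomorphicity is immediate from the fact that $P$ is polynomial and that in any local holomorphic trivialization of $L$ over an affine chart of $\C\P^1$ (say the dehomogenization $z = Z_1/Z_2$ on $U_0 = \{Z_2 \neq 0\}$), $\Psi(P)$ reads as the polynomial $z \mapsto P(z,1)$ times the frame.

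Next I would verify linearity and injectivity. Linearity of $\Psi$ in $P$ is obvious from the pointwise definition. For injectivity, if $\Psi(P)$ vanishes identically on $\C\P^1$, then $P(u) = 0$ for every nonzero $u \in \C^2$, hence $P \equiv 0$.

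For surjectivity, I would rely on a dimension count rather than explicit inversion. On the one hand, $\dim \C_k[Z_1, Z_2] = k+1$. On the other hand, it is a standard computation (e.g.\ via Čech cohomology on the two-chart cover $U_0 \cup U_1$, or via the Euler sequence and Riemann--Roch) that $\dim H^0(\C\P^1, \mathcal{O}(k)) = k+1$ for $k \geq 0$. Hence an injective linear map between two vector spaces of the same finite dimension is necessarily bijective, and $\Psi$ is the desired canonical isomorphism.

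The only mildly nontrivial step is checking that $\Psi(P)$ is genuinely holomorphic and that the construction is coordinate-free; the main obstacle — if any — is simply being careful with the canonical identification of $L^k_{[u]}$ with $(\C u)^{*\otimes k}$ so that the well-definedness under $u \mapsto \lambda u$ is transparent. Everything else is formal, and I would not need to produce an explicit inverse since the dimension argument settles surjectivity.
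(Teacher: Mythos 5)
Your construction is correct and essentially coincides with the paper's: the map $\Psi$ you define is precisely the inverse of the identification the paper describes (sending a section $s$ to the homogeneous polynomial $u \mapsto \langle s([u]) \mid u^{\otimes k}\rangle$), the paper otherwise deferring all verification to a standard reference. Your checks of well-definedness, holomorphicity and injectivity go through, and the dimension count $\dim H^0(\C\P^1,\mathcal{O}(k)) = k+1$ you invoke for surjectivity is legitimately independent of the statement being proved provided you compute it via the \v{C}ech/transition-function argument on the two-chart cover (which in fact already yields surjectivity directly, since it shows any global section is a polynomial of degree at most $k$ in the affine coordinate).
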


This isomorphism is constructed by sending a section $s$ of $L^k \to \C\P^1$ to the function $u \in \C^2 \setminus \{0\} \mapsto \langle s(u) | u^{\otimes k} \rangle$, where $\langle \cdot | \cdot \rangle$ stands for the duality pairing between fibers of $\mathcal{O}(k)$ and $\mathcal{O}(-k)$. This isomorphism yields the scalar product
\[ \scal{P}{Q}_k = \int_{\C} \frac{P(1,z) \overline{Q(1,z)}}{(1+|z|^2)^{k+2}} \ |dz \wedge d\bar{z}| \]
on $\C_k[Z_1,Z_2]$, and one readily checks that the monomials
\[ e_{\ell} =  \sqrt{\frac{(k+1)\binom{k}{\ell}}{2\pi}} \  Z_1^{k-\ell} Z_2^{\ell}, \quad 0 \leq \ell \leq k \]
form an orthonormal basis of $\C_k[Z_1,Z_2]$.

Let $U_0 = \{ [z_0:z_1] \in \C\P^1| \ z_0 \neq 0 \}$ be the first standard coordinate chart, endowed with the complex coordinate $z= z_1 / z_0$. Over $U_0$, we define the local non-vanishing section $s_0$ of $\mathcal{O}(-1)$ by $s_0(z) = ([1:z],(1,z))$, and we introduce the dual section $t_0$, i.e. the unique section of $L \to U_0$ such that $t_0(s_0) = 1$. Then the above isomorphism sends $P \in \C_k[Z_1,Z_2]$ to $P(1,z) t_0(z)$, and one readily checks that
\begin{equation} \Pi_k(z,w) = \frac{k+1}{2\pi} (1+z \bar{w})^k \ t_0^k(z) \otimes \overline{t_0}^k(w). \label{eq:proj_sphere}\end{equation}
The local section $u = (1 + |z|^2)^{1/2} \ t_0$ has unit norm, and the coherent vector $\xi_k^{u(z)}$ satisfies
\[ \xi_k^{u(z)}(w) = \frac{k+1}{2\pi} \frac{(1+\bar{z}w)^k}{(1+|z|^2)^{\frac{k}{2}}} \ t_0^k(w), \qquad \left\| \xi_k^{u(z)} \right\|^2_k = \frac{k+1}{2\pi}. \]
Hence, a straightforward computation yields that for $0 \leq \ell \leq k$,
\[ P_k^z e_{\ell} = \frac{z^{\ell} \sqrt{\binom{k}{\ell}}}{(1+|z|^2)^k} (1+\bar{z}w)^k \ t_0^k(w) = \frac{z^{\ell} \sqrt{\binom{k}{\ell}}}{(1+|z|^2)^k} \sum_{m=0}^k \bar{z}^m \sqrt{\binom{k}{m}}  e_m. \]

\subsection{Two orthogonal great circles on the sphere $\mathbb{S}^2$}

We briefly explain the case of orthogonal great circles on $\mathbb{S}^2$. Let $\Gamma_1 = \{x_3=0\}$ and $\Gamma_2 = \{x_1=0\}$, with respective densities $ \sigma_1 = \frac{d\theta}{2\pi}, \sigma_2 = \frac{d\varphi}{2\pi}$. Then $\Gamma_1$ is sent by $\pi_N$ to the unit circle $\{\exp(it)| \ 0 \leq t \leq 2\pi\}$ in $\C$ and $\Gamma_2$ to the line $i\R = \{i y | \ y \in \R \} \subset \C$; moreover,
\[ (\pi_N)_* \sigma_1 = \frac{dt}{2\pi}, \qquad (\pi_N)_*\sigma_2 = \frac{dy}{\pi(1+y^2)}. \] 
Let $\rho_{k,1} = \rho_k(\Gamma_1,\sigma_1)$; by definition,
\[ \scal{\rho_{k,1} e_{\ell}}{e_m} = \int_0^{2\pi} \scal{P_k^{\exp(it)}e_{\ell}}{e_m} \frac{dt}{2\pi} = \frac{1}{2^k} \sqrt{\binom{k}{\ell}\binom{k}{m}} \int_0^{2\pi} \exp(i(\ell-m)t) \frac{dt}{2\pi}; \]
hence we obtain that the matrix of $\rho_{k,1}$ in the orthonormal basis $(e_{\ell})_{0 \leq \ell \leq k}$ reads
\[ \rho_{k,1} = \frac{1}{2^k} \mathrm{diag}\left(\binom{k}{0}, \ldots, \binom{k}{\ell}, \ldots, \binom{k}{k}\right), \]
which means that $\rho_{k,1}$ is prepared according to a binomial probability distribution with respect to this basis. The matrix elements of $\rho_{k,2} = \rho_k(\Gamma_2,\sigma_2)$ are given by the formula
\[ \scal{\rho_{k,2} e_{\ell}}{e_m} = \frac{i^{\ell-m}}{\pi} \sqrt{\binom{k}{\ell}\binom{k}{m}} \int_{-\infty}^{+ \infty} \frac{y^{\ell + m}}{(1+y^2)^{k+1}} \ dy. \]
This integral vanishes when $\ell + m$ is odd, and if $\ell + m = 2p$ is even, it is equal to
\[ I_{k,p} = \int_{-\infty}^{+ \infty} \frac{y^{2p}}{(1+y^2)^{k+1}} \ dy = 2 \int_{0}^{+ \infty} \frac{y^{2p}}{(1+y^2)^{k+1}} \ dy. \]
We can compute this quantity by means of the Beta function, see e.g. \cite[Section 6.2]{AbraSte}.

\begin{lm}
For every $p \in \llbracket 0,k \rrbracket$, $I_{k,p} = \frac{\pi}{4^k} \frac{\binom{2k}{k} \binom{k}{p}}{\binom{2k}{2p}}$.
\end{lm}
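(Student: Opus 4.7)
The plan is to reduce $I_{k,p}$ to a value of the Beta function and then convert the Gamma values at half-integers into binomial coefficients. First, I perform the trigonometric substitution $y = \tan\theta$, $dy = \sec^2\theta \, d\theta$, $1+y^2 = \sec^2\theta$, which transforms the integral into
\[ I_{k,p} = 2\int_0^{\pi/2} \frac{\tan^{2p}\theta}{\sec^{2k+2}\theta} \sec^2\theta \, d\theta = 2\int_0^{\pi/2} \sin^{2p}\theta \cos^{2(k-p)}\theta \, d\theta. \]
By the standard representation $B(a,b) = 2\int_0^{\pi/2} \sin^{2a-1}\theta \cos^{2b-1}\theta \, d\theta$, this equals $B(p+\tfrac{1}{2},\,k-p+\tfrac{1}{2})$, and hence
\[ I_{k,p} = \frac{\Gamma(p+\tfrac{1}{2}) \, \Gamma(k-p+\tfrac{1}{2})}{\Gamma(k+1)}. \]

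Next, I invoke the classical half-integer Gamma evaluation $\Gamma(m+\tfrac{1}{2}) = \frac{(2m)!\,\sqrt{\pi}}{4^m \, m!}$, valid for every nonnegative integer $m$. Applied to $m=p$ and $m=k-p$, this yields
\[ I_{k,p} = \frac{\pi}{4^k} \cdot \frac{(2p)! \, (2k-2p)!}{p! \, (k-p)! \, k!}. \]
Finally, I rewrite the fraction on the right-hand side in terms of binomial coefficients. Expanding the definitions gives
\[ \frac{\binom{2k}{k} \binom{k}{p}}{\binom{2k}{2p}} = \frac{(2k)!}{k! \, k!} \cdot \frac{k!}{p! \, (k-p)!} \cdot \frac{(2p)! \, (2k-2p)!}{(2k)!} = \frac{(2p)! \, (2k-2p)!}{k! \, p! \, (k-p)!}, \]
which matches the factor above, completing the identification $I_{k,p} = \frac{\pi}{4^k} \frac{\binom{2k}{k}\binom{k}{p}}{\binom{2k}{2p}}$.

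There is no real obstacle here; the only care needed is to keep the factor $2$ from the symmetry of $I_{k,p}$ straight when matching with the Beta function formula, and to apply the correct half-integer Gamma values. The overall computation is essentially mechanical once the substitution $y = \tan\theta$ is performed.
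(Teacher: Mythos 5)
Your proof is correct and follows exactly the route the paper indicates: the paper simply remarks that $I_{k,p}$ "can be computed by means of the Beta function" with a reference to Abramowitz--Stegun, and your argument is precisely that computation carried out in full (substitution $y=\tan\theta$, identification with $B(p+\tfrac12,k-p+\tfrac12)$, and the half-integer Gamma values). All the intermediate identities check out.
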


Consequently, we obtain that 
\[ \scal{\rho_{k,2} e_{m+2q}}{e_m} = \frac{(-1)^q \binom{2k}{k}}{4^k}  \frac{\binom{k}{m+q} \sqrt{\binom{k}{m+2q}\binom{k}{m}}}{\binom{2k}{2(m+q)}} \]
for $0 \leq m \leq k$ and $\lceil{-m/2} \rceil \leq q \leq \lfloor (k-m)/2 \rfloor$. In particular, 
\[  \scal{\rho_{k,2} e_m}{e_m} = \frac{\binom{2k}{k}}{4^k}  \frac{\binom{k}{m}^2 }{\binom{2k}{2m}} = \frac{1}{4^k} \binom{2m}{m} \binom{2(k-m)}{k-m}. \]
The fact that $\Tr(\rho_{k,2}) = 1$ is then equivalent to the identity 
\[ \sum_{m=0}^k \binom{2m}{m} \binom{2(k-m)}{k-m} = 4^k, \]
which can be derived from the expansion $(1-4x)^{-1/2} = \sum_{r=0}^{+\infty} \binom{2r}{r} x^r$ for every $x$ satisfying $-1/4 < x < 1/4$. Moreover, we obtain that
\begin{equation} \Tr(\rho_{k,1} \rho_{k,2}) = \frac{1}{8^k} \sum_{m=0}^k \binom{k}{m}\binom{2m}{m} \binom{2(k-m)}{k-m}. \label{eq:trace_ortho_theo} \end{equation}
$\Gamma_1$ and $\Gamma_2$ intersect transversally at $m_1 = (0,-1,0)$ and $m_2 = (0,1,0)$. Obviously $\theta_1(m_1) = \theta_1(m_2) = \frac{\pi}{2}$ and one can check that $ (\sigma_1,\sigma_2)_{m_1} = (\sigma_1,\sigma_2)_{m_2} = \frac{1}{2\pi^2}$. Therefore, Theorem \ref{thm:trace} gives $\Tr(\rho_{k,1} \rho_{k,2}) = \frac{2}{k\pi} + \bigO{k^{-2}}$. We check this numerically by plotting $k \Tr(\rho_{k,1} \rho_{k,2})$ as a function of $k$, see Figure \ref{fig:ktrace_ortho} (there most probably exist direct techniques to estimate the sum in Equation (\ref{eq:trace_ortho_theo}), but we are not familiar with them). Furthermore, Theorem \ref{thm:subfid} yields
\begin{equation} E(\rho_{k,1},\rho_{k,2}) = \frac{2}{k\pi} \left( 1 + \sqrt{\frac{2 \sqrt{2}-1}{\sqrt{2}}} \right) + \mathcal{O}(k^{-2}). \label{eq:subfid_ortho_theo} \end{equation}
Figure \ref{fig:sub_fidelity_ortho} displays $E\left(\rho_{k,1}, \rho_{k,2}\right)$ and $k E\left(\rho_{k,1}, \rho_{k,2}\right)$ as functions of $k$.

\subsection{Non necessarily orthogonal great circles}

Let $(\Gamma_1, \sigma_1)$ be as in the previous example. Let $0 < \alpha \leq \pi/2$ and let $\Gamma_2^{\alpha}$ be the great circle given by the equation $x_3 = x_1 \tan \alpha$ (or $x_1 = 0$ if $\alpha = \frac{\pi}{2}$), so that $\Gamma_2^{0} = \Gamma_1$ and $\Gamma_2^{\pi/2} = \Gamma_2$ (see Figure \ref{fig:sphere}). Let $\sigma_2^{\alpha}$ be the density induced on $\Gamma_2$ by $\sigma_1$ \emph{via} the rotation $R_{\alpha}$ of angle $\alpha$ about the $x_2$ axis, which sends $\Gamma_1$ to $\Gamma_2^{\alpha}$. Trying to compute explicitly the matrix elements of $\rho_{k,2}^{\alpha}$ as in the previous part leads to complicated integrals for which we do not know closed forms; therefore numerical evaluation would require to approximate these integrals and would be costly and possibly not very accurate. Instead, we prefer to use the following method, which is more efficient.

\begin{figure}[h]
\begin{center}
\includegraphics[scale=0.6]{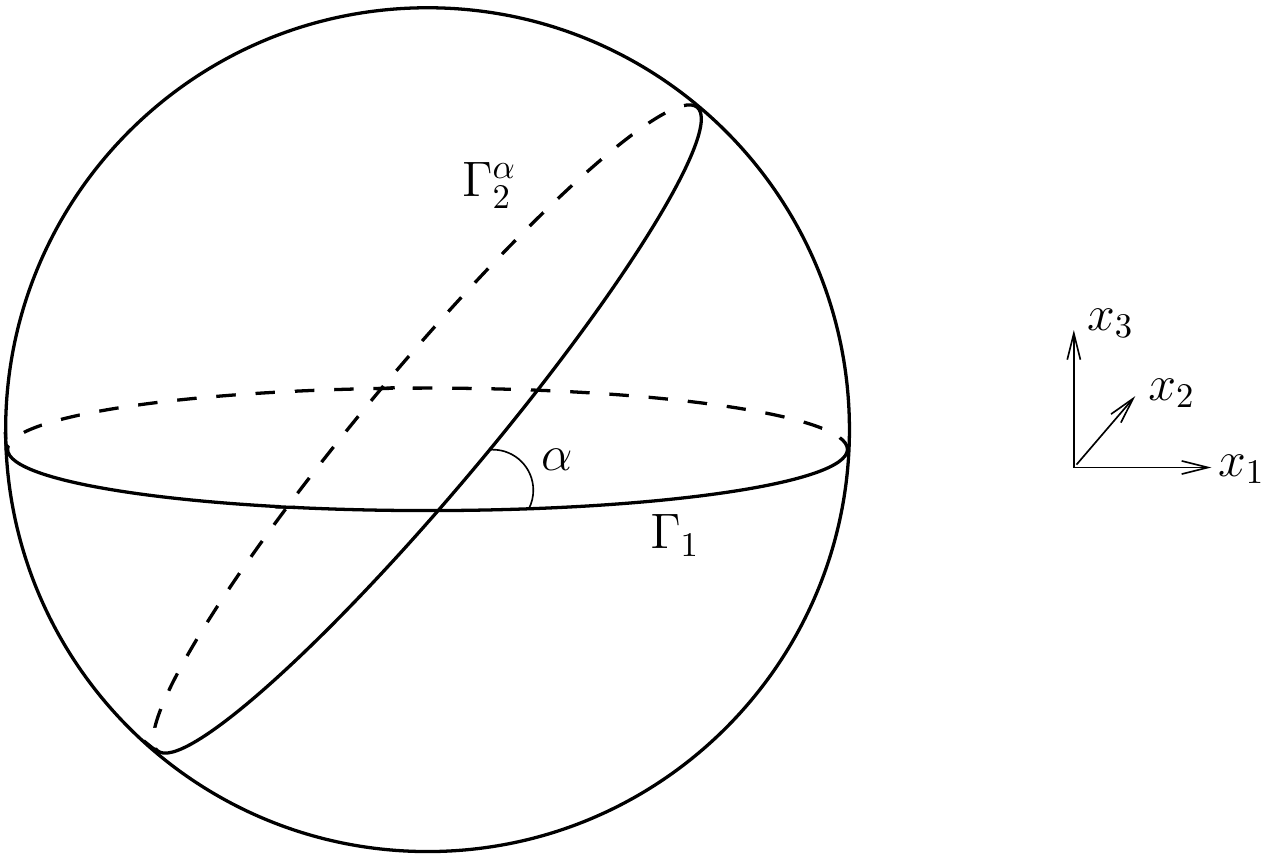}
\end{center}
\caption{The submanifolds $\Gamma_1$ and $\Gamma_2^{\alpha}$.}
\label{fig:sphere}
\end{figure}

Let $\zeta_k: SU(2) \to GL(\C_k[Z_1,Z_2])$ be the natural representation of $SU(2)$ in $\C_k[Z_1,Z_2]$:
\begin{equation} \forall (g,P) \in SU(2) \times \C_k[Z_1,Z_2], \qquad \zeta_k(g)(P) = P \circ g^{-1} \label{eq:rep_su} \end{equation}
where $SU(2)$ acts on $\C^2$ in the standard way. Observe that this representation is unitary with respect to the scalar product on $\C_k[Z_1,Z_2]$ defined above. Note also that we are in the presence of other actions of $SU(2)$: the natural action on $\C^2$, which induces an action on $\C\P^1$ and on its tautological bundle, which itself induces by duality an action on the prequantum line bundle $L \to \C\P^1$, which in turn induces an action on $L^k \to \C\P^1$. Whenever the context allows to distinguish between these actions, we denote by $gu$ the action of $g \in SU(2)$ on $u$ belonging to any of these sets. Furthermore, $SU(2)$ acts on sections of $L^k \to \C\P^1$, by the formula
\[ \forall (g,s) \in SU(2) \times \classe{\infty}{(M,L^k)}, \ \forall m \in \C\P^1, \quad (g s)(m) = g s(g^{-1} m); \]
this yields an action on holomorphic sections. The latter is compatible with $\zeta_k$ through the isomorphism introduced in Proposition \ref{prop:iso_homog}; therefore we will slightly abuse notation by using $(g,\phi) \in SU(2) \times \Hil_k \mapsto \zeta_k(g) \phi $ for this action. We now consider the matrix
\[ \tau_2 = \frac{1}{2} \begin{pmatrix} 0 & -1\\ 1 & 0 \end{pmatrix} \in \mathfrak{su}(2) \simeq \mathfrak{so}(3) \]
which is the infinitesimal generator of rotations about the $x_2$ axis.

\begin{lm}
\label{lm:equiv}
Let $g_{\alpha} = \exp(i \alpha \tau_2) \in SU(2)$ and $U_k(\alpha) = \zeta_k(g_{\alpha})$; then $ \rho_{k,2}^{\alpha} = U_k(\alpha) \rho_{k,1} U_k(\alpha)^*$.
\end{lm}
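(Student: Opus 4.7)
The plan is to establish the $SU(2)$-equivariance of the coherent state projectors, and then use it in the change-of-variables formula defining $\rho_{k,2}^{\alpha}$.

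First I would observe that, by construction, the rotation $R_{\alpha}$ of $\S^2$ about the $x_2$ axis corresponds via $\pi_N$ to the action of $g_{\alpha}=\exp(i\alpha \tau_2)\in SU(2)$ on $\C\P^1$, so that $\Gamma_2^{\alpha}=g_{\alpha}\cdot\Gamma_1$ (identifying $\S^2$ with $\C\P^1$) and $\sigma_2^{\alpha}=(g_{\alpha})_{*}\sigma_1$ by the very definition of $\sigma_2^{\alpha}$.

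The core step is to prove the equivariance identity
\[ P_k^{g x} = \zeta_k(g)\, P_k^x \, \zeta_k(g)^{*} \qquad \forall g \in SU(2), \ \forall x \in \C\P^1. \]
To do this, I would show that for $u\in P\subset L$ one has $\zeta_k(g)\,\xi_k^{u}=\xi_k^{gu}$, where on the right-hand side $gu$ is the action on the prequantum bundle $L$. This follows directly from the defining property of the coherent vector: for every $\phi\in\Hil_k$,
\[ \scal{\phi}{\xi_k^{gu}}_k (gu)^k = \phi(g\cdot \pi(u)) = (\zeta_k(g^{-1})\phi)(\pi(u)) = \scal{\zeta_k(g^{-1})\phi}{\xi_k^u}_k u^k = \scal{\phi}{\zeta_k(g)\xi_k^u}_k u^k, \]
using that $\zeta_k$ is unitary and that the action on $L^k$ is compatible with evaluation of sections. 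Since $(gu)^k=g\cdot u^k$ while the scalars match up under the duality identification of $L_x\otimes \bar L_x\simeq \C$ (the action of $g$ preserves the Hermitian form, so $u$ unit implies $gu$ unit), we get $\xi_k^{gu}=\zeta_k(g)\xi_k^u$, from which $P_k^{gx}=U_k(\alpha)P_k^x U_k(\alpha)^{*}$ is immediate since $\zeta_k(g)$ is unitary.

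Finally, applying this with $g=g_{\alpha}$ and performing the change of variables $y=g_{\alpha}\cdot x$ in the integral defining $\rho_{k,2}^{\alpha}$, I obtain
\[ \rho_{k,2}^{\alpha}=\int_{\Gamma_2^{\alpha}} P_k^{y}\,\sigma_2^{\alpha}(y) = \int_{\Gamma_1} P_k^{g_{\alpha} x}\,\sigma_1(x) = U_k(\alpha)\left(\int_{\Gamma_1} P_k^{x}\,\sigma_1(x)\right) U_k(\alpha)^{*} = U_k(\alpha)\,\rho_{k,1}\,U_k(\alpha)^{*}, \]
which is the desired identity. The only delicate point is bookkeeping the two overlapping $SU(2)$-actions (on $L^k$ via its duality with the tautological bundle, and on sections via $\zeta_k$); all computations are straightforward once compatibility of these actions with the evaluation pairing is spelled out.
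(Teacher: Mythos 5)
Your proof is correct and follows essentially the same route as the paper: both hinge on the equivariance identity $\xi_k^{gu}=\zeta_k(g)\xi_k^u$ for coherent vectors, followed by a change of variables using $\sigma_2^{\alpha}=(R_{\alpha})_*\sigma_1$. The only (minor) difference is that you derive the equivariance directly from the defining property of the coherent vector, whereas the paper deduces it from the $SU(2)$-equivariance of the Bergman kernel checked on the explicit formula; just make sure to keep track of the fiber action of $g$ in the step $\phi(g\cdot\pi(u))=g\bigl((\zeta_k(g^{-1})\phi)(\pi(u))\bigr)$, which you acknowledge but write somewhat loosely.
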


We believe that this lemma is standard, but nonetheless give a proof in Appendix A. The operator $U_k(\alpha)$ can be computed as follows; let $\zeta_k'$ be the representation of $\mathfrak{su}(2)$ in $\C_k[Z_1,Z_2]$ which is the derived representation of the one given by Equation (\ref{eq:rep_su}):
\[ \forall (\xi,P) \in \mathfrak{su}(2) \times \C_k[Z_1,Z_2], \qquad \zeta_k'(\xi)(P) = \left.\frac{d}{dt}\right|_{t=0} \zeta_k(\exp(t \xi))(P). \]
Then $U_k(\alpha)$ can be computed as $ U_k(\alpha) = \exp(i\alpha \zeta_k'(\tau_2))$. A straightforward computation shows that, for $0 \leq \ell \leq k$, 
\[ \zeta_k'(\tau_2)(e_{\ell}) = \frac{1}{2} \sqrt{(\ell+1)(k-\ell)} \ e_{\ell+1} - \frac{1}{2} \sqrt{\ell(k-\ell+1)} \ e_{\ell-1}. \]
Consequently, we can compute numerically the matrix of $U_k(\alpha)$, and thus the matrix of $\rho_{k,2}^{\alpha}$, in the basis $(e_{\ell})_{0 \leq \ell \leq k}$; therefore we can evaluate the sub-fidelity of $\rho_{k,1}$ and $\rho_{k,2}^{\alpha}$.

Since $\theta_1(m_1) = \theta_1(m_2) = \alpha$ and $ (\sigma_1,\sigma_2^{\alpha})_{m_1} = (\sigma_1,\sigma_2^{\alpha})_{m_2} = \frac{1}{2\pi^2}$, Theorem \ref{thm:trace} yields 
\[ \Tr(\rho_{k,1} \rho_{k,2}^{\alpha}) = \frac{2}{k\pi \sin \alpha} + \bigO{k^{-2}}. \]
We check this numerically for the case $\alpha = \frac{\pi}{4}$, see Figure \ref{fig:ktrace_pisur4}. Moreover, Theorem \ref{thm:subfid}  gives 
\begin{equation} E\left(\rho_{k,1}, \rho_{k,2}^{\alpha}\right) = \frac{2}{k\pi \sin \alpha} \left( 1 + \sqrt{2 - \frac{\sin \alpha}{\sqrt{1 + \sin^2 \alpha}}} \right) + \bigO{k^{-2}}. \label{eq:subfid_pi4} \end{equation}
We check this for the case $\alpha = \frac{\pi}{4}$ in Figure \ref{fig:sub_fidelity_pisur4}, and in Figure \ref{fig:subfid_angle_varying} we compare the value of the sub-fidelity for a fixed large $k$ to its theoretical equivalent as a function of $\alpha$; note that since $k$ is fixed, we cannot take $\alpha$ arbitrarily close to zero.

\subsection{Obtaining a better estimate for fidelity in this example}

It turns out that one can obtain a much better bound for the fidelity of the states $\rho_{k,1}$ and $\rho_{k,2}^{\alpha}$ defined above, by comparing it to the fidelity of certain Berezin-Toeplitz operators. Unfortunately, this strategy relies on a certain number of symmetries and good properties of this particular example, hence it does not work as it is in the general case. Nevertheless, it is quite remarkable that such a good estimate holds, and perhaps some parts of the proof could give insight on how to handle the general case; this is why we will give a detailed explanation of the method, which includes non trivial steps and requires care.

\subsubsection{Comparing both states to Berezin-Toeplitz operators}

We begin by comparing $\rho_{k,1}$ to a certain Berezin-Toeplitz operator. In order to do so, we may give the following heuristic argument: this state is prepared according to a binomial distribution with respect to the orthonormal basis introduced above, with higher weight at basis elements corresponding to points that are close to the equator, where close means at distance of order $k^{-1/2}$. Indeed, it is standard that the binomial coefficients $\binom{k}{\ell}$ that are of the same order as the central binomial coefficient $\binom{k}{\lfloor k/2 \rfloor}$ are such that $|\lfloor k/2 \rfloor-\ell|$ is of order $\sqrt{k}$, and the corresponding basis elements are supported in a neighbourhood of size $k^{-1/2}$ of the equator. Consequently, when $k \to +\infty$, we expect the appearance of the density function of a normal distribution centered at $x_3 = 0$. Therefore, $\rho_{k,1}$ might be related, for $k$ large, to the Berezin-Toeplitz operator $T_k(\lambda \exp(-c k x_3^2))$ for some $c > 0$ and $\lambda \in \R$ (see Equation (\ref{eq:def_BTO}) for the definition of this operator). In fact, for technical reasons that will appear later, we prefer to replace $k$ by $k+1$ in this expression.

In order to be more precise, we argue as follows. The largest matrix element of $\rho_{k,1}$ is
\begin{equation} \frac{1}{2^k} \binom{k}{\lfloor k/2 \rfloor} \underset{k \to +\infty}{\sim} \sqrt\frac{2}{\pi k}. \label{eq:middle_coeff_rho1} \end{equation}
Moreover, the matrix elements of the Berezin-Toeplitz operator associated with a function depending only on $x_3$ can be computed as follows. 

\begin{lm}
\label{lm:BTO_radial}
Let $g \in \classe{\infty}{(\R)}$ and let $f: \S^2 \to \R$ be defined as $f(x_1,x_2,x_3) = g(x_3)$. Then $\scal{T_k(f) e_{\ell}}{e_m}_k = 0$ if $\ell \neq m$ and 
\[ \scal{T_k(f) e_{\ell}}{e_{\ell}}_k = \frac{(k+1) \binom{k}{\ell}}{2^{k+1}} \int_{-1}^1 (1 + x)^{\ell} (1 - x)^{k - \ell} g(x) \ dx. \]
\end{lm}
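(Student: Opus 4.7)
The plan is to compute the matrix elements directly in the chart $U_0$, using the explicit isomorphism of Proposition \ref{prop:iso_homog} and the scalar product on homogeneous polynomials that the paper has just recalled. Since $T_k(f) = \Pi_k f$ and the $e_m$ are holomorphic sections, we have
\[ \scal{T_k(f)e_{\ell}}{e_m}_k = \scal{f e_{\ell}}{e_m}_k = \int_{\C} f(z) \frac{e_{\ell}(1,z) \overline{e_m(1,z)}}{(1+|z|^2)^{k+2}} |dz \wedge d\bar{z}|, \]
where I identify $f$ with its pullback to $\C\P^1$ and then to $\C$ via the stereographic projection. Plugging in $e_{\ell}(1,z) = \sqrt{\tfrac{(k+1)\binom{k}{\ell}}{2\pi}} z^{\ell}$, the integrand becomes a constant times $z^{\ell} \bar{z}^m$ divided by $(1+|z|^2)^{k+2}$.

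The next step is to switch to polar coordinates $z = r e^{i\theta}$, so $|dz\wedge d\bar z| = 2r\,dr\,d\theta$. Since $f$ depends only on $x_3$ and the stereographic map sends $(x_1,x_2,x_3)$ to $z$ with $x_3 = (|z|^2-1)/(|z|^2+1)$, the function $f$ becomes a function of $r$ alone. The $\theta$-integral then reads $\int_0^{2\pi} e^{i(\ell-m)\theta}\,d\theta$, which vanishes unless $\ell = m$, yielding the first claim. In the diagonal case $\ell = m$ the $\theta$-integral gives $2\pi$, cancelling the $2\pi$ in the normalization constant, and one is left with
\[ \scal{T_k(f)e_{\ell}}{e_{\ell}}_k = 2(k+1)\binom{k}{\ell} \int_0^{\infty} g\!\left(\tfrac{r^2-1}{r^2+1}\right) \frac{r^{2\ell+1}}{(1+r^2)^{k+2}}\,dr. \]

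The final step is the change of variables $x = (r^2-1)/(r^2+1)$, whose straightforward computation gives $1+r^2 = 2/(1-x)$, $r^2 = (1+x)/(1-x)$ and $r\,dr = dx/(1-x)^2$. Substituting these into the integrand, the powers of $(1-x)$ bookkeep exactly so that the factor $(1+r^2)^{k+2}$ in the denominator produces $2^{k+2}/(1-x)^{k+2}$, which together with $r^{2\ell}$ and $r\,dr$ collapses to $(1+x)^{\ell}(1-x)^{k-\ell}/2^{k+1}\,dx$ on the interval $[-1,1]$. This yields the claimed formula. The only mildly delicate point is keeping track of the constants in the change of variables and the definition of $|dz\wedge d\bar z|$ (recall $|dz\wedge d\bar z| = 2\,dx\,dy$), but no analytic difficulty arises since $g$ is smooth on the compact interval $[-1,1]$.
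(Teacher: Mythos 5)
Your proof is correct, and it reaches the same radial integral and the same change of variables $x = (r^2-1)/(r^2+1)$ as the paper, but the setup is slightly more direct. The paper computes the full section $(T_k(f)P_{\ell})(z)$ by integrating against the explicit Bergman kernel of Equation (\ref{eq:proj_sphere}), expands $(1+z\bar w)^k$ binomially, and reads off the coefficient of $z^m$; you instead use the self-adjointness of $\Pi_k$ together with the holomorphy of $e_m$ to write $\scal{\Pi_k f e_{\ell}}{e_m}_k = \scal{f e_{\ell}}{e_m}_k$ and evaluate this pairing directly from the explicit form of the inner product on $\C_k[Z_1,Z_2]$. This bypasses the kernel and the binomial expansion entirely, at the small cost of having to justify that inserting the scalar function $f$ into the weighted integral $\int_{\C} P(1,z)\overline{Q(1,z)}(1+|z|^2)^{-(k+2)}\,|dz\wedge d\bar z|$ computes $\scal{fP}{Q}_k$ --- which is immediate from the trivialization by $t_0$ and the form of the Fubini--Study volume, and which you correctly take for granted. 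From the polar-coordinate step onward (the vanishing of $\int_0^{2\pi}e^{i(\ell-m)\theta}\,d\theta$ for $\ell\neq m$, the factor $2(k+1)\binom{k}{\ell}$, and the substitution with $r\,dr = dx/(1-x)^2$ and $1+r^2 = 2/(1-x)$) your computation coincides with the paper's and the constants check out.
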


The proof is more or less a folklore computation; it is available in Appendix A. For $f_k(x_1,x_2,x_3) = \lambda \exp(-c(k+1) x_3^2)$, this gives
\begin{equation} \scal{T_k(f_k) e_{\ell}}{e_{\ell}}_k = \frac{\lambda (k+1) \binom{k}{\ell}}{2^{k+1}} \int_{-1}^1 (1 + x)^{\ell} (1 - x)^{k - \ell} \exp(-c(k+1)x^2) \ dx. \label{eq:coeffs_gaussian} \end{equation}
From this formula, we obtain that the trace
\[ \Tr(T_k(f_k)) = \frac{\lambda (k+1)}{2} \int_{-1}^1 \exp(-c(k+1)x^2) \ dx = \frac{\lambda}{2} \sqrt{\frac{(k+1)\pi}{c}} \mathrm{erf}\left(\sqrt{c(k+1)}\right), \]
where $\mathrm{erf}$ is the error function, is of order $\sqrt{k+1}$. Hence what we really want is to compare $\rho_{k,1}$ to $\tfrac{1}{\sqrt{k+1}} T_k(f_k)$, and we would like that $c$ and $\lambda$ satisfy the relation
\begin{equation} \lambda = 2 \sqrt{\frac{c}{\pi}}, \label{eq:lambda_c_1} \end{equation}
so that the latter has trace close to one. Assume for simplicity that $k$ is even; then
\[ \scal{T_k(f_k) e_{\frac{k}{2}}}{e_{\frac{k}{2}}}_k = \frac{\lambda (k+1) \binom{k}{\frac{k}{2}}}{2^{k+1}} \int_{-1}^1 (1 - x^2)^{\frac{k}{2}} \exp(-c(k+1)x^2) \ dx. \]
We can evaluate the integral by means of Laplace's method; indeed, it is of the form $\int_{-1}^1 \exp(-k\phi(x)) a(x) \ dx$ where $a(x) = \exp(-c x^2)$ and $\phi(x) = c x^2 - \frac{1}{2} \ln(1-x^2)$ for $-1 < x < 1$. We obtain that
\[ \frac{1}{\sqrt{k+1}} \scal{T_k(f_k) e_{\frac{k}{2}}}{e_{\frac{k}{2}}}_k \sim_{k \to +\infty} \frac{\lambda}{ \sqrt{(2 c +1)k}}. \]
Comparing this with Equation (\ref{eq:middle_coeff_rho1}), we see that we want $\lambda$ and $c$ to satisfy the relation 
\begin{equation} \lambda = \sqrt{\frac{2(2c+1)}{\pi}}.  \label{eq:lambda_c_2} \end{equation}
One cannot choose $c$ and $\lambda$ such that both Equations (\ref{eq:lambda_c_1}) and (\ref{eq:lambda_c_2}) are satisfied. In what follows, we will take any $c$ and choose $\lambda$ so that the latter is satisfied. In this case,
\[ \frac{1}{\sqrt{k+1}} \Tr(T_k(f_k)) \sim_{k \to +\infty} \sqrt{1 + \frac{1}{2c}}, \]
and the way to make this quantity become close to one is to let the constant $c$ go to $+\infty$.

This analysis should lead to a good approximation for the coefficients $\scal{\rho_{k,1} e_{\ell}}{e_\ell}_k$ where $|\ell - \tfrac{k}{2}|$ is of order $\sqrt{k}$, but there is no reason to expect this approximation to still be good for the other coefficients. Nevertheless, the following nice property holds.

\begin{lm}
\label{lm:rho_approx}
For every $c \geq 2$ and every $k \geq 1$, we have that 
\begin{equation} \rho_{k,1} \leq \frac{1}{\sqrt{k+1}} T_k(f_k^c) \label{eq:bound_rho1} \end{equation}
where $f_k^c: \S^2 \to \R^+$ is given by the formula
\begin{equation} f_k^c(x_1, x_2, x_3) = \sqrt{\frac{2(2c+1)}{\pi}} \exp\left( -c (k +1 ) x_3^2 \right). \label{eq:def_fk} \end{equation}
\end{lm}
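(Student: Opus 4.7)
The plan is to exploit the rotational symmetry that puts both sides of \eqref{eq:bound_rho1} in diagonal form in the same orthonormal basis, thereby reducing the operator inequality to a family of scalar integral inequalities, one for each diagonal entry, and then to prove the latter by a Laplace-type argument.

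\paragraph{Diagonalization.} Both $\rho_{k,1}$ and $T_k(f_k^c)$ are invariant under the $SO(2)$-action on $\Hil_k$ induced by rotation about the $x_3$-axis: this action preserves $(\Gamma_1,\sigma_1)$ (hence $\rho_{k,1}$ via Definition~\ref{dfn:state}) and it preserves the function $f_k^c$ which depends only on $x_3$ (hence $T_k(f_k^c)$). Since the basis vectors $(e_\ell)_{0\le\ell\le k}$ are the weight vectors of the induced representation and the weights are pairwise distinct, both operators are simultaneously diagonal in $(e_\ell)$.

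\paragraph{Reduction to an integral inequality.} The diagonal entries of $\rho_{k,1}$ are $\binom{k}{\ell}/2^k$, as computed earlier in the paper. Applying Lemma~\ref{lm:BTO_radial} with $g(x)=\sqrt{2(2c+1)/\pi}\exp(-c(k+1)x^2)$ gives
\[
\scal{T_k(f_k^c)e_\ell}{e_\ell}_k \;=\; \frac{(k+1)\binom{k}{\ell}}{2^{k+1}}\sqrt{\frac{2(2c+1)}{\pi}}\,I_\ell,\qquad I_\ell:=\int_{-1}^{1}(1+x)^\ell(1-x)^{k-\ell}\exp(-c(k+1)x^2)\,dx.
\]
Plugging in and cancelling the common factor $\binom{k}{\ell}/2^{k+1}$, the operator inequality~\eqref{eq:bound_rho1} is equivalent to the family of scalar inequalities
\[
I_\ell \;\geq\; \sqrt{\frac{2\pi}{(k+1)(2c+1)}},\qquad \ell\in\{0,\ldots,k\}.
\]

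\paragraph{Laplace-type lower bound.} Write the integrand as $\exp(\Psi_\ell(x))$ with $\Psi_\ell(x)=\ell\log(1+x)+(k-\ell)\log(1-x)-c(k+1)x^2$. One computes
\[
\Psi_\ell''(x) \;=\; -\frac{\ell}{(1+x)^2}-\frac{k-\ell}{(1-x)^2}-2c(k+1) \;<\; 0,
\]
so $\Psi_\ell$ is strictly concave on $(-1,1)$ with a unique maximum at some $x_\ell^\ast\in(-1,1)$, and $\Psi_\ell(x_\ell^\ast)\geq\Psi_\ell(0)=0$. By concavity, on any neighbourhood $[x_\ell^\ast-\delta,x_\ell^\ast+\delta]\subset(-1,1)$ one has a quadratic lower bound $\Psi_\ell(x)\geq\Psi_\ell(x_\ell^\ast)-\tfrac{M_\ell}{2}(x-x_\ell^\ast)^2$ with $M_\ell$ an upper bound on $|\Psi_\ell''|$ on this interval. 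Integrating and completing the square gives an estimate of the form $I_\ell\geq C_\ell\, e^{\Psi_\ell(x_\ell^\ast)}\sqrt{2\pi/M_\ell}$ to be compared with $\sqrt{2\pi/[(k+1)(2c+1)]}$.

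\paragraph{The main obstacle.} The challenge is to make this bound uniform in $\ell$: when $\ell$ is close to $0$ or $k$, the critical point $x_\ell^\ast$ approaches the boundary of $[-1,1]$ and $-\Psi_\ell''$ blows up there. The hypothesis $c\geq 2$ enters precisely here, since the Gaussian factor $\exp(-c(k+1)x^2)$ pulls $x_\ell^\ast$ away from $\pm 1$ enough that $M_\ell$ can be controlled by $(k+1)(2c+1)$ up to an error absorbed by the factor $e^{\Psi_\ell(x_\ell^\ast)}\geq 1$. A case split according to whether $|2\ell-k|$ is small or comparable to $k$ (handling the former by Taylor expansion at $0$, where $\Psi_\ell(x)\approx(2\ell-k)x-\tfrac12(k+2c(k+1))x^2$ already yields a Gaussian integral of the right width, and shifting to $x_\ell^\ast$ in the latter) should complete the argument with $c=2$ providing exactly the slack needed.
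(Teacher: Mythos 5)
Your diagonalization and the reduction to the scalar inequalities $I_\ell \geq \sqrt{2\pi/[(k+1)(2c+1)]}$ are correct and agree with the paper, but the proof stops exactly where the real difficulty begins: the last paragraph asserts that a case split ``should complete the argument'' without carrying it out, and the route you sketch is unlikely to close because the inequality is essentially \emph{tight}. Take $\ell = k/2$ (with $k$ even), which is the worst case. There $x_\ell^\ast = 0$, $\Psi_\ell(0)=0$, and $-\Psi_\ell''(0) = k + 2c(k+1) = (k+1)(2c+1) - 1$, so the entire margin available is an additive $1$ in the curvature, i.e.\ a multiplicative slack of order $1/(ck)$ in the resulting Gaussian integral. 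Any quadratic minorant $\Psi_\ell(x) \geq -\tfrac{M_\ell}{2}x^2$ on a neighbourhood of $0$ has $M_\ell > (k+1)(2c+1)$ as soon as the neighbourhood has half-width of order $1/\sqrt{k}$ or more, since $-\Psi_\ell''(x) = \tfrac{k(1+x^2)}{(1-x^2)^2} + 2c(k+1)$ exceeds its value at $0$ by roughly $kx^2$; shrinking the neighbourhood instead makes the truncation constant $C_\ell<1$ eat more than the available $1/(ck)$ of slack, and the factor $e^{\Psi_\ell(x_\ell^\ast)}=1$ gives nothing here. So ``quadratic minorant plus completed square'' does not suffice: one would have to track the quartic correction to the phase and compare it quantitatively with the $+1$, which is precisely the delicate step you have deferred. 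Note also that the hypothesis $c\geq 2$ is invoked in prose but never used in any verifiable estimate.

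The paper avoids all of this with two observations. First, by the symmetry $x\mapsto -x$ and AM--GM, $(1+x)^\ell(1-x)^{k-\ell}+(1+x)^{k-\ell}(1-x)^\ell \geq 2(1-x^2)^{k/2}$, so $I_\ell \geq I_{\lfloor k/2\rfloor}$ and only the central coefficient needs to be treated. Second, for that case the normalized quantity $I_k(c) = \sqrt{2c+1}\int_{-1}^1 (1-x^2)^{k/2}e^{-c(k+1)x^2}\,dx$ is decreasing in $c$ on $[2,+\infty)$, and the substitution $y=\sqrt{c}\,x$ identifies its limit as $c\to+\infty$ as exactly $\sqrt{2\pi/(k+1)}$; monotonicity then gives $I_k(c)\geq \sqrt{2\pi/(k+1)}$ for every finite $c\geq 2$, with no asymptotics in $k$ and no error terms to control. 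The minimal repair of your argument is to import the reduction to $\ell=\lfloor k/2\rfloor$ and replace the Laplace sketch by this monotonicity-plus-limit computation.
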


\begin{proof}
Since both operators are diagonal in the basis $(e_{\ell})_{0 \leq \ell \leq k}$, we only need to compare their respective coefficients. Since $\scal{\rho_{k,1} e_{\ell}}{e_{\ell}}_k = 2^{-k} \binom{k}{\ell}$ and in view of Equation (\ref{eq:coeffs_gaussian}), this requires to check that the inequality 
\[  \sqrt{\frac{(k+1)(2c+1)}{2\pi}} \int_{-1}^1 (1 + x)^{\ell} (1 - x)^{k-\ell} \exp\left( - c(k+1) x^2 \right) \ dx \geq 1 \]
holds. Let us assume for the sake of simplicity that $k$ is even, the odd case being similar. One readily checks that the above integral is minimal for $\ell = \tfrac{k}{2}$. Hence we need to study
\[ I_k(c) = \sqrt{2c+1} \int_{-1}^1 (1 - x^2)^{\frac{k}{2}} \exp\left( - c(k+1) x^2 \right) \ dx. \]
One can check that $I_k$ is decreasing in $c \in [2, +\infty)$, and setting $y = \sqrt{c} \ x$ yields
\[ I_k(c) = \sqrt{2+\frac{1}{c}} \int_{-c}^c \left(1 - \frac{y^2}{c}\right)^{\frac{k}{2}} \exp\left( - (k+1) y^2 \right) \ dy  \underset{c \to +\infty}{\longrightarrow} \sqrt{2} \int_{\R} \exp\left( - (k+1) y^2 \right) \ dy = \sqrt{\frac{2\pi}{k+1}}. \]
Thus for every $c \geq 2$, $I_k(c) \geq \sqrt{\frac{2\pi}{k+1}}$, which implies the above inequality.

\end{proof}

The next step is to observe that there is an exact version of Egorov's theorem for rotations on $\S^2$. This is well-known, but we give a simple proof using our notation, in Appendix A, for the sake of completeness.

\begin{prop}
\label{prop:exact_egorov}
Let $f \in \classe{\infty}{(M)}$, $k \geq 1$ and $\beta \in [0,2\pi]$. Let $U_k(\beta) = \zeta_k(g_{\beta})$ as above; then $U_k(\beta) T_k(f) U_k(\beta)^* = T_k(f \circ R_{-\beta})$, where we recall that $R_{\gamma}$ is the rotation of angle $\gamma$ about the $x_2$ axis.
\end{prop}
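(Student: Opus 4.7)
The plan is to extend $U_k(\beta)$ to the full Hilbert space $L^2(M,L^k)$ by the same formula and exploit the compatibility of the $SU(2)$ action with both the Hermitian structure and holomorphy.

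First, I would define $\widetilde{U}_k(\beta) : L^2(M,L^k) \to L^2(M,L^k)$ by $(\widetilde{U}_k(\beta)s)(m) = g_\beta \cdot s(g_\beta^{-1} m)$, the formula given in the paper for smooth sections, and check two points. The first is unitarity on $L^2$: since $g_\beta \in SU(2)$ acts on $\C\P^1$ as a biholomorphism preserving $\omega_{\text{FS}}$ (hence the Liouville measure) and its action on $L^k$ is by fiberwise isometries of $h_k$, a change of variables yields $\scal{\widetilde{U}_k(\beta) s}{\widetilde{U}_k(\beta) s'}_k = \scal{s}{s'}_k$. The second is that $\widetilde{U}_k(\beta)$ preserves $\Hil_k$: indeed its restriction to smooth holomorphic sections coincides with the representation $\zeta_k(g_\beta) = U_k(\beta)$, because the action of $g_\beta$ on $\C\P^1$ is holomorphic, and this restriction is onto by applying the same argument to $g_\beta^{-1}$. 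Since a unitary operator on a Hilbert space that preserves a closed subspace commutes with the orthogonal projector onto it, we deduce $\widetilde{U}_k(\beta) \Pi_k = \Pi_k \widetilde{U}_k(\beta)$.

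Next I would compute $\widetilde{U}_k(\beta) M_f \widetilde{U}_k(\beta)^*$ directly. Applying both sides to an arbitrary section $s \in L^2(M,L^k)$ at a point $m$ gives
\[ (\widetilde{U}_k(\beta) M_f \widetilde{U}_k(\beta)^* s)(m) = g_\beta\bigl(f(g_\beta^{-1} m)\cdot (\widetilde{U}_k(\beta)^* s)(g_\beta^{-1}m)\bigr) = f(g_\beta^{-1} m)\, s(m), \]
using that $f$ is scalar and commutes with the $g_\beta$-action on the fiber, and that $\widetilde{U}_k(\beta) \widetilde{U}_k(\beta)^* = \mathrm{Id}$. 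Hence $\widetilde{U}_k(\beta) M_f \widetilde{U}_k(\beta)^* = M_{f \circ g_\beta^{-1}}$. To identify this with $M_{f \circ R_{-\beta}}$, I would invoke the fact that $\tau_2$ is by definition the infinitesimal generator of rotations about the $x_2$ axis under the isomorphism $\mathfrak{su}(2) \simeq \mathfrak{so}(3)$, so that $g_\beta = \exp(i\beta\tau_2)$ acts on $\C\P^1 \simeq \S^2$ (via $\pi_N$) precisely as $R_\beta$, and therefore $g_\beta^{-1}$ corresponds to $R_{-\beta}$.

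Combining the two ingredients yields
\[ U_k(\beta)T_k(f)U_k(\beta)^* = \Pi_k \widetilde{U}_k(\beta) M_f \widetilde{U}_k(\beta)^* \Pi_k = \Pi_k M_{f \circ R_{-\beta}} \Pi_k = T_k(f\circ R_{-\beta}), \]
which is the desired identity. The main subtle point is keeping track of the various $SU(2)$ actions (on $\C^2$, on $\C\P^1$, on $L^k$, and on sections) consistently; everything else is formal once unitarity and the invariance of $\Hil_k$ under $\widetilde{U}_k(\beta)$ are established. There is no semiclassical content in this statement, which is why the conclusion is an exact equality, not an expansion in powers of $k^{-1}$.
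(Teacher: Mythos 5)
Your proof is correct, and it takes a more structural route than the paper's. The paper writes out $(T_k(f)U_k(\beta)^*\varphi)(x)$ as an integral against the Schwartz kernel $\Pi_k(x,y)$, then performs a change of variables using the $SU(2)$-equivariance of the kernel (which it verifies from the explicit local formula for $\Pi_k$ on $\C\P^1$) together with the invariance of the Liouville measure. You instead never touch the kernel: you extend $U_k(\beta)$ to a unitary $\widetilde{U}_k(\beta)$ on all of $L^2(M,L^k)$, observe that it maps $\Hil_k$ onto itself and hence commutes with the Bergman projector $\Pi_k$, and conjugate the multiplication operator directly. The two arguments rest on the same underlying facts (the $g_\beta$-action preserves $h_k$, $\mu$, and the holomorphic structure), and indeed your commutation $\widetilde{U}_k(\beta)\Pi_k = \Pi_k\widetilde{U}_k(\beta)$ is equivalent to the kernel equivariance $\Pi_k(g x, g y) = g\,\Pi_k(x,y)$ that the paper invokes; but your derivation of it is abstract rather than computational, which makes the argument shorter and immediately portable to any group acting by holomorphic isometries of the prequantum data on an arbitrary quantizable K\"ahler manifold, not just $SU(2)$ on $\S^2$. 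Your handling of the small subtleties (surjectivity of $\widetilde{U}_k(\beta)$ on $\Hil_k$ before concluding the commutation with $\Pi_k$, and the sign convention identifying $g_\beta^{-1}$ with $R_{-\beta}$, which matches the convention used in Lemma \ref{lm:equiv}) is careful and consistent with the paper.
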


Since conjugation by a unitary operator preserves the order, this implies that 
\begin{equation} \rho_{k,2}^{\alpha} \leq \frac{1}{\sqrt{k+1}} T_k(f_k^c \circ R_{-\alpha}) \label{eq:bound_rho2}  \end{equation}
for every $c > 0$, with $f_k^c$ as above. This allows us to obtain the following upper bound.

\begin{prop}
\label{prop:bound_fid_rho_BTO}
The fidelity of $\rho_{k,1}$ and $\rho_{k,2}^{\alpha}$ satisfies
\[ F(\rho_{k,1}, \rho_{k,2}^{\alpha}) \leq \frac{1}{k+1} F\left(T_k(f_k^c), T_k(f_k^c \circ R_{-\alpha})\right), \] 
for every $c \geq 2$ and $k \geq 1$, where $f_k^c$ is the function defined in Equation (\ref{eq:def_fk}).
\end{prop}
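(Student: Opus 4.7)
The plan is to derive the bound by combining the two operator inequalities dominating $\rho_{k,1}$ and $\rho_{k,2}^{\alpha}$ by Berezin-Toeplitz operators with a general monotonicity property of the fidelity, extended to pairs of positive trace-class operators via the same formula $F(A,B)=\bigl(\Tr\sqrt{\sqrt{A}B\sqrt{A}}\bigr)^2$. First, Lemma \ref{lm:rho_approx} gives $\rho_{k,1}\leq (k+1)^{-1/2}T_k(f_k^c)$ for $c\geq 2$. Applying Lemma \ref{lm:equiv} to conjugate this operator inequality by the unitary $U_k(\alpha)$, and invoking the exact Egorov identity of Proposition \ref{prop:exact_egorov}, produces its rotated counterpart $\rho_{k,2}^{\alpha}\leq (k+1)^{-1/2}T_k(f_k^c\circ R_{-\alpha})$, which is exactly equation (\ref{eq:bound_rho2}).

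The heart of the argument is the following monotonicity claim: if $0\leq A\leq A'$ and $0\leq B\leq B'$ are positive trace-class operators on $\Hil_k$, then $F(A,B)\leq F(A',B')$. The defining formula is symmetric in its two arguments, since by the polar decomposition
\[ \Tr\sqrt{\sqrt{A}B\sqrt{A}}=\|\sqrt{A}\sqrt{B}\|_{\Tr}=\|\sqrt{B}\sqrt{A}\|_{\Tr}=\Tr\sqrt{\sqrt{B}A\sqrt{B}}, \]
so it suffices to establish monotonicity in the second argument. Given $B\leq B'$, sandwiching yields $\sqrt{A}B\sqrt{A}\leq\sqrt{A}B'\sqrt{A}$; L\"owner's operator monotonicity of $\sqrt{\cdot}$ then gives $\sqrt{\sqrt{A}B\sqrt{A}}\leq\sqrt{\sqrt{A}B'\sqrt{A}}$, and taking traces (which preserves order on positive operators) and squaring delivers $F(A,B)\leq F(A,B')$, hence also the two-sided claim.

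Combining monotonicity with the scaling identity $F(\lambda A,\lambda B)=\lambda^{2} F(A,B)$ (immediate from $\sqrt{\lambda A}=\sqrt{\lambda}\sqrt{A}$) and the two dominations obtained above, one concludes
\[ F(\rho_{k,1},\rho_{k,2}^{\alpha})\leq F\!\Bigl(\tfrac{1}{\sqrt{k+1}}T_k(f_k^c),\tfrac{1}{\sqrt{k+1}}T_k(f_k^c\circ R_{-\alpha})\Bigr)=\frac{1}{k+1}\,F\bigl(T_k(f_k^c),T_k(f_k^c\circ R_{-\alpha})\bigr), \]
which is the stated inequality. The only nontrivial point is the monotonicity lemma itself, and it rests squarely on L\"owner's theorem for the operator square root; once this is granted, the rest is a clean assembly of Lemma \ref{lm:rho_approx}, Lemma \ref{lm:equiv} and Proposition \ref{prop:exact_egorov} together with the obvious quadratic scaling of $F$.
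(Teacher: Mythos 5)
Your proposal is correct and follows essentially the same route as the paper: both rest on the dominations $\rho_{k,1}\leq (k+1)^{-1/2}T_k(f_k^c)$ and $\rho_{k,2}^{\alpha}\leq (k+1)^{-1/2}T_k(f_k^c\circ R_{-\alpha})$, the monotonicity of the fidelity function, and the quadratic scaling $F(\lambda A,\lambda B)=\lambda^2F(A,B)$. The only difference is that you supply a (correct) proof of the monotonicity via operator monotonicity of the square root, whereas the paper simply cites it.
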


\begin{proof}
This immediately follows from Equations (\ref{eq:bound_rho1}) and (\ref{eq:bound_rho2}) and from the monotonicity of the fidelity, see for instance \cite{Mol}: if $A, B, C$ are positive semidefinite Hermitian operators with $A \leq B$, then $F(A,C) \leq F(B,C)$.
\end{proof}

\subsubsection{Estimating the new fidelity function}

As a consequence of the previous result, if we manage to show that the fidelity of $T_k(f_k^c)$ and $T_k(f_k^c \circ R_{-\alpha})$ is of order $\bigO{1}$, we will know that the fidelity of $\rho_{k,1}$ and $\rho_{k,2}^{\alpha}$ is a $\bigO{k^{-1}}$. In Figures \ref{fig:comp_fid_BTO_pi2} and \ref{fig:comp_fid_BTO_pi4}, we compare $F(T_k(f_k^c), T_k(f_k^c \circ R_{-\alpha}))$ with the rescaled fidelity $k F(\rho_{k,1}, \rho_{k,2}^{\alpha})$ for $\alpha = \frac{\pi}{2}$ and $\alpha = \frac{\pi}{4}$, and different values of $c$. We observe on these numerical simulations that for large $c$, the above inequality seems to give an excellent approximation for $F(\rho_{k,1}, \rho_{k,2}^{\alpha})$. We will now try to use this fact to obtain a good upper bound on this fidelity. 

\paragraph{Change of scale.} In order to estimate $F(T_k(f_k^c), T_k(f_k^c \circ R_{-\alpha}))$, a natural idea is to try to approximate the operator $ \sqrt{T_k\left(f_k^c\right)}  \sqrt{T_k\left(g_k^c\right)}$ involved in the definition of this fidelity by another Berezin-Toeplitz operator. For instance, it is tempting to conjecture that the square root of $T_k(f_k^c)$ coincides with $T_k(\sqrt{f_k^c})$ up to some small remainder, but one cannot apply the usual symbolic calculus for Berezin-Toeplitz operators here, because $f_k^c$ does not belong to any reasonable symbol class. Indeed, it is of the form $f(k^{1/2} \cdot)$ for some $f$ independent of $k$, and $1/2$ is precisely the critical exponent; the product rule with sharp remainder for Berezin-Toeplitz operators \cite[Equation (P3)]{ChaPolsharp} reads, for functions of the form $f_k = f(k^{\varepsilon} \cdot)$ and $g_k = g(k^{\varepsilon} \cdot)$ with $f$ and $g$ of unit uniform norm,
\[ \| T_k(f_k) T_k(g_k) - T_k(f_k g_k) \| \leq \gamma k^{-1+2\varepsilon} \]
for some constant $\gamma > 0$. Hence the remainder is indeed small if and only if $\varepsilon < 1/2$.

In order to overcome this difficulty, the idea is to replace this power $1/2$ by $1/2 - \delta$ for some $\delta > 0$. More precisely, let 
\[ f: \S^2 \to \R^+, \quad (x_1, x_2, x_3) \mapsto \sqrt{\frac{2(2c+1)}{\pi}} \exp\left( -x_3^2 \right), \]
so that $f_k^c = f\left(\sqrt{c(k+1)} \ \cdot \right)$, and given $0 < \delta < 1/2$, let $f_k^{c,\delta} = f(\sqrt{c}(k+1)^{\frac{1}{2} - \delta} \cdot)$. In order to simplify notation, we also introduce the function $g = f \circ R_{-\alpha}$, so that $g_k^c := f_k^c \circ R_{-\alpha} = g\left(\sqrt{c(k+1)} \ \cdot \right)$, and define $g_k^{c,\delta}$ in the same way. Then $f_k^c \leq f_k^{c,\delta}$, hence we obtain with the same arguments as above that 
\begin{equation} F\left(T_k(f_k^c), T_k(f_k^c \circ R_{-\alpha})\right) \leq F\left(T_k(f_k^{c,\delta}), T_k(g_k^{c,\delta}) \right) =  \left\| \sqrt{T_k\left(f_k^{c,\delta}\right)}  \sqrt{T_k\left(g_k^{c,\delta}\right)}  \right\|_{\Tr}^2. \label{eq:bound_fid_delta} \end{equation}
What we have gained is that we can use the product rule for the operators on the right-hand side of this inequality to replace $\sqrt{T_k(f_k^{c,\delta})} \sqrt{T_k(g_k^{c,\delta})}$ by $T_k\left(\sqrt{f_k^{c,\delta} g_k^{c,\delta}}\right)$, and the trace norm of the latter is easy to compute, as a simple application of the stationary phase lemma, with details available in Appendix B.

\begin{prop}
\label{prop:trace_norm_BTO}
For every $\delta \in (0,\frac{1}{2})$,
\[ \left\| T_k\left(\sqrt{f_k^{c,\delta} g_k^{c,\delta}}\right) \right\|_{\Tr} =  \frac{2 k^{2\delta}}{\sqrt{c \pi} \sin \alpha} + \bigO{k^{4\delta-1} c^{-\frac{3}{2}}}. \]
\end{prop}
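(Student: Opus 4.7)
The starting observation is that $h_k := \sqrt{f_k^{c,\delta}g_k^{c,\delta}}$ is a non-negative real function, so the Berezin-Toeplitz operator $T_k(h_k) = \Pi_k h_k \Pi_k$ is a positive semidefinite self-adjoint operator on $\Hil_k$. Consequently its trace norm coincides with its trace, and since Equation (\ref{eq:proj_sphere}) together with $|t_0(z)|^2 = (1+|z|^2)^{-1}$ yields the exact identity $|\Pi_k(x,x)| = \frac{k+1}{2\pi}$ for every $x \in \C\P^1$, one gets
\[ \|T_k(h_k)\|_{\Tr} = \Tr(T_k(h_k)) = \int_{\S^2} h_k(x) |\Pi_k(x,x)| \, d\mu_{\S^2}(x) = \frac{k+1}{2\pi} \int_{\S^2} h_k \, d\mu_{\S^2}. \]
The problem thus reduces to a classical asymptotic evaluation of a single real integral.

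Next, I would write the phase of this integral explicitly. Using $R_{-\alpha}(x_1,x_2,x_3)_3 = x_1 \sin\alpha + x_3 \cos\alpha$, the integrand reads $h_k(x) = \sqrt{2(2c+1)/\pi}\,\exp\!\bigl(-\tfrac{1}{2}c(k+1)^{1-2\delta} Q(x)\bigr)$ where $Q(x) = x_3^2 + (x_1\sin\alpha + x_3\cos\alpha)^2$. Since $\alpha \neq 0$, the zero set of $Q$ on $\S^2$ is exactly $\{m_1, m_2\} = \Gamma_1 \cap \Gamma_2^\alpha$, and $Q$ is bounded below by a positive constant away from any fixed neighbourhood of these points; hence the complementary region contributes a $\bigO{k^{-\infty}}$. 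The antipodal symmetry $x \mapsto -x$ preserves both $Q$ and the Liouville measure, so the two intersection points give equal contributions, and it suffices to compute the contribution of a neighbourhood of $m_2 = (0,1,0)$.

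Near $m_2$ I would use $(x_1,x_3)$ as local coordinates, with $x_2 = \sqrt{1-x_1^2-x_3^2}$ and $d\mu_{\S^2} = \tfrac{1}{2}(1-x_1^2-x_3^2)^{-1/2}dx_1\,dx_3$ (the factor $\tfrac{1}{2}$ comes from the $-\tfrac{1}{2}\omega_{\S^2}$ convention). The quadratic form $Q(x_1,x_3)$ has coefficient matrix
\[ M_0 = \begin{pmatrix} \sin^2\alpha & \sin\alpha\cos\alpha \\ \sin\alpha\cos\alpha & 1+\cos^2\alpha \end{pmatrix}, \qquad \det M_0 = \sin^2\alpha, \]
and Laplace's method in dimension two gives the leading Gaussian integral $\pi/(\beta\sin\alpha)$ with $\beta = \tfrac{1}{2}c(k+1)^{1-2\delta}$, together with an amplitude factor $\tfrac{1}{2}\sqrt{2(2c+1)/\pi}$ evaluated at $0$ (and a relative remainder $\bigO{\beta^{-1}}$ coming from the next term in the Laplace expansion). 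Summing the two identical contributions from $m_1$ and $m_2$ and multiplying by $(k+1)/(2\pi)$ yields
\[ \|T_k(h_k)\|_{\Tr} = \frac{(k+1)^{2\delta}}{c\sin\alpha}\sqrt{\tfrac{2(2c+1)}{\pi}}\Bigl(1 + \bigO{(c(k+1)^{1-2\delta})^{-1}}\Bigr). \]
Expanding $\sqrt{2(2c+1)/\pi} = 2\sqrt{c/\pi}(1+\bigO{c^{-1}})$ and substituting $(k+1)^{2\delta} = k^{2\delta}(1+\bigO{k^{-1}})$ then produces the claimed leading constant $2k^{2\delta}/(\sqrt{c\pi}\sin\alpha)$, and combining the three error sources gives the stated remainder.

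The computation is essentially routine, but the main technical point to watch is tracking the joint dependence on $k$ and $c$ in the error, since $c$ is meant to be eventually chosen large together with $k$; the key is that $Q$ itself does not depend on $c$ or $k$, so the Hessian determinant $\sin^2\alpha$ is independent of these parameters, and the standard Laplace sub-leading correction scales as $\beta^{-1}$ uniformly. The only genuinely delicate issue would be ensuring that the Laplace expansion's implicit constants do not blow up as $c$ varies, which is handled by performing the stationary phase expansion directly on the fixed function $Q$ rather than on a $c$-dependent phase.
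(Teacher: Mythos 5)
Your argument is correct and essentially identical to the paper's: both reduce the trace norm to $\frac{k+1}{2\pi}\int_{\S^2}\sqrt{f_k^{c,\delta}g_k^{c,\delta}}\,d\mu$ (using positivity of the Toeplitz operator) and then apply the Laplace method at the two intersection points, where the relevant Hessian determinant is $\sin^2\alpha$; your choice of the coordinates $(x_1,x_3)$ on the sphere, in which the phase is exactly the quadratic form $Q$, is merely a cleaner alternative to the paper's stereographic-projection-plus-polar-coordinates computation. One shared (and, for the application where $c=k^{\delta}$, harmless) imprecision: the replacement $\sqrt{2(2c+1)/\pi}=2\sqrt{c/\pi}\left(1+\bigO{c^{-1}}\right)$ actually contributes an error of size $\bigO{k^{2\delta}c^{-3/2}}$, which exceeds the stated $\bigO{k^{4\delta-1}c^{-3/2}}$ whenever $\delta<\tfrac{1}{2}$ --- but the paper's own proof glosses over this in exactly the same way, so it is a defect of the statement rather than of your argument.
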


Note that when $c$ is of order $k^{4 \delta}$, this trace norm is a $\bigO{1}$; however, we will see below that we cannot consider such a $c$.

\paragraph{Control of the remainders.} The tricky part is to understand the structure of the remainders appearing when replacing $\sqrt{T_k(f_k^{c,\delta})} \sqrt{T_k(g_k^{c,\delta})}$ by $T_k\left(\sqrt{f_k^{c,\delta} g_k^{c,\delta}}\right)$. By the product rule \cite[Equation (P3)]{ChaPolsharp}, there exists $\gamma > 0$ such that $T_k\left( \sqrt{f_k^{c,\delta}}  \right)^2 = T_k\left(f_k^{c,\delta}\right) + A_k$ where $\|  A_k \| \leq \gamma c^{\frac{3}{2}} k^{-2\delta}$. Since the square root is operator monotone, this yields \cite{And} $T_k\left( \sqrt{f_k^{c,\delta}}  \right)  =  \sqrt{T_k\left(f_k^{c,\delta}\right)} + R_k$ where $\|  R_k \| \leq \gamma_1 c^{\frac{3}{4}} k^{-\delta}$. Since $\sqrt{T_k\left(f_k^{c,\delta}\right)}$ and $T_k\left( \sqrt{f_k^{c,\delta}}  \right)$
have norm smaller than some constant times $c^{\frac{1}{4}}$, $\|  R_k \| \leq \gamma_2 \min(c^{\frac{1}{4}}, c^{\frac{3}{4}} k^{-\delta})$. By applying the exact version of Egorov's theorem stated earlier, we deduce from this that $ \sqrt{T_k\left(g_k^{c,\delta}\right)} = T_k\left( \sqrt{g_k^{c,\delta}}  \right)  + S_k $
with $\|  S_k \| \leq  \gamma_2 \min(c^{\frac{1}{4}}, c^{\frac{3}{4}} k^{-\delta})$. Now, the triangle inequality for the trace norm reads
\begin{equation} \begin{split} \left\| \sqrt{T_k\left(f_k^{c,\delta}\right)}  \sqrt{T_k\left(g_k^{c,\delta}\right)}  \right\|_{\Tr} \leq  \left\|T_k\left( \sqrt{f_k^{c,\delta}}  \right)  T_k\left( \sqrt{g_k^{c,\delta}}  \right)  \right\|_{\Tr} + \left\|T_k\left( \sqrt{f_k^{c,\delta}}  \right)  S_k  \right\|_{\Tr} \\ + \left\|R_k  T_k\left( \sqrt{g_k^{c,\delta}}  \right)  \right\|_{\Tr} + \left\|R_k S_k  \right\|_{\Tr}. \end{split} \label{eq:estimate_tracenorm}\end{equation}
We start by estimating the last three terms on the right-hand side of this equation. This is in fact delicate, since we want to discriminate between what happens near the intersection points of $\Gamma_1$ and $\Gamma_2^{\alpha}$ and what happens away of these points. In order to do so, we consider a cutoff function $\chi \in \classe{\infty}{(\R,\R^+)}$ smaller than one, equal to one on $[-1/2,1/2]$ and vanishing outside $(-1,1)$, and we define for $r > 1$
\[\chi_k^{r,\delta}: \S^2 \to \R, \quad (x_1, x_2, x_3) \mapsto \chi(r k^{\frac{1}{2} - \delta} x_3) \chi(r k^{\frac{1}{2} - \delta} x_3 \circ R_{-\alpha}), \]
so that $\chi_k^{r,\delta}$ vanishes outside the union of two ``parallelograms'' centered at each of these intersection points and with side length of order $r^{-1} k^{\delta - \frac{1}{2}}$. Writing $1 = \chi_k^{r,\delta} + 1 - \chi_k^{r,\delta}$ and using the triangle inequality, we obtain that
\begin{equation} \left\|T_k\left( \sqrt{f_k^{c,\delta}}  \right)  S_k  \right\|_{\Tr} \leq \left\|T_k\left( \chi_k^{r,\delta} \sqrt{f_k^{c,\delta}}  \right)  S_k  \right\|_{\Tr} + \left\|T_k\left( (1 - \chi_k^{r,\delta}) \sqrt{f_k^{c,\delta}}  \right)  S_k  \right\|_{\Tr}. \label{eq:first_term} \end{equation}
Regarding the first term, H\"older's inequality for Schatten norms yields
\[ \left\|T_k\left( \chi_k^{r,\delta} \sqrt{f_k^{c,\delta}}  \right)  S_k  \right\|_{\Tr} \leq \left\|T_k\left( \chi_k^{r,\delta} \sqrt{f_k^{c,\delta}}  \right)  \right\|_{\Tr} \left\| S_k  \right\| = \Tr\left( T_k\left( \chi_k^{r,\delta} \sqrt{f_k^{c,\delta}}  \right)  \right) \left\| S_k  \right\|,  \]
where the last equality comes from the fact that $T_k\left( \chi_k^{r,\delta} \sqrt{f_k^{c,\delta}}  \right) \geq 0$ since $\chi_k^{r,\delta} \sqrt{f_k^{c,\delta}} $ takes its values in $\R^+$. The trace of this operator satisfies
\[ \Tr\left( T_k\left( \chi_k^{r,\delta} \sqrt{f_k^{c,\delta}}  \right)  \right) = \frac{k+1}{2\pi} \int_{\S^2} \chi_k^{r,\delta} \sqrt{f_k^{c,\delta}} \ d\mu \leq \frac{k+1}{2\pi}  \left(\frac{2(2c+1)}{\pi}\right)^{\frac{1}{4}} \int_{\S^2} \chi_k^{r,\delta} \ d\mu, \]
hence it is a $\bigO{k^{2\delta} r^{-2} c^{\frac{1}{4}}}$, since the area of each of the aforementioned parallelograms is of order $r^{-2} k^{2\delta - 1}$. Consequently,
\[ \left\|T_k\left( \chi_k^{r,\delta} \sqrt{f_k^{c,\delta}}  \right)  S_k  \right\|_{\Tr} = \bigO{k^{2\delta} r^{-2}\min(c^{\frac{1}{2}}, ck^{-\delta})}.  \]
In order to estimate the second term on the right-hand side of Equation (\ref{eq:first_term}), we use once again H\"older's inequality to derive
\[  \left\|T_k\left( (1 - \chi_k^{r,\delta}) \sqrt{f_k^{c,\delta}}  \right)  S_k  \right\|_{\Tr} \leq  \left\|T_k\left( (1 - \chi_k^{r,\delta}) \sqrt{f_k^{c,\delta}}  \right)  \right\| \Tr(S_k). \]
We have that 
\[ \left\|T_k\left( (1 - \chi_k^{r,\delta}) \sqrt{f_k^{c,\delta}}  \right)  \right\| \leq \left\| (1 - \chi_k^{r,\delta}) \sqrt{f_k^{c,\delta}} \right\|_{\infty} = \bigO{c^{\frac{1}{4}} \exp(-r^{-2} c)}. \]
Since moreover $\Tr(S_k) \leq \dim(\Hil_k) \| S_k \|$, we obtain that 
\[  \left\|T_k\left( (1 - \chi_k^{r,\delta}) \sqrt{f_k^{c,\delta}}  \right)  S_k  \right\|_{\Tr} = \bigO{k \exp(-r^{-2} c) \min(c^{\frac{1}{2}}, c k^{-\delta})}, \]
and finally, we deduce from Equation (\ref{eq:first_term}) that $\left\|T_k\left( \sqrt{f_k^{c,\delta}}  \right)  S_k  \right\|_{\Tr} = \bigO{\varepsilon(c,r,k)}$ where
\begin{equation} \varepsilon(c,r,k) = \max\left(k^{2\delta} r^{-2}, k \exp(-r^{-2} c)\right) \min\left(c^{\frac{1}{2}}, c k^{-\delta}\right). \label{eq:epsilon}\end{equation}

The trace norm of $R_k  T_k\left( \sqrt{g_k^{c,\delta}}  \right)$ can be estimated in a similar way. It remains to control the trace norm of $R_k S_k$; we do not expect this term to be small. However, we can say the following: from Lemma \ref{lm:BTO_radial}, we know that both $T_k\left( \sqrt{f_k^{c,\delta}} \right)$ and $ \sqrt{T_k\left(f_k^{c,\delta}\right)}$ are diagonal in the basis $(e_{\ell})_{0 \leq \ell \leq k}$, hence $R_k$ also is. Since moreover $R_k \leq T_k\left( \sqrt{f_k^{c,\delta}} \right)$, we conclude that $R_k^2 \leq T_k\left( \sqrt{f_k^{c,\delta}} \right)^2$. Thus, it follows from Proposition \ref{prop:exact_egorov} that $S_k^2 \leq T_k\left( \sqrt{g_k^{c,\delta}} \right)^2$ as well. Therefore, the monotonicity of the fidelity function yields
\[ \left\|R_k S_k  \right\|_{\Tr} = F(R_k^2,S_k^2) \leq F\left(T_k\left( \sqrt{f_k^{c,\delta}} \right)^2, T_k\left( \sqrt{g_k^{c,\delta}}\right)^2 \right) = \left\|T_k\left( \sqrt{f_k^{c,\delta}} \right) T_k\left( \sqrt{g_k^{c,\delta}} \right) \right\|_{\Tr}. \]
Using all of the above estimates in Equation (\ref{eq:estimate_tracenorm}), we finally obtain that 
\begin{equation} \left\| \sqrt{T_k\left(f_k^{c,\delta}\right)}  \sqrt{T_k\left(g_k^{c,\delta}\right)}  \right\|_{\Tr} \leq  2 \left\|T_k\left( \sqrt{f_k^{c,\delta}}  \right)  T_k\left( \sqrt{g_k^{c,\delta}} \right) \right\|_{\Tr}  + \bigO{\varepsilon(c,r,k)}, \label{eq:second_estimate_tracenorm}\end{equation}
see Equation (\ref{eq:epsilon}). It remains to control the remainders which appear when we replace $\left\|T_k\left( \sqrt{f_k^{c,\delta}}  \right)  T_k\left( \sqrt{g_k^{c,\delta}}  \right)  \right\|_{\Tr}$ by $\left\|T_k\left( \sqrt{f_k^{c,\delta} g_k^{c,\delta}}  \right)  \right\|_{\Tr} $. We claim that we can argue as before to control them, thanks to the cutoff function $\chi_k^{r,\delta}$; indeed, the uniform norm of the function $(1 - \chi_k^{r,\delta})\sqrt{f_k^{c,\delta} g_k^{c,\delta}}$ is also bounded by some constant times $\exp(- \lambda r^{-2} c)$ where $\lambda > 0$ does not depend on $c, r, k$. Hence we get the estimate 
\[  \left\| \sqrt{T_k\left(f_k^{c,\delta}\right)}  \sqrt{T_k\left(g_k^{c,\delta}\right)}  \right\|_{\Tr} \leq  2 \left\|T_k\left( \sqrt{f_k^{c,\delta} g_k^{c,\delta}}  \right)  \right\|_{\Tr} + \bigO{\varepsilon(c,r,k)}, \]
which yields the following result.

\begin{thm}
\label{thm:fid_sphere}
The fidelity of $\rho_{k,1}$ and $\rho_{k,2}^{\alpha}$ satisfies, for every $\delta \in (0,\frac{1}{2}]$,
\[ F(\rho_{k,1},\rho_{k,2}^{\alpha}) \leq \frac{16 k^{3\delta - 1}}{\pi \sin^2 \alpha} + \bigO{k^{\frac{25 \delta}{12} - 1}}.  \]
\end{thm}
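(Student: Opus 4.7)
The plan is to chain together Proposition \ref{prop:bound_fid_rho_BTO}, the trace-norm inequality
\[
\left\|\sqrt{T_k(f_k^{c,\delta})}\sqrt{T_k(g_k^{c,\delta})}\right\|_{\Tr}\le 2\left\|T_k\!\left(\sqrt{f_k^{c,\delta}g_k^{c,\delta}}\right)\right\|_{\Tr}+\bigO{\varepsilon(c,r,k)}
\]
established just above the statement, and Proposition \ref{prop:trace_norm_BTO}, and then to optimize the two free parameters $c$ and $r$. Squaring the above inequality, inserting the asymptotics of $\|T_k(\sqrt{f_k^{c,\delta}g_k^{c,\delta}})\|_{\Tr}$, and dividing by $k+1$ yields, after expanding the square,
\[
F(\rho_{k,1},\rho_{k,2}^{\alpha})\le \frac{16\,k^{4\delta-1}}{c\,\pi\sin^{2}\alpha}+\bigO{\frac{k^{2\delta-1}}{\sqrt c}\varepsilon(c,r,k)+\frac{k^{6\delta-2}}{c^{2}}+\frac{\varepsilon(c,r,k)^{2}}{k}+\frac{k^{8\delta-3}}{c^{3}}}.
\]

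The key choice is $c=k^{11\delta/6}$ together with $r^{2}=k^{11\delta/6-\eta}$ for an arbitrarily small $\eta>0$. The condition on $r$ makes $r^{-2}c=k^{\eta}$, so the exponential contribution $k\exp(-r^{-2}c)$ inside $\varepsilon(c,r,k)$ is super-polynomially small and the first term of the max in the definition of $\varepsilon$ dominates. Because $c=k^{11\delta/6}\le k^{2\delta}$, the minimum in that definition is equal to $c\,k^{-\delta}=k^{5\delta/6}$, and a direct computation then gives $\varepsilon(c,r,k)=\bigO{k^{\delta+\eta}}$.

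It remains to check that every contribution to the bound is controlled at the announced level. The main term becomes $\frac{16\,k^{13\delta/6-1}}{\pi\sin^{2}\alpha}$, which is bounded by $\frac{16\,k^{3\delta-1}}{\pi\sin^{2}\alpha}$ since $13/6\le 3$. The cross contribution $\frac{k^{2\delta-1}}{\sqrt c}\varepsilon$ equals $k^{(13\delta/12-1)+(\delta+\eta)}=k^{25\delta/12-1+\eta}$, which matches the claimed remainder. The remaining pieces $k^{6\delta-2}/c^{2}=k^{7\delta/3-2}$, $\varepsilon^{2}/k=k^{2\delta-1+2\eta}$, and $k^{8\delta-3}/c^{3}=k^{5\delta/2-3}$ all have exponent strictly smaller than $25\delta/12-1$ for $\delta<1/2$, so they are absorbed into the $\bigO{\cdot}$, and letting $\eta\to 0^{+}$ (equivalently, absorbing the harmless sub-polynomial factor into the $\bigO{\cdot}$) gives the stated bound.

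The main obstacle is that the optimal exponent $11\delta/6$ is not the one that matches the announced leading coefficient: with the naive choice $c=k^{\delta}$ one would obtain the main term $\frac{16\,k^{3\delta-1}}{\pi\sin^{2}\alpha}$ exactly but a much worse error $\bigO{k^{5\delta/2-1}}$. The value $11\delta/6$ is instead dictated by balancing the cross term $k^{2\delta-1}\varepsilon/\sqrt c$ against the target $k^{25\delta/12-1}$, and the stated form of the bound is recovered a posteriori via $k^{13\delta/6-1}\le k^{3\delta-1}$. Staying in the regime $c\le k^{2\delta}$ (needed for the product rule and for the minimum in $\varepsilon$ to equal $ck^{-\delta}$) and the bookkeeping of the sub-polynomial loss $k^{\eta}$ coming from the exponential-decay requirement on $r^{-2}c$ are the main technical nuisances.
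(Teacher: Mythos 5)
Your argument is correct and reaches the stated bound, but it does so via a genuinely different choice of the free parameters than the paper, and the difference is substantive. The paper takes $c=k^{\delta}$ and $r=k^{\delta/3}$, which makes the leading term exactly $\frac{16k^{3\delta-1}}{\pi\sin^{2}\alpha}$; with that choice, however, the dominant remainder is the cross term of order $k^{3\delta/2}\cdot k^{4\delta/3}\cdot k^{-1}=k^{17\delta/6-1}=k^{34\delta/12-1}$, which is \emph{larger} than the advertised $k^{25\delta/12-1}$ (the exponent $25\delta/12$ does not come out of the paper's own parameter choice). Your choice $c=k^{11\delta/6}$, $r^{2}=k^{11\delta/6-\eta}$ is exactly the one that balances the cross term $k^{2\delta-1}c^{-1/2}\varepsilon(c,r,k)$ against $k^{25\delta/12-1}$, so it is the choice that actually substantiates the remainder exponent in the statement; it also produces the smaller main term $\frac{16k^{13\delta/6-1}}{\pi\sin^{2}\alpha}$, which you then legitimately weaken to $\frac{16k^{3\delta-1}}{\pi\sin^{2}\alpha}$. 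Your bookkeeping is right: $c\le k^{2\delta}$ keeps the scale $\sqrt{c}\,k^{1/2-\delta}=k^{1/2-\delta/12}$ subcritical for the product rule and forces $\min(c^{1/2},ck^{-\delta})=ck^{-\delta}$, the condition $r^{-2}c=k^{\eta}$ makes the exponential contribution super-polynomially small so that $\varepsilon(c,r,k)=\bigO{k^{\delta+\eta}}$, and the three quadratic remainders $k^{7\delta/3-2}$, $k^{2\delta-1+2\eta}$, $k^{5\delta/2-3}$ are all of strictly lower order than $k^{25\delta/12-1}$ on the whole range $\delta\in(0,\tfrac12]$.

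The one step to tighten is the final ``letting $\eta\to0^{+}$''. For each fixed $\eta>0$ your bound is $\bigO{k^{25\delta/12-1+\eta}}$ with an implied constant that may depend on $\eta$, and one cannot pass to the limit in such a family of asymptotic statements; as written this does not yield $\bigO{k^{25\delta/12-1}}$. The repair is immediate, though: fix any $\eta<\delta/12$. Then $25\delta/12+\eta<13\delta/6<3\delta$, so your entire right-hand side is $\bigO{k^{13\delta/6-1}}=o\left(k^{3\delta-1}\right)$; hence for $k$ large the fidelity is bounded by $\frac{16k^{3\delta-1}}{\pi\sin^{2}\alpha}$ alone, and the finitely many remaining values of $k$ are absorbed into the $\bigO{k^{25\delta/12-1}}$ by enlarging its constant (recall $F\le 1$). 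So the theorem follows from your argument, in fact with room to spare.
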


\begin{proof}
The above inequality and Proposition \ref{prop:trace_norm_BTO} yield
\[  \left\| \sqrt{T_k\left(f_k^{c,\delta}\right)}  \sqrt{T_k\left(g_k^{c,\delta}\right)}  \right\|_{\Tr} \leq \frac{4 k^{2\delta}}{\sqrt{c \pi} \sin \alpha} + \nu(c,r,k)  \]
where $\nu(c,r,k) = \bigO{k^{4\delta-1} c^{-\frac{3}{2}}} + \bigO{\varepsilon(c,r,k)}$. We would like to take $c = k^{4 \delta}$ so that the first term is a $\bigO{1}$; but then $\nu(c,r,k)$ would be a $\bigO{\max(k^{4\delta} r^{-2}, k^{1 + 2\delta} \exp(- \lambda k^{4 \delta} r^{-2} ))}$, which can not be made into a $o(1)$ no matter which $r$ we choose. So instead we choose $c = k^{\delta}$, so that the first term is a $\bigO{k^{\frac{3\delta}{2}}}$ and 
\[ \nu(c,r,k) = \bigO{k^{\frac{5\delta}{2}-1}} + \bigO{\max(k^{2\delta} r^{-2}, k \exp(- \lambda  r^{-2} k^{\delta})) }. \] 
We want the term in the exponential to be of order $k^{\varepsilon}$ for some $\varepsilon > 0$, and at the same time that $k^{2\delta} r^{-2} = o(k^{\frac{3\delta}{2}})$. In order to do so, we can choose for instance $r = k^{\frac{\delta}{3}}$; then 
\[ \nu(c,r,k) = \bigO{k^{\frac{5\delta}{2}-1}}  + \bigO{\max(k^{\frac{4\delta}{3}}, k \exp(- \lambda  k^{\frac{\delta}{3}})) } = \bigO{k^{\frac{4\delta}{3}}}. \]
Thus, for these choices, we obtain that 
\[  \left\| \sqrt{T_k\left(f_k^{c,\delta}\right)}  \sqrt{T_k\left(g_k^{c,\delta}\right)}  \right\|_{\Tr} \leq \frac{4 k^{\frac{3\delta}{2}}}{\sqrt{\pi} \sin \alpha} + \bigO{k^{\frac{4\delta}{3}}}. \]
Indeed, $\frac{5 \delta}{2} - 1 < \frac{4 \delta}{3}$ since $0 < \delta \leq 1/2$. Consequently, we deduce from Equation (\ref{eq:bound_fid_delta}) that
\[ F\left(T_k(f_k^c), T_k(f_k^c \circ R_{-\alpha})\right) \leq \frac{16 k^{3 \delta}}{\pi \sin^2 \alpha} + \bigO{k^{\frac{25\delta}{12}}} \]
for such $c$, and we use Proposition \ref{prop:bound_fid_rho_BTO} to conclude.
\end{proof}

We conjecture that the constant appearing in this result is not so bad, i.e. that, in fact, this fidelity has an equivalent of the form $F(\rho_{k,1},\rho_{k,2}^{\alpha}) \sim \frac{C}{k \sin^2 \alpha}$ for some constant $C > 0$ when $k$ goes to infinity. We investigate this conjecture in Figure \ref{fig:fid_angle_varying}, where we display the (rescaled) fidelity of $\rho_{k,1}$ and $\rho_{k,2}^{\alpha}$ for some fixed large $k$, as a function of the angle $\alpha$. From this figure, we guess that our conjecture may be true up to allowing that $C = C(\alpha)$ is a function of $\alpha$ taking its values in a small interval.

We display the fidelity of $\rho_{k,1}$ and $\rho_{k,2}^{\alpha}$ together with their sub-fidelity, as functions of $k$, in Figures \ref{fig:fid_vs_subfid_pi2} (where $\alpha = \frac{\pi}{2}$) and \ref{fig:fid_vs_subfid_pi3} (where $\alpha = \frac{\pi}{3}$).

\section{Numerics and a conjecture}
\label{sect:numerics}

\subsection{Numerical computations}

We gather here the outcome of numerical simulations for our examples on $\S^2$.

\begin{figure}[H]
\begin{center}
\includegraphics[scale=0.35]{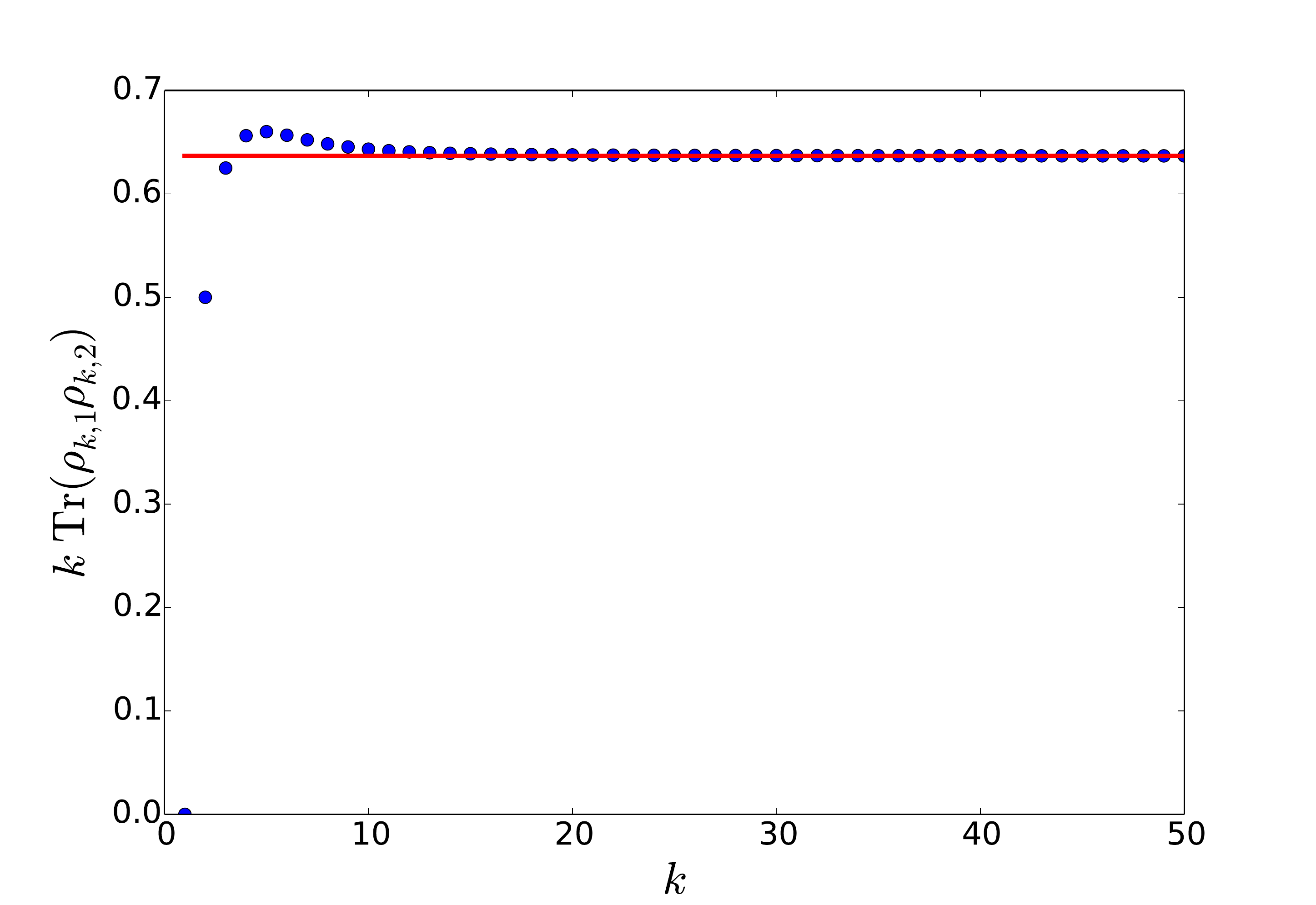}
\end{center}
\caption{The blue circles represent $k \Tr(\rho_{k,1} \rho_{k,2})$ as a function of $k$, for $1 \leq k \leq 50$, computed numerically from Equation (\ref{eq:trace_ortho_theo}). The red line is the theoretical limit $\frac{2}{\pi}$.}
\label{fig:ktrace_ortho}
\end{figure}

\begin{figure}[H]
\hspace{-5mm}
\subfigure[The blue crosses represent $E(\rho_{k,1},\rho_{k,2})$, while the red circles stand for the first term  on the right-hand side of Equation (\ref{eq:subfid_ortho_theo}).]{\includegraphics[scale=0.29]{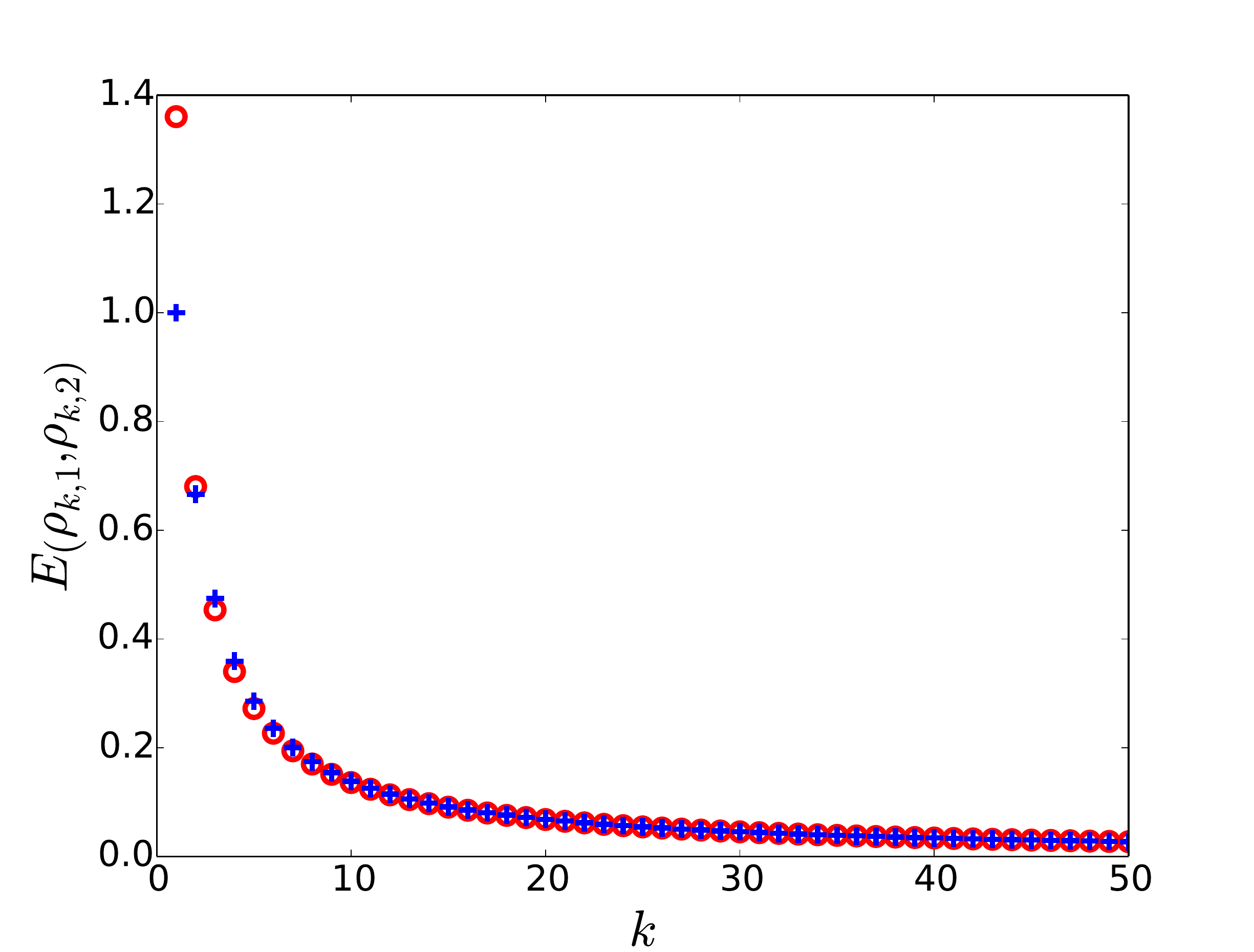} } 
\hspace{5mm}
\subfigure[The blue crosses represent $k E(\rho_{k,1},\rho_{k,2})$, and the red line corresponds to the constant $\frac{2}{\pi} \left( 1 + \sqrt{\frac{2 \sqrt{2}-1}{\sqrt{2}}} \right)$.]{\includegraphics[scale=0.29]{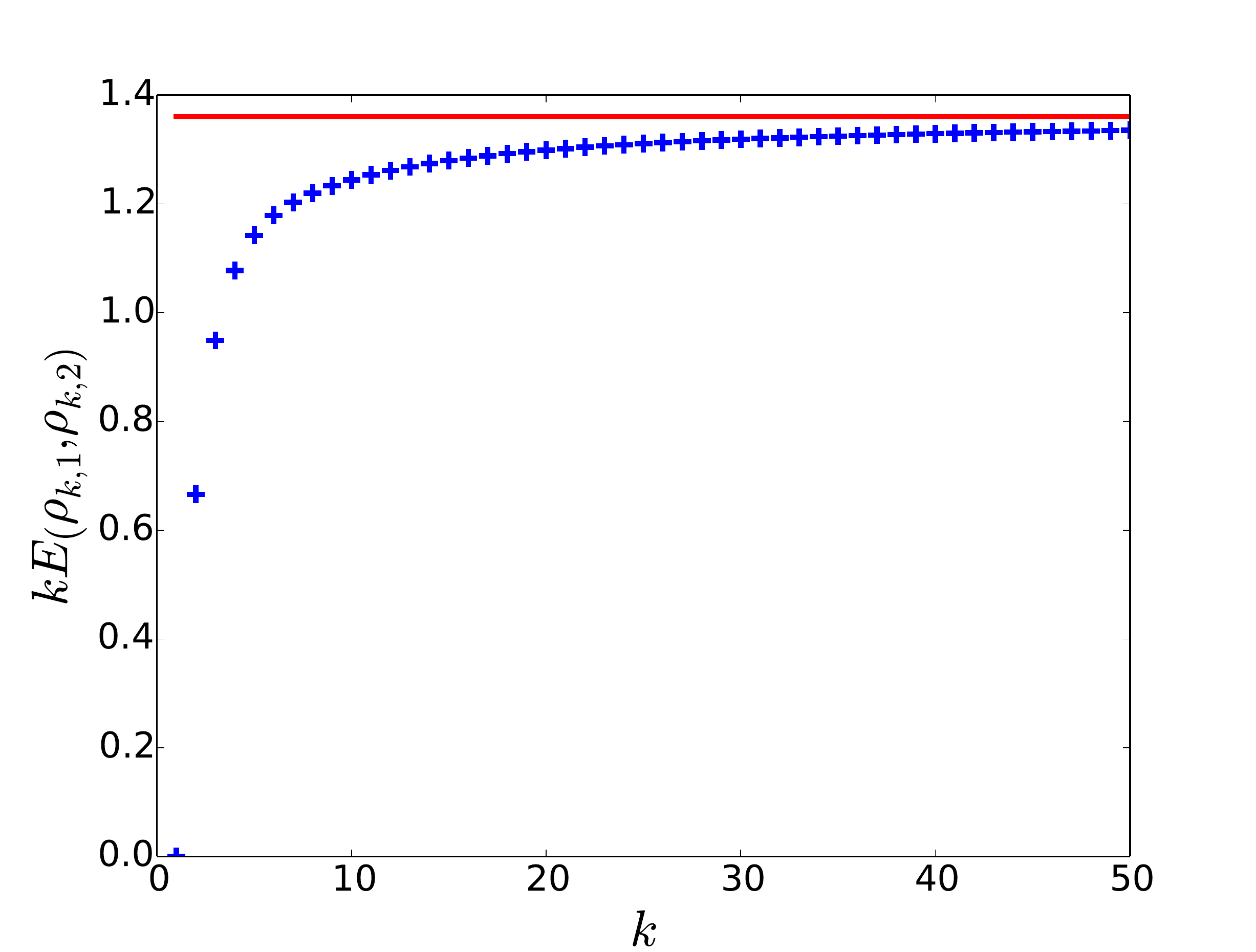} }
\caption{Sub-fidelity $E(\rho_{k,1},\rho_{k,2})$ and $k E(\rho_{k,1},\rho_{k,2})$, as functions of $k$, for $1 \leq k \leq 50$.}
\label{fig:sub_fidelity_ortho}
\end{figure}

\begin{figure}[H]
\begin{center}
\includegraphics[scale=0.35]{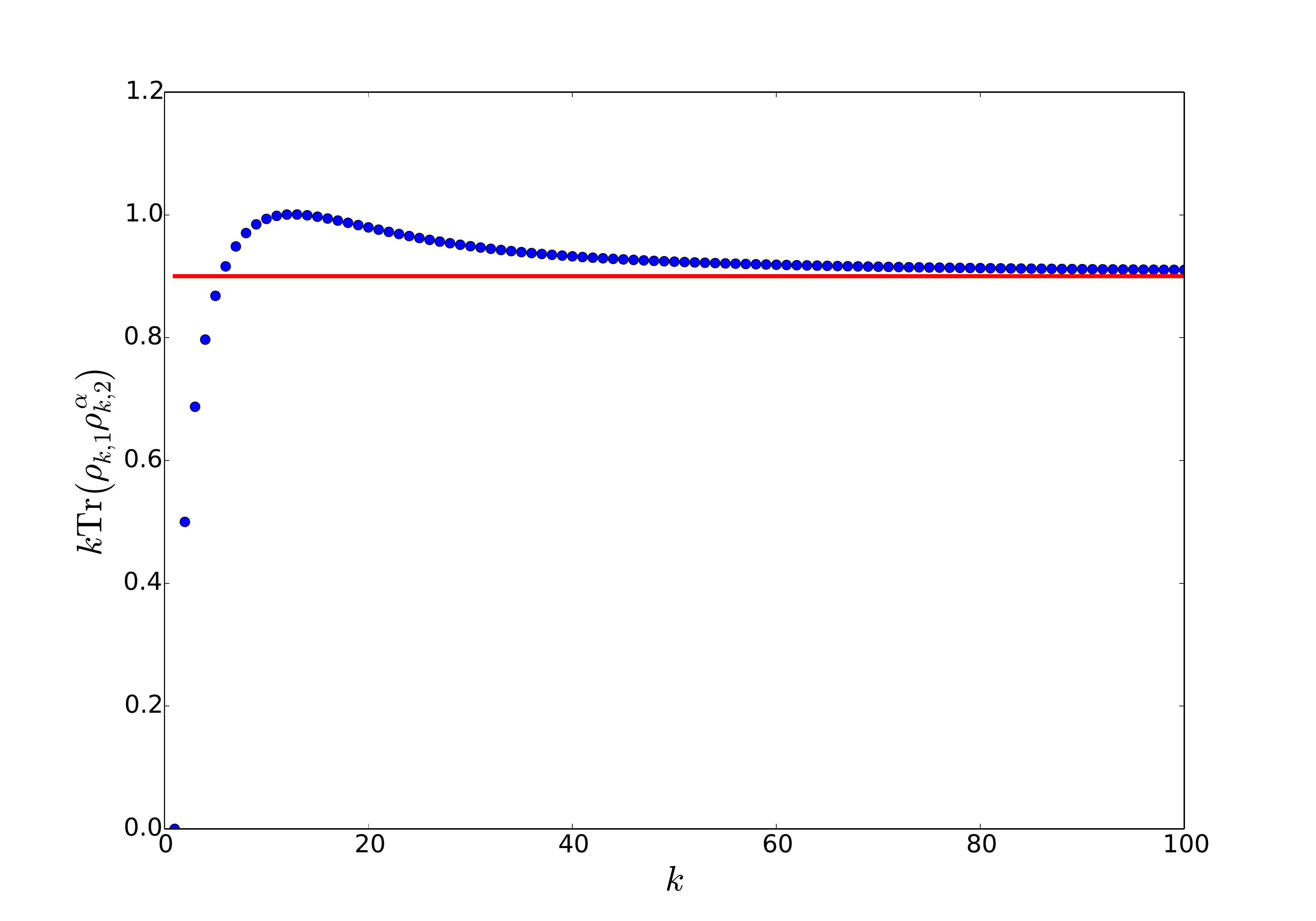}
\end{center}
\caption{The blue circles represent $k \Tr(\rho_{k,1} \rho_{k,2}^{\alpha})$ as a function of $k$ for $\alpha = \frac{\pi}{4}$, $1 \leq k \leq 100$. The red line corresponds to the theoretical limit $\frac{2}{\pi \sin \alpha} = \frac{2 \sqrt{2}}{\pi}$.}
\label{fig:ktrace_pisur4}
\end{figure}

\begin{figure}[H]
\hspace{-5mm}
\subfigure[The blue crosses correspond to $E(\rho_{k,1},\rho_{k,2}^{\alpha})$, and the red circles correspond to the first term on the right-hand side of Equation (\ref{eq:subfid_pi4}).]{\includegraphics[scale=0.3]{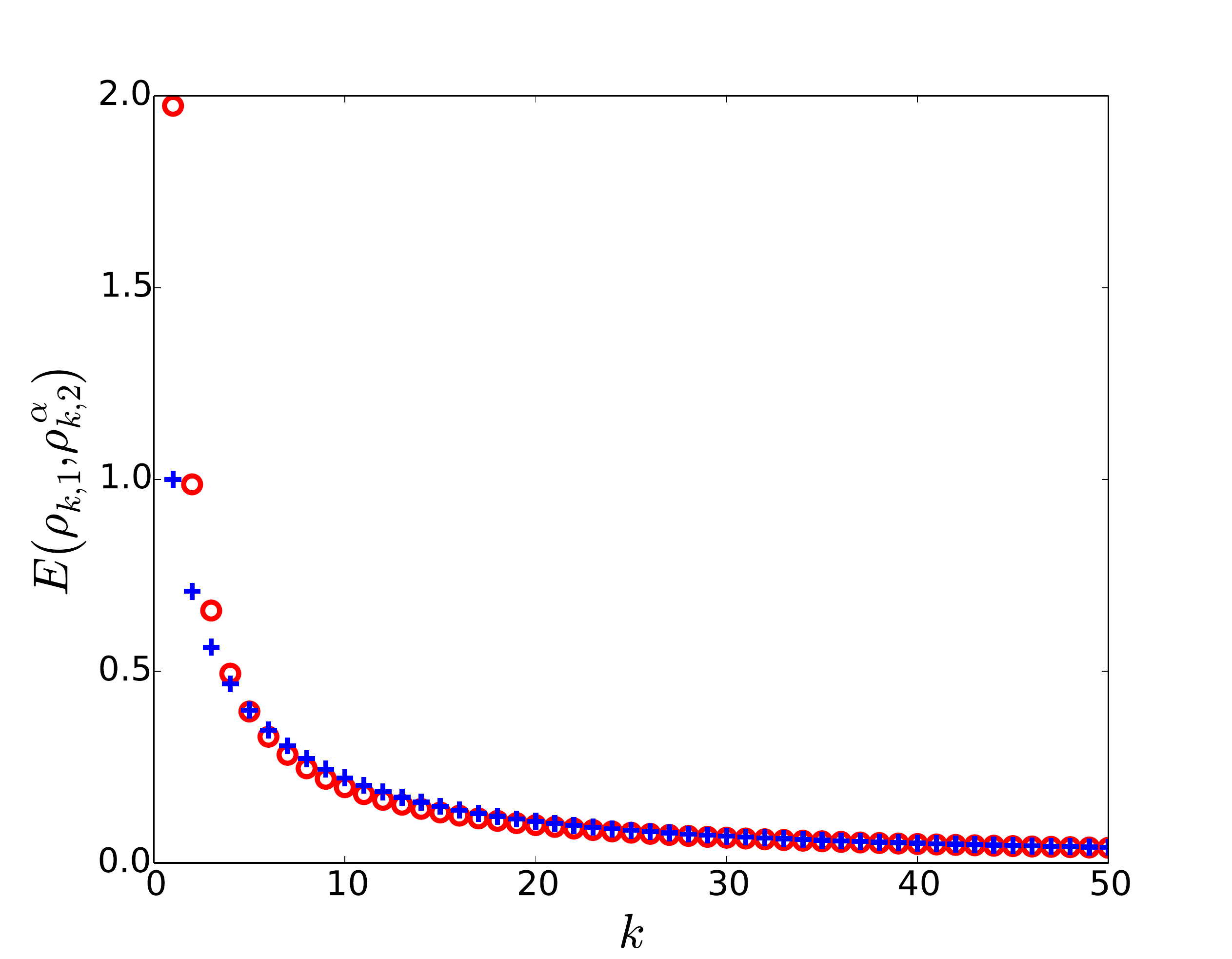} } 
\hspace{5mm}
\subfigure[The blues crosses correspond to the quantity $k E(\rho_{k,1},\rho_{k,2}^{\alpha})$, while the red line represents the constant $\frac{2}{\pi \sin \alpha} \left( 1 + \sqrt{2 - \frac{\sin \alpha}{\sqrt{1 + \sin^2 \alpha}}} \right) = \frac{2\sqrt{2}}{\pi} \left( 1 + \sqrt{2 - \frac{1}{\sqrt{3}}} \right)$.]{\includegraphics[scale=0.3]{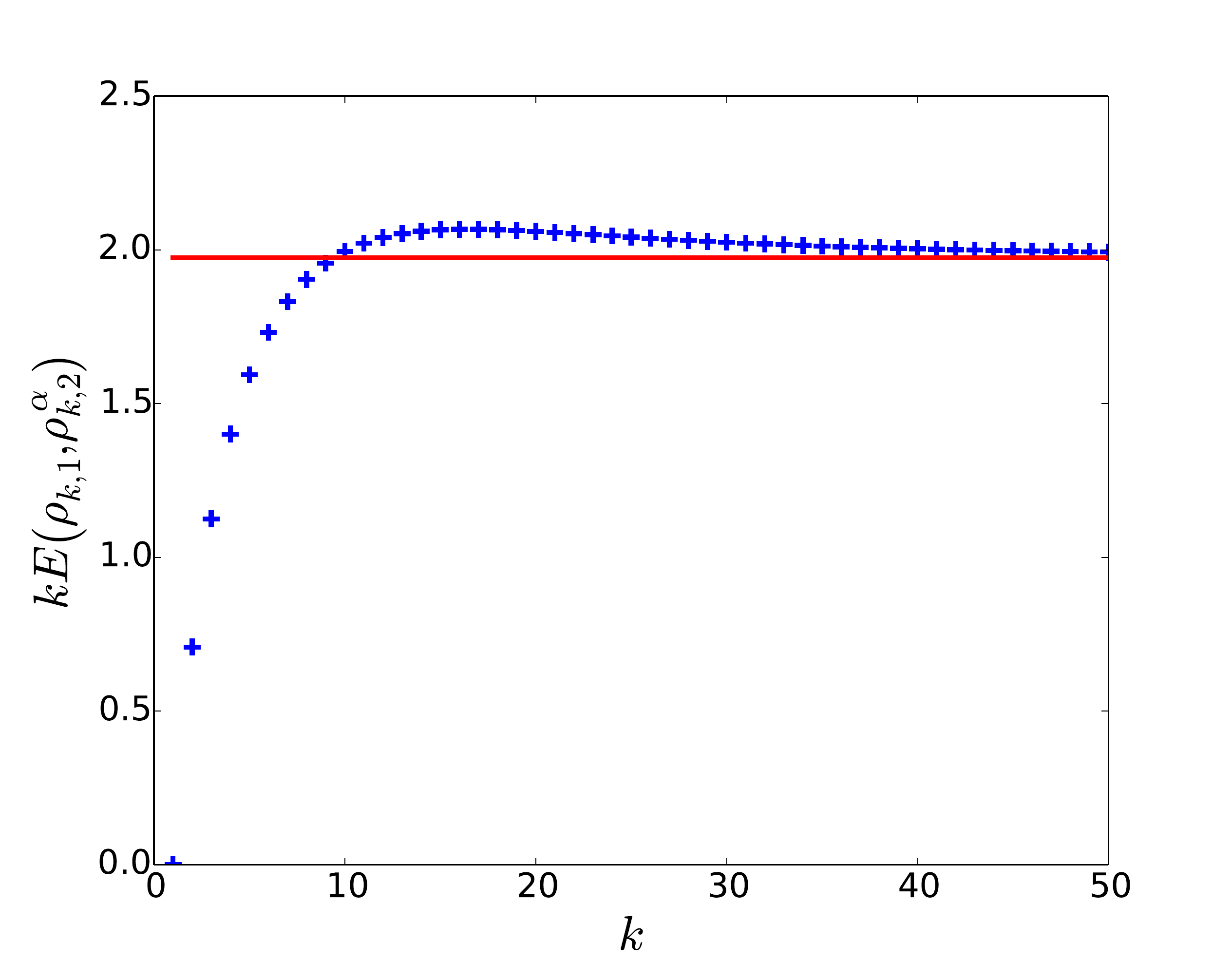} }
\caption{Sub-fidelity $E(\rho_{k,1},\rho_{k,2}^{\alpha})$ and $k E(\rho_{k,1},\rho_{k,2}^{\alpha})$, as functions of $k$, for $1 \leq k \leq 50$ and $\alpha = \frac{\pi}{4}$. }
\label{fig:sub_fidelity_pisur4}
\end{figure}

\begin{figure}[H]
\begin{center}
\includegraphics[scale=0.33]{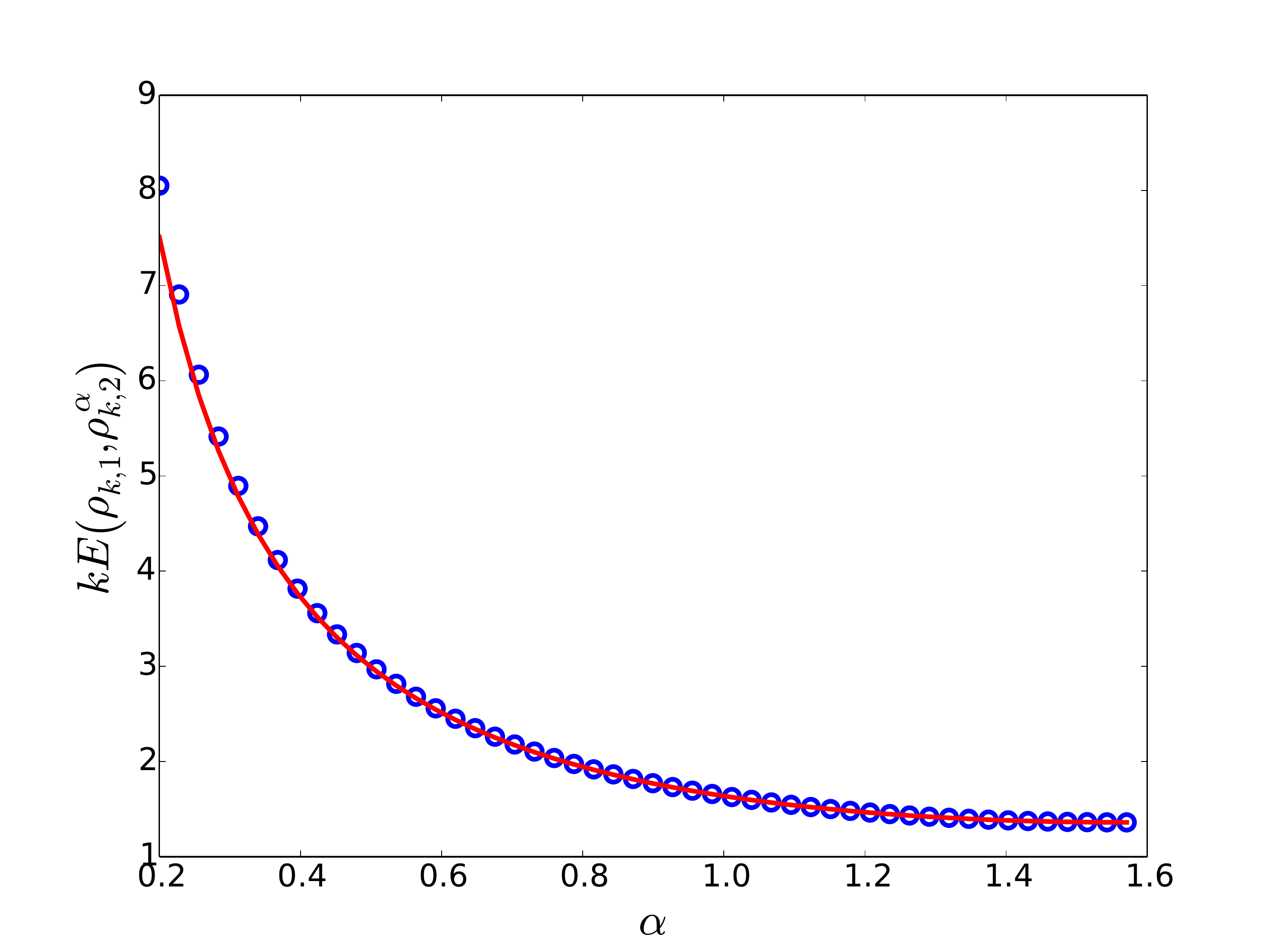}
\end{center}
\caption{The blue circles represent the value of $k E\left(\rho_{k,1}, \rho_{k,2}^{\alpha}\right) $ as a function of $\alpha$ for $k = 500$ and $0.2 \leq \alpha \leq \frac{\pi}{2}$. The red line corresponds to the theoretical equivalent $\alpha \mapsto \frac{2}{\pi \sin \alpha} \left( 1 + \sqrt{2 - \frac{\sin \alpha}{\sqrt{1 + \sin^2 \alpha}}} \right)$ obtained in Equation (\ref{eq:subfid_pi4}).}
\label{fig:subfid_angle_varying}
\end{figure}

\begin{figure}[H]
\includegraphics[scale=0.24]{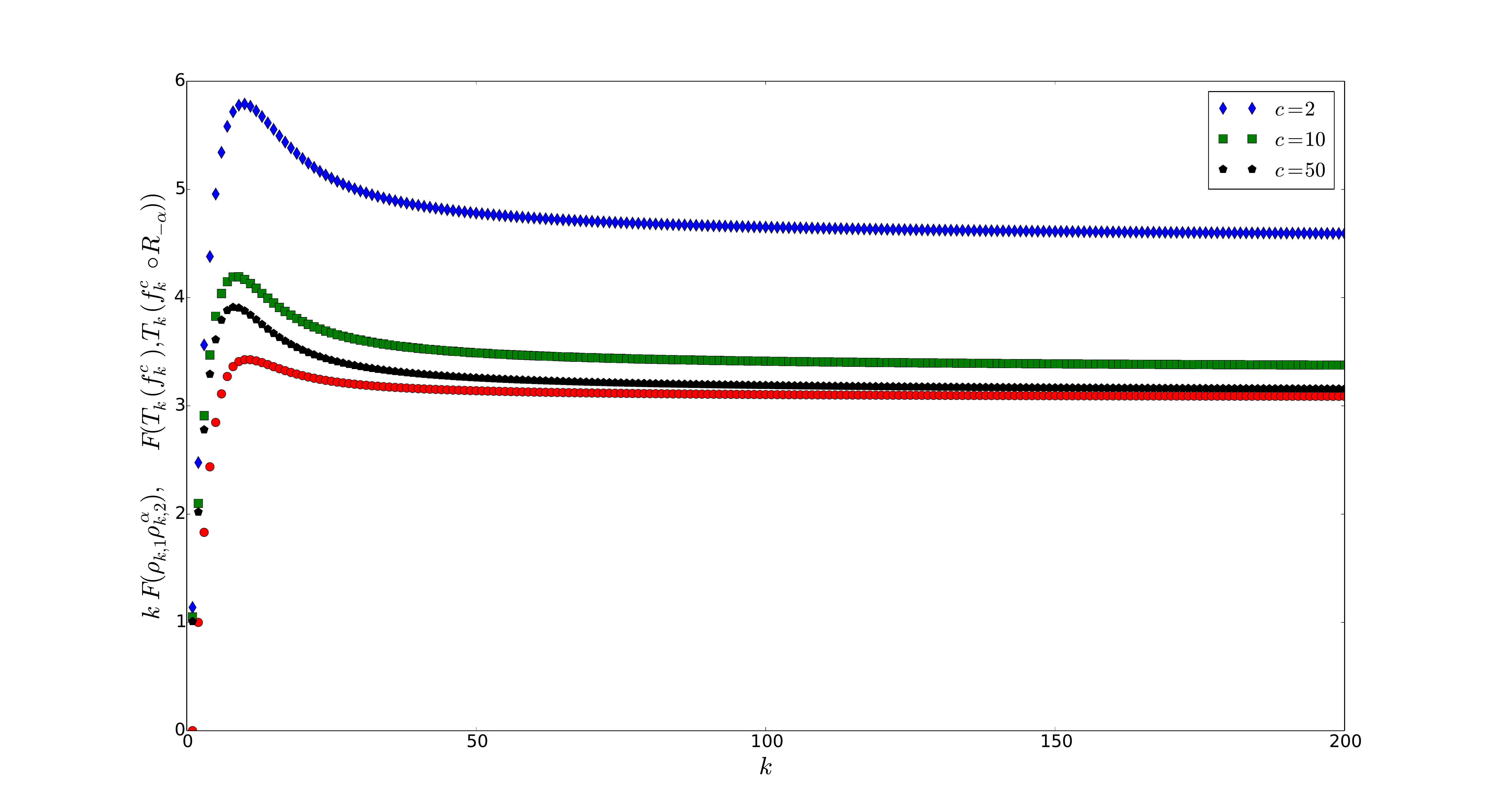}
\caption{Comparison between the rescaled fidelity $k F(\rho_{k,1}, \rho_{k,2}^{\alpha})$ (red circles) and $F(T_k(f_k^c), T_k(f_k^c \circ R_{-\alpha}))$ for $c = 2$ (blue diamonds), $c = 10$ (green squares) and $c = 50$ (black pentagons); here $\alpha = \frac{\pi}{2}$ and $1 \leq k \leq 200$.}
\label{fig:comp_fid_BTO_pi2}
\end{figure}

\begin{figure}[H]
\includegraphics[scale=0.24]{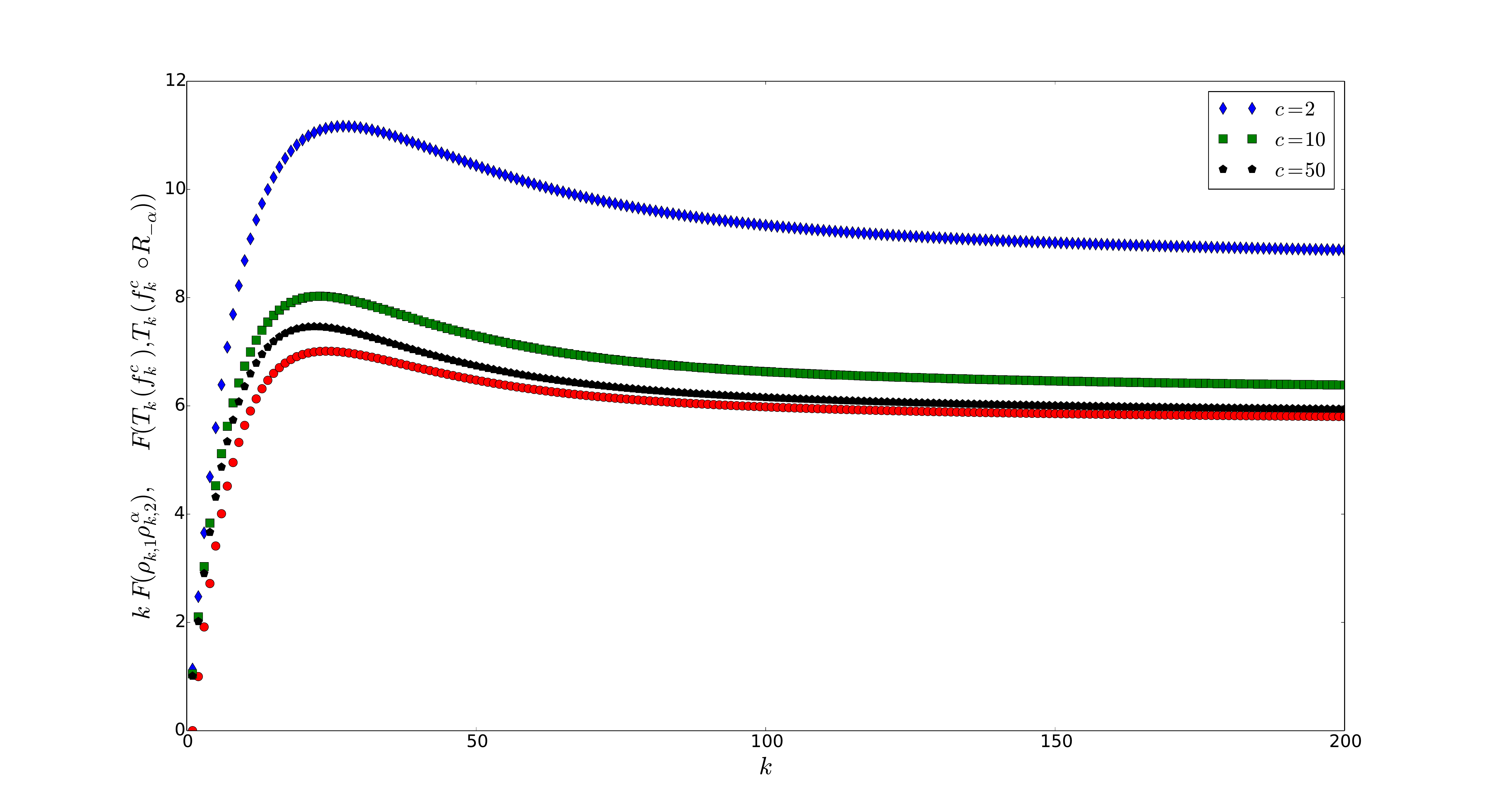}
\caption{Comparison between the rescaled fidelity $k F(\rho_{k,1}, \rho_{k,2}^{\alpha})$ (red circles) and $F(T_k(f_k^c), T_k(f_k^c \circ R_{-\alpha}))$ for $c = 2$ (blue diamonds), $c = 10$ (green squares) and $c = 50$ (black pentagons); here $\alpha = \frac{\pi}{4}$ and $1 \leq k \leq 200$.}
\label{fig:comp_fid_BTO_pi4}
\end{figure}

\begin{figure}[H]
\begin{center}
\includegraphics[scale=0.35]{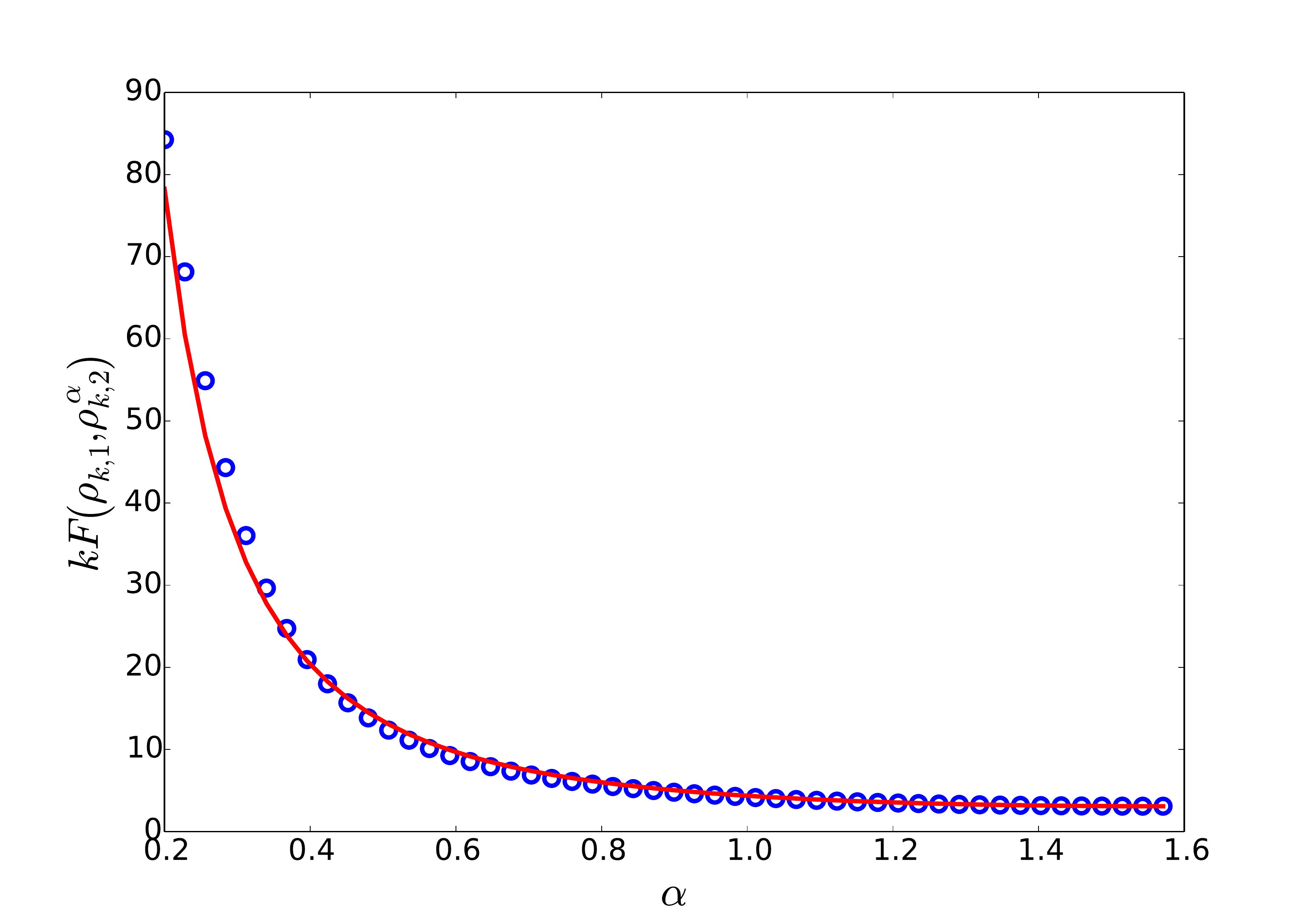}
\end{center}
\caption{The blue circles represent the value of $k F\left(\rho_{k,1}, \rho_{k,2}^{\alpha}\right) $ as a function of $\alpha$ for $k = 200$ and $0.2 \leq \alpha \leq \frac{\pi}{2}$. The red line corresponds to the conjectural equivalent $\alpha \mapsto \frac{C}{\sin^2 \alpha}$, where $C$ has been determined numerically from the case $\alpha = \frac{\pi}{2}$.}
\label{fig:fid_angle_varying}
\end{figure}

\begin{figure}[H]
\hspace{-5mm}
\subfigure[$E(\rho_{k,1},\rho_{k,2}^{\alpha})$ and $F(\rho_{k,1},\rho_{k,2}^{\alpha})$.]{\includegraphics[scale=0.3]{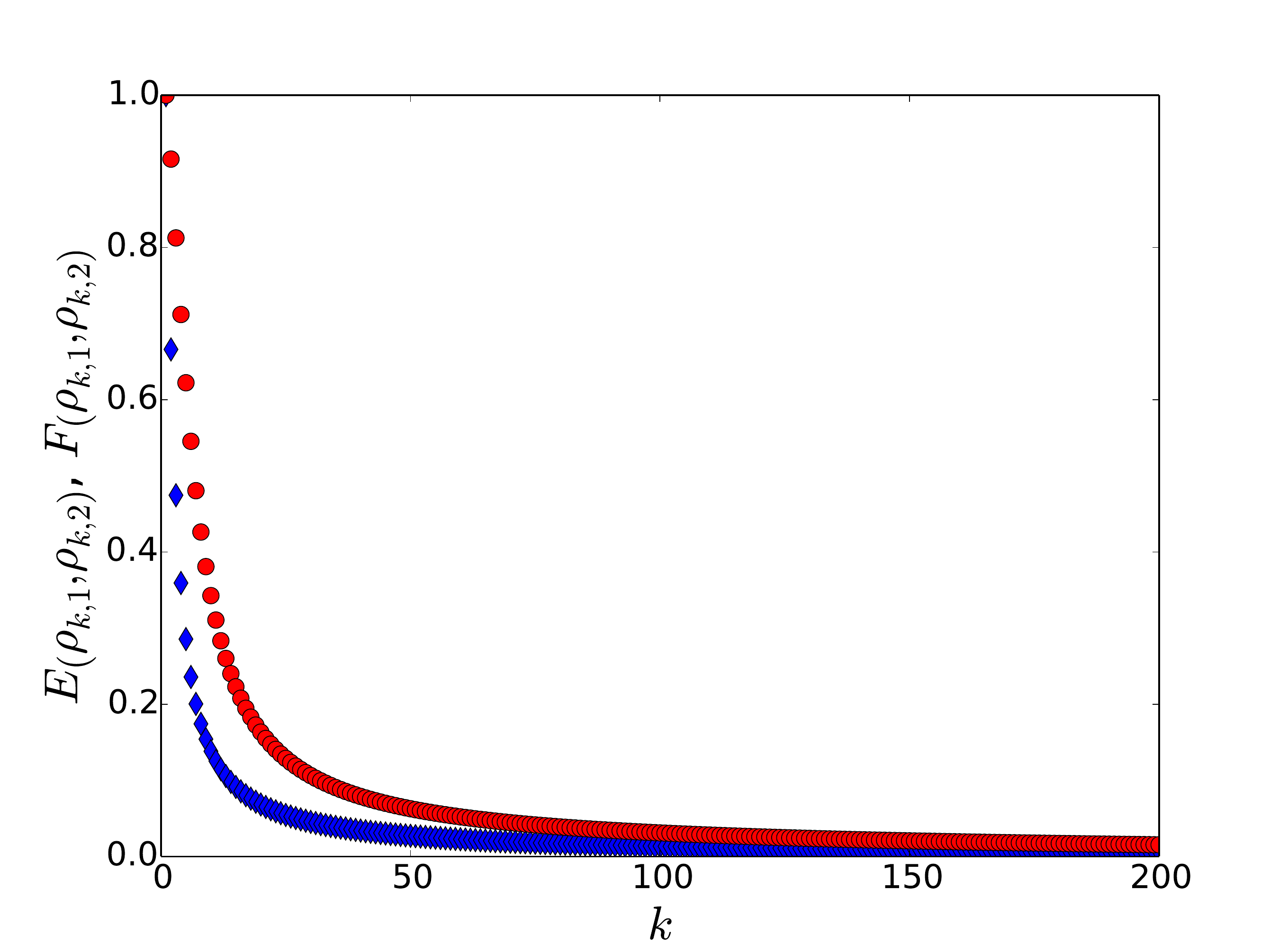} } 
\subfigure[$k E(\rho_{k,1},\rho_{k,2}^{\alpha})$ and $ k F(\rho_{k,1},\rho_{k,2}^{\alpha})$.]{\includegraphics[scale=0.3]{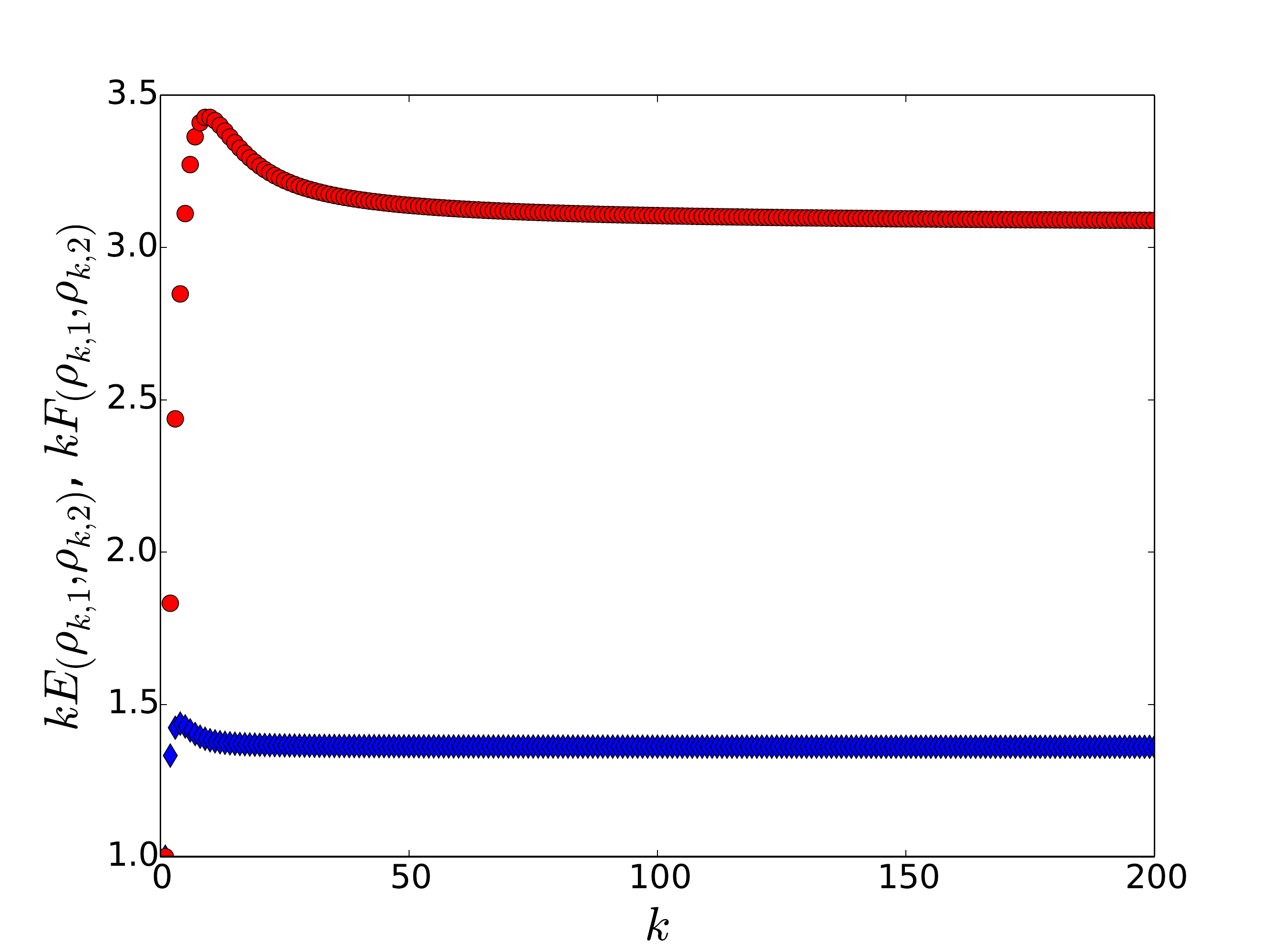} }
\caption{Comparison between the fidelity and sub-fidelity of $\rho_{k,1}$ and $\rho_{k,2}$, and their rescaled versions, as functions of $k$, $1 \leq k \leq 200$. The blue diamonds correspond to sub-fidelity, while the red circles represent fidelity.}
\label{fig:fid_vs_subfid_pi2}
\end{figure}

\begin{figure}[H]
\hspace{-5mm}
\subfigure[$E(\rho_{k,1},\rho_{k,2}^{\alpha})$ and $F(\rho_{k,1},\rho_{k,2}^{\alpha})$.]{\includegraphics[scale=0.3]{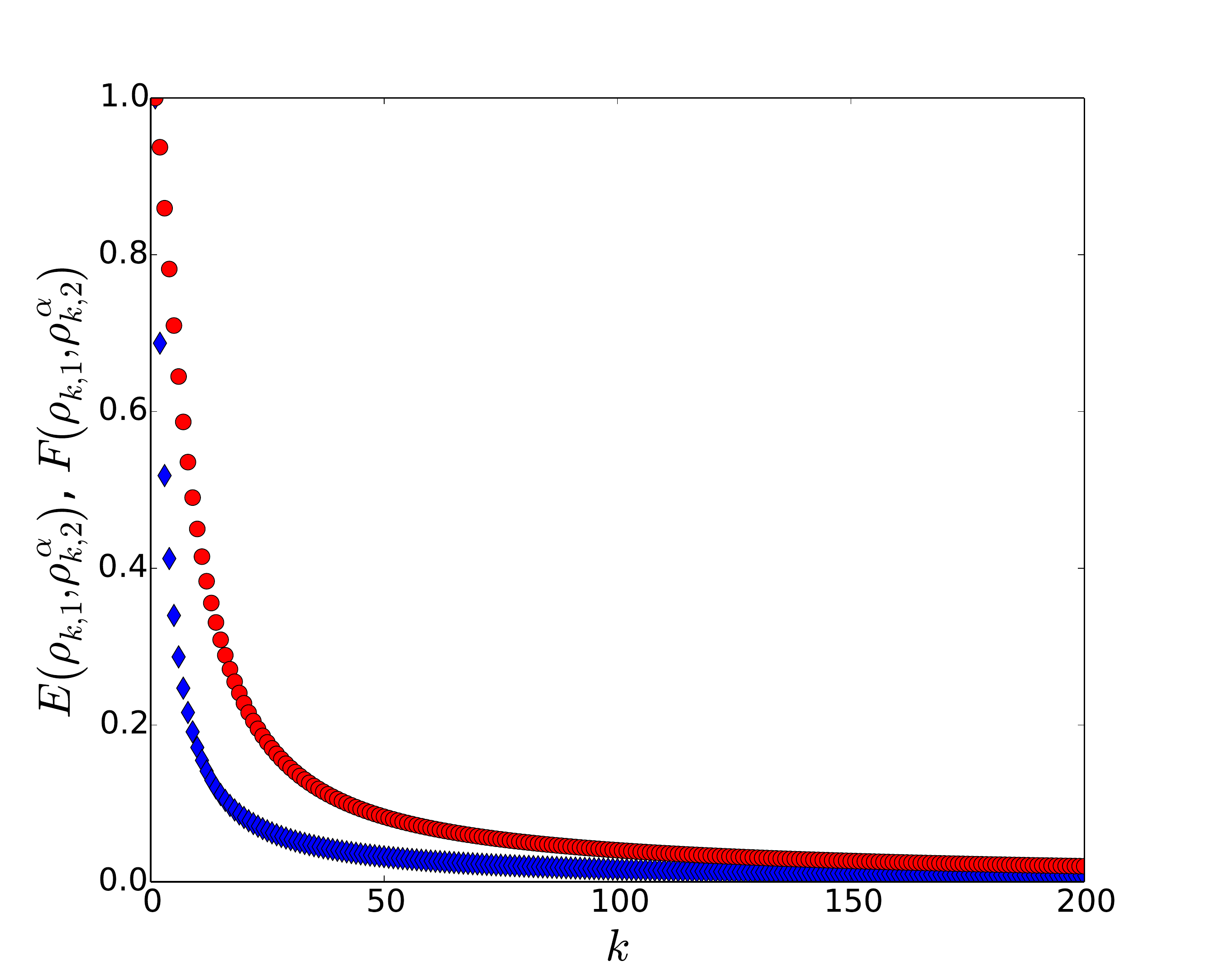} } 
\subfigure[$k E(\rho_{k,1},\rho_{k,2}^{\alpha})$ and $ kF(\rho_{k,1},\rho_{k,2}^{\alpha})$.]{\includegraphics[scale=0.3]{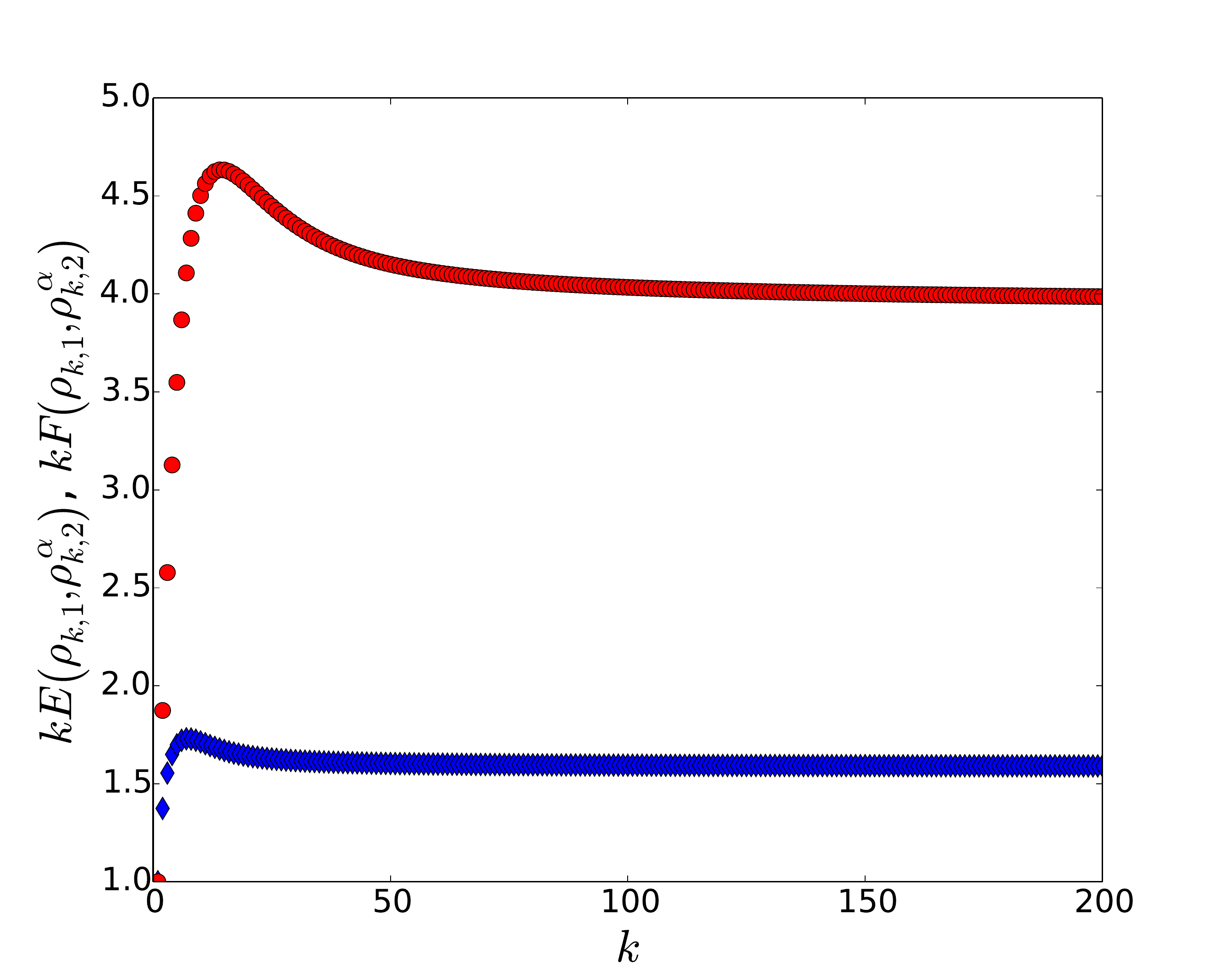} }
\caption{Comparison between fidelity and sub-fidelity of $\rho_{k,1}$ and $\rho_{k,2}^{\alpha}$ for $\alpha =  \frac{\pi}{3}$, as functions of $k$, $1 \leq k \leq 200$. The blue diamonds correspond to sub-fidelity, while the red circles represent fidelity.}
\label{fig:fid_vs_subfid_pi3}
\end{figure}

\subsection{Comparison between fidelity and sub-fidelity}

In view of the previous results, we expect the fidelity to be of the same order as the sub-fidelity, namely $\bigO{k^{-1}}$, but there is no reason that their equivalents are the same. In fact, we already know how the constants compare since $F \geq E$. These considerations lead us to the following conjecture.

\begin{conj}
Let $(\Gamma_1,\sigma_1)$ and $(\Gamma_1,\sigma_1)$ be two closed connected Lagrangian submanifolds with probability densities of a closed quantizable K\"ahler manifold $M$, intersecting transversally at a finite number of points. Let $C((\Gamma_1,\sigma_1),(\Gamma_2,\sigma_2))$ be as in Theorem \ref{thm:subfid}. Then there exists some constant $\widetilde{C}((\Gamma_1,\sigma_1),(\Gamma_2,\sigma_2)) \geq C((\Gamma_1,\sigma_1),(\Gamma_2,\sigma_2))$
%\[ \widetilde{C}((\Gamma_1,\sigma_1),(\Gamma_2,\sigma_2)) \geq C((\Gamma_1,\sigma_1),(\Gamma_2,\sigma_2)) \]
such that 
\[ F(\rho_{k,1},\rho_{k,2}) = \left(\frac{2\pi}{k}\right)^{n}  \widetilde{C}((\Gamma_1,\sigma_1),(\Gamma_2,\sigma_2)) + \bigO{k^{-(n+1)}}. \]
\end{conj}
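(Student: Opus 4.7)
The inequality $\widetilde{C} \geq C$ is a free by-product of the known bound $E \leq F$, so the substance of the conjecture lies in proving the existence of the asymptotic $F(\rho_{k,1},\rho_{k,2}) = (2\pi/k)^n \widetilde{C} + \bigO{k^{-(n+1)}}$ and, ideally, identifying $\widetilde{C}$ in terms of the local geometry at the points $m_\nu$. My plan is to mimic, in the general Kähler setting, the strategy that worked on $\S^2$: compare $\rho_{k,i}$ to a model operator for which the square root is computable, but replace the global Berezin--Toeplitz approximation (which relied on the $SU(2)$ symmetry of the sphere) by a collection of purely local models attached to the intersection points.

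The first step would be a localization lemma: fix disjoint neighborhoods $\Omega_\nu$ of the $m_\nu$, take cutoffs $\chi_\nu$ supported in $\Omega_\nu$ with $\sum \chi_\nu \equiv 1$ on a neighborhood of $\bigcup_\nu \{m_\nu\}$, and show that integrating only over $\Gamma_i \cap \Omega_\nu$ changes $\rho_{k,i}$ by an operator whose contribution to the fidelity is $\bigO{k^{-\infty}}$. This is plausible because coherent projectors centered at points outside the $\Omega_\nu$ are almost orthogonal to those centered inside, in view of the estimate $|S(x,y)|<1$ away from the diagonal used repeatedly in Section~4. The main subtlety is that the fidelity, unlike the trace, is a non-linear functional of the states, so one needs a stability statement of the form $|F(\rho_{k,1},\rho_{k,2})-F(\tilde\rho_{k,1},\tilde\rho_{k,2})| \leq \| \sqrt{\rho_{k,1}}-\sqrt{\tilde\rho_{k,1}}\|_{\Tr}\cdot \|\sqrt{\rho_{k,2}}\|+\ldots$ combined with operator-monotonicity of the square root, as used in the proof of Theorem~\ref{thm:fid_sphere}.

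The second and central step is the construction of a model. Choose Darboux--Bochner coordinates near $m_\nu$ in which $M$ is identified with $\C^n$, the Kähler form becomes standard at the origin, and the tangent spaces $T_{m_\nu}\Gamma_1$, $T_{m_\nu}\Gamma_2$ become two linear Lagrangians whose principal angles are the $\theta_\ell(m_\nu)$. Rescaling by $\sqrt{k}$ transforms the local piece of $\rho_{k,i}$ into a $k$-independent operator $\rho_i^{\nu,\infty}$ on the Bargmann space of $\C^n$, namely the integral of coherent projectors over a linear Lagrangian equipped with a constant density $f_i(m_\nu)$. For this model problem the computation is tractable: using Wick/Weyl calculus one can write $\rho_i^{\nu,\infty}$ explicitly as a Gaussian Weyl symbol, compute $\sqrt{\rho_1^{\nu,\infty}\rho_2^{\nu,\infty}\sqrt{\rho_1^{\nu,\infty}}}$ through the metaplectic representation (since the Weyl symbols are quadratic exponentials), and obtain an explicit constant $\widetilde{C}_\nu$ depending on $f_1(m_\nu)$, $f_2(m_\nu)$, and the principal angles $\theta_\ell(m_\nu)$. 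Summing $\widetilde{C}=\sum_\nu \widetilde{C}_\nu$ would yield the candidate constant.

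The hard part, and the main obstacle, is quantitative control of the functional calculus under the approximation $\rho_{k,i}\approx \bigoplus_\nu (2\pi/k)^{n/2}\rho_i^{\nu,\infty}$. Unlike traces and products, the square root is non-local and very sensitive to small eigenvalues; both $\rho_{k,i}$ and the model operators have spectra clustering near zero (since they live on effectively finite-dimensional ``Lagrangian'' subspaces of $\Hil_k$), so resolvent bounds near the origin are delicate. The approach would be to write $\sqrt{A}=\pi^{-1}\int_0^{\infty}t^{-1/2} A(t+A)^{-1} dt$ and estimate, in trace norm, the resolvent differences $(t+\rho_{k,1})^{-1}-(t+\tilde\rho_{k,1})^{-1}$ in regimes of $t$, combined with a cutoff at scale $t\sim k^{-n-\varepsilon}$ to control the low-spectrum contribution by Theorem~\ref{thm:subfid}. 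A second obstacle is to show that cross-terms between different intersection points do not contribute at order $k^{-n}$; for the trace this was transparent, but in the square root the off-diagonal blocks could a priori leak. A Schur-complement estimate, showing that these off-diagonal blocks are $\bigO{k^{-\infty}}$ in operator norm thanks to the non-overlapping microsupports, should suffice, although making this rigorous inside the functional calculus is precisely where the argument is most delicate.
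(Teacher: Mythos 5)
The statement you are proving is labelled as a \emph{conjecture} in the paper: the authors do not prove it, and even in their best-understood example on $\S^2$ they only obtain the one-sided bound $F(\rho_{k,1},\rho_{k,2}^{\alpha}) = \bigO{k^{-1+\varepsilon}}$ (Theorem \ref{thm:fid_sphere}), not the asymptotic equivalent $\left(\tfrac{2\pi}{k}\right)^n\widetilde{C}+\bigO{k^{-(n+1)}}$ that the conjecture asserts. So there is no proof in the paper to compare against, and your text should be judged as an attempt to settle an open problem. As written it is a research program rather than a proof: you yourself flag the two central analytic steps --- control of the square-root functional calculus under the approximation by local models, and the absence of leakage between intersection points inside that functional calculus --- as unresolved obstacles, and neither is closed by the argument you give. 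Until those are carried out, the existence of the constant $\widetilde{C}$ (let alone its identification) is not established.

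Two of the steps you treat as routine deserve particular scrutiny. First, the localization lemma: the operator $\rho_{k,i}^{\mathrm{far}}$ obtained by integrating over $\Gamma_i\setminus\bigcup_\nu\Omega_\nu$ has trace of order $1$, hence $\|\sqrt{\rho_{k,i}}-\sqrt{\rho_{k,i}^{\mathrm{loc}}}\|_{\Tr}$ is \emph{not} small (indeed $\|\sqrt{\rho_{k,i}^{\mathrm{far}}}\|_{\Tr}\ge\sqrt{\Tr(\rho_{k,i}^{\mathrm{far}})}$ is of order $1$), so the stability inequality you propose gives nothing. What is true is that the cross-fidelity $F(\rho_{k,1}^{\mathrm{far}},\rho_{k,2})$ is $\bigO{k^{-\infty}}$ by the Cauchy--Schwarz argument of Proposition \ref{prop:empty_int}; but to conclude you would still need an \emph{upper} bound for $F(A+B,C)$ in terms of $F(A,C)$ and $F(B,C)$, and the standard concavity properties of fidelity go in the wrong direction. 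This needs a genuine argument (e.g. almost-orthogonality of the ranges of $\rho_{k,1}^{\mathrm{loc}}$ and $\rho_{k,1}^{\mathrm{far}}$ plus H\"older continuity of the square root), not a citation. Second, the model operators $\rho_i^{\nu,\infty}$ on Bargmann space, obtained by integrating coherent projectors over a linear Lagrangian with constant density, are not trace class, so the ``fidelity'' of the local models has to be defined and computed as $\|\sqrt{\rho_1^{\nu,\infty}}\sqrt{\rho_2^{\nu,\infty}}\|_{\Tr}^2$ directly; the Gaussian/metaplectic computation you invoke is plausible and would be valuable even on its own (it would at least produce a concrete candidate for $\widetilde{C}$ to test against the numerics of Section \ref{sect:numerics}), but it is not carried out. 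In short: the strategy is sensible and consistent with the heuristics the paper offers in its support, but the conjecture remains open and your argument does not close it.
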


This would mean that the fidelity is of the same order of magnitude as the sub-fidelity. Besides evidence given by this example, this conjecture seems reasonable for the two following reasons. The first one is that the states that we consider are far from pure states, hence their super-fidelity is a very bad upper bound for their fidelity and we expect the latter to be much closer to the sub-fidelity. The second one is that when $\psi_k, \phi_k$ are pure states, i.e. elements in $\Hil_k$ of unit norm, then their fidelity is given by $F(\phi_k,\psi_k) =  |\scal{\phi_k}{\psi_k}|^2$.
%\[ F(\phi_k,\psi_k) =  |\scal{\phi_k}{\psi_k}|^2. \]
But it is known (see \cite{BorPauUri} but also \cite[Theorem 6.1]{ChaPoly} for instance) that the scalar product of two pure states associated with Bohr-Sommerfeld Lagrangians has the following equivalent when $k$ goes to infinity:
\[ \scal{\phi_k}{\psi_k} \sim \left(\frac{2\pi}{k}\right)^{\frac{n}{2}} C(\Gamma_1,\Gamma_2). \]
Therefore our conjecture could be seen as some kind of generalization of this result, in a different context. 

\paragraph{Acknowledgements.}

Part of this work was supported by the European Research Council Advanced Grant 338809. We thank Leonid Polterovich for proposing the topic and for numerous useful discussions. We also thank Laurent Charles and Alejandro Uribe for their interest in this work and some helpful remarks. Finally, we thank St\'ephane Nonnenmacher for suggesting the approach that ultimately led to Theorem \ref{thm:fid_sphere}. We thank an anonymous referee for very useful advice regarding the exposition.

\appendix

\section{Some proofs of folklore results}

\begin{proof}[Proof of Lemma \ref{lm:equiv}]
Observe first that for every unit vector $u \in L$ and for every $g \in SU(2)$, the equality $\xi_k^{g u} = \zeta_k(g) \xi_k^u$ holds (and consequently $\| \xi_k^{g u} \|_k = \| \xi_k^{u} \|_k$). Indeed, by the properties stated in \cite[Section 5]{ChaBTO}, we have that for every $x \in \C\P^1$,
\[ \xi_k^{g u}(x) = \Pi_k(x,\pi(gu)) \cdot (gu)^k = \Pi_k(x,g\pi(u)) \cdot (gu)^k \]
where $\pi$ is the projection $L \to \C\P^1$. Since the kernel $\Pi_k$ is $SU(2)$-equivariant (this can be checked for instance in local coordinates thanks to Equation (\ref{eq:proj_sphere})), we finally obtain that
\[ \xi_k^{g u}(x) = g \left( \Pi_k(g^{-1}x,\pi(u)) \cdot u^k \right) = g \xi_k^u(g^{-1}x) = (\zeta_k(g) \xi_k^u)(x), \]
as announced. Therefore, for $\phi \in \Hil_k$,
\[ U_k(\alpha) \rho_{k,1} \phi = \int_{\Gamma_1} \frac{\scal{\phi}{\xi_k^u} \xi_k^{g_{\alpha} u}}{\| \xi_k^{u} \|^2_k} \sigma_1(y) \]
with $u$ any unit vector in $L_y$. Since $\sigma_2^{\alpha} = (R_{\alpha})_* \sigma_1$, this yields
\[ U_k(\alpha) \rho_{k,1} \phi = \int_{\Gamma_2^{\alpha}} \frac{\langle \phi, \xi_k^{g_{\alpha}^{-1}u} \rangle \xi_k^{u}}{\| \xi_k^{u} \|^2_k} \sigma_2^{\alpha}(x) = \int_{\Gamma_2^{\alpha}} \frac{\scal{\phi}{U_k(\alpha)^* \xi_k^{u}} \xi_k^{u}}{\| \xi_k^{u} \|^2_k} \sigma_2^{\alpha}(x). \]
Writing $\scal{\phi}{U_k(\alpha)^* \xi_k^{u}} = \scal{U_k(\alpha) \phi}{\xi_k^{u}}$, we finally obtain that $U_k(\alpha) \rho_{k,1} \phi = \rho_{k,2}^{\alpha} U_k(\alpha) \phi$.
\end{proof}

\begin{proof}[Proof of Lemma \ref{lm:BTO_radial}]
Let $\ell \in \llbracket 0,k \rrbracket$, and let $P_{\ell} = Z_1^{k-\ell} Z_2^{\ell}$ (so that $e_{\ell}$ is $P_{\ell}$ normalized). Recall that by considering the stereographic projection, we identify $P_{\ell}$ with $z \mapsto z^{\ell}$; hence Equation (\ref{eq:proj_sphere}) yields
\[ (T_k(f)P_{\ell})(z) = \frac{k+1}{2\pi} \int_{\C} \frac{(1 + z \bar{w})^k}{(1 + |w|^2)^{k+2}} g\left( \frac{|w|^2 - 1}{|w|^2 + 1} \right) w^{\ell} \ |dw \wedge d\bar{w}|. \]
By expanding the term $(1 + z \bar{w})^k$, we obtain that
\[ (T_k(f)P_{\ell})(z) = \frac{k+1}{2\pi} \sum_{m=0}^k \binom{k}{m} \left( \int_{\C} \frac{w^{\ell} \bar{w}^m}{(1 + |w|^2)^{k+2}} g\left( \frac{|w|^2 - 1}{|w|^2 + 1} \right) \ |dw \wedge d\bar{w}| \right) z^m. \]
By using polar coordinates $w = r \exp(i\theta)$, this yields 
\[ (T_k(f)P_{\ell})(z) = \frac{k+1}{\pi} \sum_{m=0}^k \binom{k}{m} I_{\ell,m} \left( \int_0^{+ \infty} \frac{r^{\ell + m +1} }{(1 + r^2)^{k+2}} g\left( \frac{r^2 - 1}{r^2 + 1} \right)  \ dr \right) z^m, \]
where $I_{\ell,m} = \int_{0}^{2\pi} \exp(i(\ell-m)\theta) \ d\theta$, which vanishes if $\ell \neq m$ and is equal to $2\pi$ otherwise. Hence the first part of the statement is proved, and  
\[ \scal{T_k(f) e_{\ell}}{e_{\ell}}_k = 2(k+1) \binom{k}{\ell} \int_0^{+ \infty} \frac{r^{2\ell +1} }{(1 + r^2)^{k+2}} g\left( \frac{r^2 - 1}{r^2 + 1} \right)  \ dr.\]
The change of variable $x = \frac{r^2 - 1}{r^2 + 1}$ then yields the desired formula.
\end{proof}

\begin{proof}[Proof of Proposition \ref{prop:exact_egorov}]
We proceed as in the proof of Lemma \ref{lm:equiv}. Let $\varphi \in \Hil_k$ and $x \in \C\P^1$; then
\[ (T_k(f) U_k(\beta)^* \varphi)(x) = \int_{\C\P^1} f(y) \ \Pi_k(x,y) \cdot \left( U_k(\beta)^* \varphi \right)(y) \ d\mu(y) \]
where $\mu$ is the Liouville measure associated with the Fubini-Study structure. Since $\left(U_k(\beta)^* \varphi \right)(y) = g_{-\beta} \varphi(g_{\beta} y)$ and 
\[ \Pi_k(x,y) \cdot  g_{-\beta} \varphi(g_{\beta} y) = g_{-\beta} \left( \Pi_k(g_{\beta}x,g_{\beta} y) \cdot \varphi(g_{\beta}y) \right), \]
the above expression reduces to
\[ (T_k(f) U_k(\beta)^* \varphi)(x) = \int_{\C\P^1} f(y) \ g_{-\beta} \left( \Pi_k(g_{\beta}x,g_{\beta} y) \cdot \varphi(g_{\beta}y)\right) \ d\mu(y). \]
This in turn yields, using the change of variables $z = g_{\beta} y$,
\[ (T_k(f) U_k(\beta)^* \varphi)(x) = \int_{\C\P^1} f(g_{-\beta}  z) \ g_{-\beta} \left( \Pi_k(g_{\beta}x, z) \cdot \varphi(z)\right) \ d\mu(z). \]
Here we have used that $\mu$ is $SU(2)$-invariant. This amounts to
\[ T_k(f) U_k(\beta)^* \varphi =  g_{-\beta} \left( \int_{\C\P^1} (f \circ g_{-\beta})(z) \ \Pi_k(\cdot, z) \cdot \varphi(z) \ d\mu(z) \right) = U_k(\beta)^* T_k(f \circ g_{-\beta}) \varphi . \]
Since $U_k(\beta)$ is unitary, this yields the desired result.
\end{proof}

\section{A stationary phase computation}

\begin{proof}[Proof of Proposition \ref{prop:trace_norm_BTO}]
Since $T_k\left(\sqrt{f_k^{c,\delta} g_k^{c,\delta}}\right) \geq 0$, its trace norm is equal to its trace, which satisfies
\[ \Tr\left( T_k\left(\sqrt{f_k^{c,\delta} g_k^{c,\delta}}\right)  \right) = \frac{k+1}{2\pi} \int_{\S^2} \sqrt{f_k^{c,\delta} g_k^{c,\delta}}  \ d\mu. \]
Since $R_{-\alpha}(x_1,x_2,x_3) = (x_1 \cos \alpha+  x_3  \sin \alpha, x_2, x_1 \sin \alpha + x_3 \cos \alpha)$,
%\[ R_{-\alpha}(x_1,x_2,x_3) = (x_1 \cos \alpha+  x_3  \sin \alpha, x_2, x_1 \sin \alpha + x_3 \cos \alpha), \]
this means that we need to evaluate the integral
\[ I = \int_{\S^2} \exp\left( - \frac{c}{2} (k+1)^{1-2\delta} \left( x_3^2 + (x_1 \sin \alpha + x_3 \cos \alpha)^2 \right) \right) \ d\mu(x_1,x_2,x_3). \]
By using the stereographic projection and polar coordinates, we obtain that
\[ I = 2 \int_0^{+\infty}  \int_0^{2\pi}  a_0(r) \exp(-\tau \phi(r,\theta)) \ dr d\theta, \]
where $\tau = c(k+1)^{1-2\delta}$, $a_0(r) = \frac{r}{(1+r^2)^2}$, and the phase $\phi$ satisfies 
\[  \phi(r,\theta) = \frac{1}{2} \left( \left(\frac{r^2 - 1}{r^2 + 1}\right)^2 +  \left(\frac{r^2 \cos \alpha + 2r \sin \alpha \cos \theta - \cos \alpha}{r^2+1}\right)^2  \right). \]
We will estimate $I$ by means of the stationary phase lemma. The phase $\phi$ is nonnegative and vanishes if and only if $(r,\theta) = (1, \frac{\pi}{2})$ or $(r,\theta) = (1,-\frac{\pi}{2})$ (these are the intersection points of the images of $\Gamma_1$ and $\Gamma_2^{\alpha}$ by $\pi_N$). Its derivative with respect to $r$ read
\[ \dpar{\phi}{r}(r,\theta) = 2\left(\frac{r (r^2 - 1) + (2 r \cos \alpha + (1-r^2) \sin \alpha \cos \theta)((r^2-1)\cos \alpha + 2 r \sin \alpha \cos \theta)}{(r^2 + 1)^3}\right),  \]
while its derivative with respect to $\theta$ is given by the formula 
\[ \dpar{\phi}{\theta}(r,\theta) = \frac{ 2 r \sin \alpha \sin \theta ((1-r^2)\cos \alpha - 2 r \sin \alpha \cos \theta)}{(r^2 + 1)^2}. \]
Both vanish at the two points $(1, \frac{\pi}{2})$ and $(1,-\frac{\pi}{2})$. Moreover, one readily checks that the Hessian matrices $H_1$ and $H_2$ of $\phi$ at $(1, \frac{\pi}{2})$ and $(1,-\frac{\pi}{2})$ respectively read
\[ H_1 =  \begin{pmatrix} 1 + \cos^2 \alpha & - \cos \alpha \sin \alpha \\ - \cos \alpha \sin \alpha & \sin^2 \alpha  \end{pmatrix}, \quad H_2 = \begin{pmatrix} 1 + \cos^2 \alpha &  \cos \alpha \sin \alpha \\  \cos \alpha \sin \alpha & \sin^2 \alpha  \end{pmatrix}. \]
These matrices both have determinant $\sin^2 \alpha$; hence, the stationary phase lemma yields
\[ I = 4 \left( \frac{2 \pi}{\tau} \right) \frac{ a_0(1)}{\sin \alpha} + \bigO{\tau^{-2}} = \frac{2 \pi}{\tau \sin \alpha} + \bigO{\tau^{-2}}. \]
Consequently, we finally obtain that
\[ \Tr\left( T_k\left(\sqrt{f_k^{c,\delta} g_k^{c,\delta}}\right)  \right) = \frac{2 k^{2\delta}}{\sqrt{c \pi} \sin \alpha} + \bigO{k^{4\delta-1} c^{-\frac{3}{2}}}, \]
as announced.
\end{proof}

\bibliographystyle{abbrv}
\bibliography{fidelity}

\end{document}